\newcommand{\vect}[1]{\mathbf #1}
\newcommand{\vecs}{\mathbf{s}}
\newcommand{\vecv}{\mathbf{v}}
\newcommand{\vecz}{\mathbf{z}}
\newcommand{\vecw}{\mathbf{w}}
\newcommand{\vecx}{\mathbf{x}}
\newcommand{\vecy}{\mathbf{y}}
\newcommand{\vecu}{\mathbf{u}}
\newcommand{\TAG}{\theta,\alpha,\gamma}
\newcommand{\TTAG}{\tilde\theta,\tilde\alpha,\tilde\gamma}
\newcommand{\balpha}{\boldsymbol\alpha}
\newcommand{\sbcII}{\mathcal B_{\mathbb C^2}}
\newcommand{\bs}[1]{\mathcal B_{\mathbb C^#1}}
\newcommand{\spn}[2]{*_{#2}}
\newcommand{\A}[2]{A_{(#1,#2)}^{(\hbar)}}	
\newcommand{\scsn}{\Phi_{\boldsymbol\alpha,\hbar}}
\newcommand{\csw}{\Phi_{\rho_{(n,m)}(\vecw)}^{(\hbar)}}
\newcommand{\csz}{\Phi_{\rho_{(n,m)}(\vecz)}^{(\hbar)}}
\newcommand{\cs}[1]{\Phi_{\rho_{(n,m)}(\vect {#1})}^{(\hbar)}}
\newcommand{\csalpha}{\Phi_{\boldsymbol\alpha}^{(\hbar)}}
\newcommand{\F}{\mathcal F}
\newcommand{\bts}[1]{\mathbf{B}_{S^#1}}
\newcommand{\e}{\mbox{\sl \Large{e}}\hspace{0.08cm}}
\newcommand{\Q}[1]{\mathbf{Q}_#1^{(\hbar)}}
\newcommand{\Qua}{\mathbb{Q}}
\newcommand{\B}[2]{{{\mathfrak B}_{(#1,#2)}^{(\hbar)}}}
\newcommand{\MB}[1]{{{B}_{#1}}}
\newcommand{\Be}{{{\mathfrak B}}}
\newcommand{\Sn}[1]{S^{#1}}
\newcommand{\T}{\mathrm T}
\newcommand{\TR}{\mathcal T}
\newcommand{\OL}{\mathrm L}
\newcommand{\du}[1]{\mathrm d \mu_{#1}^\hbar}
\newcommand{\p}{\mathfrak p}
\newtheorem{theorem}{Theorem}[section]
\newtheorem{corollary}[theorem]{Corollary}
\newtheorem{lemma}[theorem]{Lemma}
\newtheorem{proposition}[theorem]{Proposition}
\newtheorem{definition}[theorem]{Definition}
\newtheorem{remark}[theorem]{Remark}
\begin{document}

\begin {CJK*} {GB} { } % Use default fonts from CJK (see below)
\title {Star product induced from coherent states on $L^2(S^n)$ $n=2,3,5$}
\author {D\'iaz-Ort\'iz Erik Ignacio}
\affiliation {CONACYT-Universidad Pedag\'ogica Nacional-Unidad 201, Oaxaca, M\'exico}
\begin{abstract}
	We consider the bounded linear operators with domain in the Hilbert space $L^2(\Sn{n})$, $n=2,3,5$ and describe its symbolic calculus defined by the Berezin quantization.
	In particular, we derive an explicit formula for the composition of Berezin's symbols and thus  a noncommutative star product, which in turn is invariant under the action of the group $\mathrm{SU}(2)$, $\mathrm{SU}(2)\times \mathrm{SU}(2)$ and $\mathrm{SU}(4)$ on $\mathbb C^2$, $\mathbb C^4$ and $\mathbb C^8$ respectively.
\end{abstract}
\maketitle
\end{CJK*}

%\begin{keyword}
%Berezin quantization \sep Coherent states \sep star product \sep symbolic calculus \sep asymptotic expansion 
%\end{keyword}

\section{Introduction and summary}
The star product is an important tool for deformation quantization and noncommutative geometry. The most well studied star product is often referred to as Moyal star product. One way to obtain this star product is through the \textsl{Berezin quantization}. Let us briefly recall the definition of the latter (see Ref. \cite{B-74} for details)

Let $H$ be a Hilbert space endowed with the inner product  $(\cdot,\cdot)$ and $M$ be some set with the measure $\mathrm d\mu$. Let $\{\mathbf K(\cdot,\alpha) \in H\;|\; \alpha \in M\}$ be a family of functions in $H$
%, with the label corresponding to a point on $M$, 
such that it form an (overcomplete) basis for H.
\begin{comment}
satisfies the following properties:
\begin{enumerate}
\item[i) ] Let $\hat f(\alpha)= \big(f,\mathbf K(\cdot,\alpha)\big)$, the map $f \to \hat f$ is an embedding from $H$ into $L^2(M,\mathrm d\mu)$.
\item[ii) ] The family $\{\mathbf K(\cdot,\alpha)\}$ is complete: that is, for any $f,g \in H$,  Parseval's identity is valid
\begin{displaymath}
(f,g)= \int_M \big(f,\mathbf K(\cdot,\alpha)\big)\big (\mathbf K(\cdot,\alpha),g\big) \mathrm d \mu(\alpha)\;.
\end{displaymath}
\end{enumerate}
\end{comment}

The prototype of the space $H$ is a reproducing kernel Hilbert space of complex-valued holomorphic functions on a complex domain $M$, that is, for each $\alpha \in M$, there exists an
element $\mathbf K(\cdot,\alpha)$ of $H$, such that $(f,\mathbf K(\cdot,\alpha))=f(\alpha)$ for each $f\in H$.

In this case, one can directly define the Berezin symbol of a bounded linear operator $A$ on $H$ as the complex-valued function $\Be(A)$ on $M$ given by
\begin{equation*}\label{definition Berezin symbol}
\Be( A) (\alpha)= \frac{\big(A\mathbf K(\cdot,\alpha),\mathbf K(\cdot,\alpha)\big)}{\big(\mathbf K(\cdot, \alpha),\mathbf K(\cdot,\alpha)\big)}, \quad \alpha \in M.
\end{equation*}
and thus a symbolic calculus on $\mathbf B(H)$, the algebra of all bounded linear operators on $H$ \cite{B-74}.
%%%%%%%%%%%%%%%%%%%%%%%%%%%%%%%%%%%%%%%%%%%%%%%%%%%%%%%%%%%%%%%
%%%%%%%%%%%%%%%%%%%%%%%%%%%%%%%%%%%%%%%%%%%%%%%%%%%%%%%%%%%%%%%
%%%%%%%%%%%%%%%%%%%%%%%%%%%%%%%%%%%%%%%%%%%%%%%%%%%%%%%%%%%%%%%
\begin{comment}
Then we have a symbolic calculus on $\mathbf B(H)$, the algebra of all bounded linear operators on $H$. More specifically, we define the Berezin symbol of an operator $A \in \mathbf B(H)$ as the complex-valued function $\Be(A)$ on $M$ given by
\begin{displaymath}
\Be( A) (\alpha)= \frac{\big(A\mathbf K(\cdot,\alpha),\mathbf K(\cdot,\alpha)\big)}{\big(\mathbf K(\cdot, \alpha),\mathbf K(\cdot,\alpha)\big)}, \quad $\alpha \in M$,
\end{displaymath}
for each $\alpha \in M$ \cite{B-74}.
\end{comment}
%%%%%%%%%%%%%%%%%%%%%%%%%%%%%%%%%%%%%%%%%%%%%%%%%%%%%%%%%%%%%%%
%%%%%%%%%%%%%%%%%%%%%%%%%%%%%%%%%%%%%%%%%%%%%%%%%%%%%%%%%%%%%%%
%%%%%%%%%%%%%%%%%%%%%%%%%%%%%%%%%%%%%%%%%%%%%%%%%%%%%%%%%%%%%%%

The Berezin map $\Be : A \mapsto \Be(A)$ has some nice properties: is a linear operator, the unit operator corresponds to the unit constant, Hermitian conjugation of operators corresponds to complex conjugation of symbols. Moreover, if we assume that the Berezin symbol may be extended in a neighbourhood of the diagonal $M \times M$ to the function
\begin{displaymath}
\Be( A)(\alpha,\beta)= \frac{\big(A\mathbf K(\cdot,\alpha),\mathbf K(\cdot,\beta)\big)}{\big(\mathbf K(\cdot,\alpha),\mathbf K(\cdot,\beta)\big)},
\end{displaymath}
then an associative product for two Berezin symbols is defined by 
\begin{equation*}\label{Eq 2}
\Be(A) \# \Be(B)(\alpha) = 
\Be(AB)(\alpha)  = \int_M \Be(A)(\gamma,\alpha) \Be(B)(\alpha,\gamma) %\nonumber\\
%&\hspace{2cm}\times
\frac{\big|(\mathbf K(\cdot,\alpha),\mathbf K(\cdot,\gamma)\big)|^2}{\big(\mathbf K(\cdot,\alpha),\mathbf K(\cdot,\alpha)\big)} \mathrm d\mu(\gamma). 
\end{equation*}

The product $\#$ allows us to define a noncommutative star product on the algebra which consists of Berezin symbols for bounded linear operators with domain in $H$. 
%then we can define an associative product for two Berezin symbols (see Eq. ()) by
%we have the following formulas
\begin{comment}
\begin{align}
%\Be(A+cB) & = \Be( A) + c \Be( B)\;, \mbox{ for all constant } c \in \mathbb C,\nonumber\\
%\Be(\mathrm{Id}) & =1\;,\hspace{0.5cm} \mbox{where $\mathrm{Id}$ denotes the identity operator},\nonumber\\
%\Be( A^*)(\alpha,\beta) & = \overline{\Be(A)(\beta,\alpha)} \;, \hspace{0.5cm} \alpha,\beta\in M.\nonumber\\
\widehat{Af}(\alpha) & = \int_M \hat f(\beta) \Be(A)(\beta,\alpha) \big(\mathbf K(\cdot,\beta),\mathbf K(\cdot,\alpha)\big) \mathrm d \mu(\beta), \hspace{0.5cm} f \in H. \label{Eq 1}\\
\Be(AB)(\alpha,\beta) & = \int_M \Be(A)(\gamma,\beta) \Be(B)(\alpha,\gamma) %\nonumber\\
%&\hspace{2cm}\times
\frac{\big(\mathbf K(\cdot,\alpha),\mathbf K(\cdot,\gamma)\big)\big(\mathbf K(\cdot,\gamma),\mathbf K(\cdot,\beta)\big)}{\big(\mathbf K(\cdot,\alpha),\mathbf K(\cdot,\beta)\big)} \mathrm d\mu(\gamma). \label{Eq 2}
\end{align}
\end{comment}
%A useful application of this symbolic calculus is that it allows us to build a star product. 
In \cite{B-74} Berezin applied this method to K\"ahler manifold. In this case $H$ is the Hilbert space of analytic functions in $L^2(M,\mathrm d\mu)$ so that the embedding from $H$ into $L^2(M,\mathrm d\mu)$ is the inclusion, and the complete family $\{\mathbf K(\cdot,\alpha)\}$ is obtained by freezing one variable in the reproducing kernel.

The main goal of the present paper is to introduce a symbolic calculus for the Hilbert space $L^2(\Sn{n})$, $n=2,3,5$ of square integrable functions with respect to the normalized surface measure $\mathrm d \Omega$ on $\Sn{n}$ endowed with the inner product
\begin{equation*}
\langle \phi,\psi\rangle_{\Sn{n}}= \int_{\Sn{n}} \phi(\vect x) \overline{\psi(\vect x)} \mathrm d \Omega(\vect x),
\end{equation*} 
and $\Sn{n}=\{\vect x \in \mathbb R^{n+1}\;|\;|\vect x|=1\}$ is the unit sphere inmersed in $\mathbb R^{n+1}$.

In order to introduce this symbolic calculus, we first need  to find an overcomplete family of functions in $L^2(\Sn{n})$, $n=2,3,5$. Since it is not possible to define a reproducing kernel in $L^2(\Sn{n})$, we can not use the same idea applied by Berezin to K\"ahler manifold. % so we need to find another complete family in $L^2(\Sn{n})$. %  satisfying ii). 
However, in general, the coherent states are a specific complete set of vectors in a Hilbert space satisfying a certain resolution of the identity condition.%, i.e. the coherent states satisfy ii).

%Coherent states besides being mathematical tools  which provide a close connection between classical and quantum formalism, in general, they are a specific complete set of vectors in a Hilbert space satisfying a certain resolution of the identity condition, i.e. the coherent states satisfy ii). 
%Since it is not possible to define a reproducing kernel in the Hilbert space $L^2(\Sn{n})$, $n=2,3,5$ we need to found a family of functions in $L^2(\Sn{n})$  satisfying ii).

%\textcolor{blue}{
%In order to introduce this symbolic calculus,  we observe that the coherent states besides being mathematical tools  which provide a close connection between classical and quantum formalism, in general, they are a specific complete set of vectors in a Hilbert space satisfying a certain resolution of the identity condition, i.e. the coherent states satisfy ii). }

In \cite{V-02} Villegas-Blas introduced a system of coherent states for $L^2(\Sn{n})$, $n=2,3,5$. In addition, he defined an embedding $\bts{n}$ from $L^2(\Sn{n})$, $n=2,3,5$, to $\bs{m}$, $m=2,4,8$ respectively, where $\bs{m}$ denotes the Bargmann space of all entire functions in $L^2(\mathbb C^m,\du{m})$ for the Gaussian measure $\du{m}=(\pi \hbar)^{-m} \e^{-|\vect z|/\hbar} \mathrm d\vect z \mathrm d\overline{\vect z}$, with $\mathrm d\vect z \mathrm d \overline{\vect z}$ denoting the Lebesgue measure on $\mathbb C^m$ and $\hbar$ the Planck's constant %Moreover, the action of the Bargmann transform $\bts{n}$ on a function $\psi$ in $L^2(\Sn{n})$ is a function in $\F_m$ whose evaluation in $\vect z \in \mathbb C^m$ is equal to the $L^2(\Sn{n})$-inner product of $\psi$ with the coherent state labelled by $\vect z$, i.e. the Bargmann transform $\bts{n}$ satisfies i) 
(see section \ref{section bt + cs} for details on the notation and some general facts about the transformation $\bts{n}$ and coherent states for $L^2(\Sn{n})$, $n=2,3,5$).

Starting from this and adopting the approach along the lines of Berezin quantization, 
 %applying Berezin's theory, 
 in Sec. \ref{section Berezin symbolic calculus} we describe the rules for symbolic calculus on $\mathbf B(L^2(\Sn{n}))$, $n=2,3,5$. In particular, we derive an explicit formula for the composition of Berezin's symbols and thus a star product on the algebra which consists of Berezin symbols for bounded linear operators with domain in $L^2(\Sn{n})$. In addition, we show some properties of the extended Berezin symbol that we will use to obtain the asymptotic expansion of the star product.

In Sec. \ref{section star product} we will prove that this noncommutative star product satisfies the usual requirement on the semiclassical limit; this result can be obtained by using Laplace's method (see the Appendix \ref{Appendix A} for details of the way in which this method is used).%, in said appendix there are also mentioned other known results needed to prove the results of section \ref{section star product})

%see Appendix \ref{Appendix A} for details  Theorem \ref{theorem stationary phase method} of).

Finally, by the way in which Villegas-Blas introduced both the embedding $\bts{n}$ and the coherent states, we will prove in Sec. \ref{section invariance star product} the invariance of our star product under the action of the group $\mathrm{SU}(2)$, $\mathrm{SU}(2) \times \mathrm{SU}(2)$ and $\mathrm{SU}(4)$ on $\mathbb C^2$, $\mathbb C^4$ and $\mathbb C^8$ respectively.

It should be noted that, since $ H = L^2(\Sn{n})$ and $ L^2 (M, \mathrm d \mu) = L^2 (\mathbb C^n, \du{m}) $, in our construction there is no inclusion of $ H $ into $ L^2 (M, \mathrm d \mu) $. Furthermore, the functions of the complete family are not obtained by the reproducing kernel. This situation is thus slightly different Berezin's situation.

Throughout the paper, we will use the following  basic notation. For every $\vect z, \vect w \in \mathbb C^k$, $\vect z=(z_1,\ldots,z_k)$, $\vect w=(w_1,\ldots,w_k)$, and for every multi-index $\boldsymbol{ \ell}=(\ell_1,\ldots,\ell_k)\in \mathbb Z_+^k$ of length $k$, where $\mathbb Z_+$ is the set of nonnegative integers, let 
\begin{displaymath}
\vect z\cdot\vect w=\sum_{s=1}^k z_s\,\overline w_s, \hspace{1cm} |\vect z|=\sqrt{\vect z \cdot \vect z},\hspace{1cm}|\boldsymbol \ell|=\sum_{s=1}^k \ell_s, \hspace{1cm} \boldsymbol{\ell} !=\prod_{s=1}^k \ell_s! , \hspace{1cm} \vect z^{\boldsymbol \ell}=\prod_{s=1}^k z_s^{\ell_s}.
\end{displaymath}
Whenever convenient, we will abbreviate $\partial/\partial v_j, \partial/\partial \overline v_j$, etc., to $\partial_{v_j}, \partial_{\overline v_j}$, etc., respectively, and $\partial_{v_1} \partial_{v_2}\ldots \partial_{v_k}$ to $\partial_{v_1v_2\cdots v_k}$.

\section{Preliminaries}\label{section bt + cs}
In this section, we review some results on the transformation $\bts{n}$ and the system of coherent states for $L^2(\Sn{n})$, $n=2,3,5$ introduced by Villegas-Blas in \cite{V-02}. Here and in the sequel, the letters $n$ and $m$ will only denote integer numbers in the sets $\{2, 3, 5\}$ and $\{2, 4, 8\}$ respectively. Furthermore, whenever we write $(n,m)$ we mean the three possible cases $(n,m)= (2,2), (3,4), (5,8)$ unless a particular value of $(n, m)$ is specified.

In order to describe both the transformation $\bts{n}$ and the coherent states for $L^2(\Sn{n})$, $n=2,3,5$, let us consider, for $\vect z \in \mathbb C^m$, the map $\rho_{(n,m)}(\vect z)=(\rho_1(\vect z),\ldots, \rho_{n+1}(\vect z)) \in \mathbb C^m$ defined by

Case $(n,m)=(2,2)$ 
\begin{equation}\label{definition rho (2,2)}
\rho_1(\vect z)=\frac{1}{2}(z_2^2-z_1^2), \quad \rho_2(\vect z)=\frac{\imath}{2}(z_1^2+z_2^2), \quad \rho_3(\vect z)=z_1z_2.
\end{equation}

Case $(n,m)=(3,4)$
\begin{equation}\label{definition rho (3,4)}
\begin{aligned}
\rho_1(\vect z) & = z_1z_3+z_2z_4, & \rho_2(\vect z) & = \imath(z_1z_3-z_2z_4)\\
\rho_3(\vect z) & = \imath(z_1z_4+z_2z_3), & \rho_4(\vect z) & = z_1z_4-z_2z_3.
\end{aligned}
\end{equation}

Case $(n,m)=(5,8)$
\begin{equation}\label{definition rho (5,8)}
\begin{aligned}
\rho_1(\vect z) & = \imath(-z_1z_6+z_3z_8+z_2z_5-z_4z_7),\hspace{1cm}&
\rho_2(\vect z) & = z_1z_6+z_3z_8+z_2z_5+z_4z_7,\\
\rho_3(\vect z) & = z_2z_6+z_3z_7-z_1z_5-z_4z_8,&
\rho_4(\vect z) & = \imath(-z_1z_5+z_4z_8-z_2z_6+z_3z_7),\\
\rho_5(\vect z) & = \imath(-z_1z_8-z_2z_7-z_3z_6-z_4z_5),&
\rho_6(\vect z) & = z_1z_8+z_2z_7-z_3z_6-z_4z_5. 
\end{aligned}
\end{equation}

Notice that the range of the map $\rho_{(n,m)}$ is $\Qua^n:=\{\balpha\in \mathbb C^{n+1}\;|\; \alpha_1^2+\cdots+ \alpha_{n+1}^2=0\}$. Moreover, $\rho_{(n,m)}$ is not one to one because it is invariant under the action of the group $\mathcal G_m= \mathbb Z_2$, $\Sn{1}$ and $\mathrm{SU}(2)$ on $\mathbb C^2$, $\mathbb C^4$ and $\mathbb C^8$, respectively, %denoted by $\T(g)$ with $g$ in $\mathcal G_m$ as well and 
described by the following equation:
\begin{equation*}
\vect z'= \T(g)\vect z
\end{equation*}
with $\T(g)$ given by

Case $m=2$:
\begin{equation}\label{z'=T(g)z m=2}
\T(g)=\pm 1.
%\vecz'=\vecz\;\;\; \mbox{or}\;\;\; \vecz'=-\vecz. 
\end{equation}

Case $m=4$: For $\theta \in \mathbb R$ and therefore $g=\mathrm{exp}(\imath \theta) \in \Sn{1}$
\begin{equation}\label{z'=T(g)z m=4}
\T(g)=
%\vecz'=
\begin{pmatrix}
\exp(-\imath \theta) & 0 & 0 & 0\\
0 &\exp(-\imath \theta)  & 0 & 0\\
0 & 0 & \exp(\imath \theta)  & 0\\
0 & 0 & 0 & \exp(\imath \theta) \\
\end{pmatrix}.%\vecz.
\end{equation}

Case $m=8$: For $g \in \mathrm{SU}(2)$
\begin{equation}\label{z'=T(g)z m=8}
\T(g)=
%\vecz'= 
\mathbf L^\dagger \mathbf V(g) \mathbf L,% \vecz, 
\end{equation}
with $\mathbf L^\dagger$ denoting the adjoint of the following matrix
\begin{equation}\label{matrix L}
\mathbf L=
\begin{pmatrix}
1 & 0 & 0 & 0 & 0 & 0 & 0 & 0 \\
0 & 0 & 0 & 0 & 0 & 0 & 1 & 0 \\
0 & 0 & 1 & 0 & 0 & 0 & 0 & 0 \\
0 & 0 & 0 & 0 & 1 & 0 & 0 & 0 \\
0 & 0 & 0 & -1 & 0 & 0 & 0 & 0 \\
0 & 0 & 0 & 0 & 0 & 1 & 0 & 0 \\
0 & -1 & 0 & 0 & 0 & 0 & 0 & 0 \\
0 & 0 & 0 & 0 & 0 & 0 & 0 & 1 \\
\end{pmatrix}, \quad \mbox{and} \quad\mathbf V(g) =
\begin{pmatrix}
g & 0 & 0 & 0\\
0 & g & 0 & 0\\
0 & 0 & g & 0\\
0 & 0 & 0 & g
\end{pmatrix},
\end{equation}
where all the entries in $\mathbf V(g)$ are $2 \times 2$ matrices.

In fact, if we want the map $\rho_{(n,m)}$ to be a injection, its domain must be $\tilde{\mathbb C}^m/\mathcal G_m$, where
\begin{align}
\tilde{\mathbb C}^2 & = \mathbb C^2,\nonumber\\
\tilde{\mathbb C}^4 & = \Big\{\vect z \in \mathbb C^4\;:\; |z_1|^2+|z_2|^2=|z_3|^2+|z_4|^2\Big\}, \label{condition (3,4)}\\
\tilde{\mathbb C}^8 & = \Big\{\vect z \in \mathbb C^8\;:\; \sum_{j=1}^4|z_j|^2=\sum_{j=1}^4|z_{j+4}|^2, z_7 \overline z_1+z_5 \overline z_3=z_8 \overline z_2+z_6 \overline z_4\Big\}.\label{condition (5,8)}
%
%	\tilde{\mathbb C}^2 & = \Big\{\vect z \in \mathbb C^2\;:\;
%	\vect z = \T(g)(\vect z)\;\forall g \in \mathbb Z_2\Big\},\nonumber\\
%	\tilde{\mathbb C}^4 & = \Big\{\vect z \in \mathbb C^4\;:\; |z_1|^2+|z_2|^2=|z_3|^2+|z_4|^2, \vect z= \T(g)(\vect z)\;\forall g \in \Sn{1}\Big\}, \label{condition (3,4)}\\
%	\tilde{\mathbb C}^8 & = \Big\{\vect z \in \mathbb C^8\;:\; \sum_{j=1}^4|z_j|^2=\sum_{j=1}^4|z_{j+4}|^2, z_7 \overline z_1-z_8 \overline z_2+z_5 \overline z_3-z_6 \overline z_4=0,\nonumber\\
%	&\hspace{2.5cm} \vect z= \T(g)(\vect z)\;\forall g \in \mathrm{SU}(2)\Big\},\label{condition (5,8)}
\end{align}

%
%	Note that the canonical transformation $\mathcal C_{(n,m)}$ is not injective, since the map $\rho_{(n,m)}$ is invariant under the group actions of $\mathcal G_m= \mathbb Z_2, \Sn{1}, \mathrm{SU}(2)$ described bellow (see Eqs. (\ref{z'=T(g)z m=2}), (\ref{z'=T(g)z m=4}) and (\ref{z'=T(g)z m=8})). So, the domain 

\begin{comment}
Furthermore, Villegas-Blas \cite{V-02} observed that the function 
\begin{equation*}
\rho_{(n,m)}(\vect z) \cdot \vect x, \quad \vect z \in \mathbb C^m, \vect x \in \Sn{n},
\end{equation*}
is a generating function of the canonical transformation $\mathcal C_{(n,m)}$. From which, he defined a Bargmann type transform and coherent states for $L^2(\Sn{n})$, $n=2,3,5$. 
\end{comment}
%The rest of the section we will give details about the Bargmann transform and in next section we will give details about the coherent states.

\subsection{The transformation $\bts{n}$ with $n=2,3,5$}
Let $\bs{k}$ denote the Bargmann space of complex valued holomorphic functions on $\vect z \in \mathbb C^k$ which are square integrable with respect to the following measure on $\mathbb C^k$:
\begin{equation}\label{du}
\du{m}(\vect z)=\frac{1}{(\pi\hbar)^m} \e^{-|\vect z|^2/\hbar}  \mathrm d \vect z \mathrm d \overline {\vect z},
\end{equation}
with $|\vect z|^2=|z_1|^2+\cdots +|z_k|^2$ and $\mathrm d \vect z \mathrm d \overline {\vect z}$ the Lebesgue measure in $\mathbb C^k \simeq \mathbb R^{2k}$.

Let us consider the following transformation $\bts{n}$ from $L^2(\Sn{n})$, $n=2,3,5$ to $\bs{m}$, $m=2,4,8$ repectively 
\begin{equation}\label{Bargmann transform S^n}
\bts{n} \psi(\vect z)= \int_{\Sn{n}} \left(\sum_{\ell=0}^{\infty} \frac{\sqrt{2\ell+n-1}}{\ell !\sqrt{n-1}}       \left(\frac{\rho_{(n,m)}(\vect z)\cdot \vect x}{\hbar} \right)^\ell\right) \psi(\vect x) \mathrm d \Omega(\vect x),\quad \vect z \in \mathbb C^m,
\end{equation}
where $\hbar$ denotes the Planck's constant (regarded as a parameter).

Villegas-Blas proved that $\bts{n}$ is a unitary transformation onto its range $\F_m \subset \bs{m}$, where the space $\F_m$ is defined by

Case $m=2$: Let $\F_2$ be the closed subspace of $\bs{2}$ generated by the monomials with even degree.

Case $m=4$: Let $\F_4\subset \bs{4}$ be the kernel of the following operator:
\begin{equation}\label{Operator F4}
\mathcal L = v_1 \partial_{v_1}+ v_2\partial_{v_2} -v_3\partial_{v_3}- v_4\partial_{v_4}.
%	\mathcal L = v_1\frac{\partial}{\partial v_1}+ v_2\frac{\partial}{\partial v_2}-v_3\frac{\partial}{\partial v_3}- v_4\frac{\partial}{\partial v_4}.
\end{equation}

The domain of $\mathcal L$ is defined as $\{f \in \bs{4}\;|\; \mathcal L f \in \bs{4}\}$.

Case $m=8$: Let $\F_8\subset \bs{8}$ be the intersection of the kernel of the following three operators:
\begin{align}
\mathcal R_1 & = v_1\partial_{v_1} +v_2\partial_{ v_2}+ v_3\partial_{v_3}+ v_4\partial_{ v_4} -v_5\partial_{ v_5}-v_6\partial_{ v_6}-v_7\partial_{ v_7} -v_8\partial_{ v_8},\nonumber\\
\mathcal R_2 & = v_7\partial_{ v_1} -v_8\partial_{v_2} +v_5\partial_{ v_3}- v_6\partial_{v_4} -v_3\partial_{ v_5} +v_4\partial_{ v_6}- v_1\partial_{v_7} +v_2\partial_{ v_8},\label{Operators F8}\\
\mathcal R_3 & = z_7\partial_{v_1}-v_8\partial_{v_2} +v_5\partial_{ v_3}- v_6\partial_{v_4} +v_3\partial_{ v_5}- v_4\partial_{ v_6}+v_1 \partial_{v_7}- v_2\partial_{ v_8}. \nonumber
\end{align}

The domains of $\mathcal R_\ell$, $\ell=1,2,3$ are defined in a similar way as the domain of $\mathcal L$.

%with $\ker \mathcal A$ denoting the kernel of a given operator $\mathcal A$.
Notice that, for $m=2,4,8$, the elements of the Hilbert spaces $\F_m$ are given by the invariant functions in $\bs{m}$ under the action of the group $\mathcal G_m= \mathbb Z_2, S^1,\mathrm{SU}(2)$ on $\mathbb C^2,\mathbb C^4,\mathbb C^8$ respectively (see Eqs. (\ref{z'=T(g)z m=2}), (\ref{z'=T(g)z m=4}) and (\ref{z'=T(g)z m=8})).

\subsection{Coherent states for $L^2(\Sn{n})$, $n=2,3,5$}

In \cite{V-06} Villegas-Blas introduced the coherent states as the complex conjugate of the integral kernel defining the transformation $\bts{n}$ (see Eq. (\ref{Bargmann transform S^n})). For $\vect z \in \mathbb C^m-\{0\}$, let us define
\begin{equation}\label{coherent state}
\csz(\vect x)= \sum_{\ell=0}^{\infty} \frac{\sqrt{2\ell+n-1}}{\ell !\sqrt{n-1}} 
 \left(\frac{\vect x \cdot \rho_{(n,m)}(\vect z)}{\hbar}\right)^\ell, \quad \vect x \in \Sn{n}.
\end{equation}

Note that $\csz$ is in $L^2(\Sn{n})$ because it is a bounded function. Moreover, the action of the transformation $\bts{n}$ on a function $\Psi$ in $L^2(\Sn{n})$ is a function in $\bs{m}$ whose evaluation in $\vect z\in \mathbb C^m$ is equal to the $L^2(\Sn{n})$-inner product of $\Psi$ with the coherent state labeled by $\rho_{(n,m)}(\vect z)$ (see Eq. \ref{Bargmann transform S^n}).  This is
\begin{equation}\label{Bargmann = transform cs}
\bts{n} \Psi(\vect z)= \langle \Psi, \csz  \rangle_{\Sn{n}}.
\end{equation}

For $\vect w \in \mathbb C^m$, let us denote by $\Q{m}(\cdot,\vect w)$ to the action of the transformation  $\bts{n}$ of a coherent state $\csw$. The functions $\Q{m}(\cdot,\vect w)$  have the following expressions

\begin{align}
\Q{2}(\vect z,\vect w):=\bts{n} \Phi_{\rho_{(2,2)}(\vecw)}^{(\hbar)} (\vecz) & = \sum_{k=0}^\infty \frac{1}{(2k)!\hbar^{2k}}(z_1 \overline{w_1}+z_2\overline{w_2})^{2k} ,\nonumber\\
\Q{4}(\vect z,\vect w):=\bts{n} \Phi_{\rho_{(3,4)}(\vecw)}^{(\hbar)} (\vecz) & = \sum_{k=0}^\infty \frac{1}{(k!)^2\hbar^{2k}}(z_1 \overline{w_1}+z_2\overline{w_2})^k
(z_3 \overline{w_3}+z_4\overline{w_4})^k ,\label{Bargmann transform of a coherent state}\\
\Q{8}(\vect z,\vect w):=\bts{n} \Phi_{\rho_{(5,8)}(\vecw)}^{(\hbar)}(\vecz) & =  \sum_{k=0}^\infty  \frac{1}{k!(k+1)!\hbar^{2k}} \left(\varrho(\vecz,\vecw)\right)^k,\nonumber
\end{align}
with
\begin{equation}\label{def varrho(z,w)}
\begin{aligned}
\varrho(\mathbf u,\mathbf v) & =  [u_1 \overline{v_1}+u_2\overline{v_2} +
u_3 \overline{v_3}+u_4\overline{v_4}][u_5 \overline{v_5}+u_6\overline{v_6}+u_7\overline{v_7}+u_8\overline{v_8}]\\
&\; + [u_7 \overline{v_1}-u_8\overline{v_2}+ u_5 \overline{v_3}-u_6\overline{v_4}][
u_2\overline{v_8}-u_3 \overline{v_5}+
u_4\overline{v_6}-u_1\overline{v_7}].
\end{aligned}
\end{equation}

\begin{comment}
It can be shown that the Bargmann transform $\bts{n}$ of a coherent state is equal to the reproducing kernel of the space $\F_m$, i.e
\begin{equation}\label{btsn csw = rkf}
\bts{n} \csw (\vecz)=\Q{m}(\vecz,\vecw), \quad \vect z, \vect w \in \mathbb C^m.
\end{equation} 
%where $\Q{m}(\vecz,\vecw)$ is given in turn by the following expressions

%It can be shown that $\bts{n} \csw (\vecz)$ is equal to the reproducing kernel $\Q{m}(\vecz,\vecw)$ of the space $\F_m$ given in turn by the following expressions

%\textcolor{red}{
%\begin{remark}
%	Notice that for m = 2, 4, 8, the reproducing kernel $\Q{m}(\vecz, \vecw)$ is equal to the average of the integral kernel of $\mathcal B_m$ under the corresponding group $\mathbb Z_2$ , $S^1$ , $\mathrm{SU} (2)$ respectively.
% Moreover, the spaces $\F_m$ can be characterized as the elements of the Bargmann space $\mathcal B_m$ of $m$ complex variables which are invariant under the action of the group $Z_2$ , $S^1$ and $\mathrm{SU} (2)$ on $\mathbb C^2$, $\mathbb C^4$ or $\mathbb C^8$ respectively (see []).
%\end{remark}}

%\textcolor{red}{For $\vecw \in \mathbb C^m$ fixed, the reproducing kernel $\Q{m}(\vecz,\vecw)$ is a vector in the Hilbert space $\F_m$.} 
Using this equality, we can express the reproducing kernel $\Q{m}(\vecz,\vecw)$ in terms of the modified Bessel functions of the first kind of order $\nu$, $\mathbf I_\nu$ (see Secs. 8.4 and 8.5 of Ref. \cite{G-94} for definition and expressions for this special function) as the following proposition establishes it: 
\end{comment}

Moreover, we can express the functions $\Q{m}(\vecz,\vecw)$ in terms of the modified Bessel functions of the first kind of order $\nu$, $\mathbf I_\nu$ (see Secs. 8.4 and 8.5 of Ref. \cite{G-94} for definition and expressions for this special function) as the following proposition establishes it: 
\begin{proposition}\label{kernel F_m = Bessel}
	For $\vecz,\vecw \in \mathbb C^m-\{0\}:$
	\begin{equation}\label{btsn coherent states = bessel}
	\Q{m}(\vecz,\vecw)=\Gamma\left(\frac{n-1}{2}\right)\left(\frac{\balpha\cdot \boldsymbol \beta}{2\hbar^2}\right)^{\frac{3-n}{4}}\mathbf I_{\frac{n-3}{2}}\left(\frac{\sqrt{2\balpha\cdot \boldsymbol\beta}}{\hbar}\right),
	\end{equation}
	with $\balpha=\rho_{(n,m)}(\vecz)$ and $\boldsymbol{\beta}= \rho_{(n,m)}(\vecw)$. %\textcolor{red}{ and where $\mathbf I_\nu$ denotes the modified Bessel functions of the first kind of order $\nu$ (see Secs. 8.4 and 8.5 of Ref. \cite{G-94} for definition and expressions for this special function)}. 
	We are taking the branch of the square root function defined by $\sqrt z=|z|^{1/2} \mathrm{exp}(\imath \theta/2)$, where $\theta=\mathrm{Arg}(z)$ and $-\pi<\theta<\pi$.
\end{proposition}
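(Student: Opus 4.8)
The plan is to substitute the standard power-series representation $\mathbf I_\nu(x)=\sum_{k=0}^{\infty}\frac{1}{k!\,\Gamma(k+\nu+1)}\left(\frac{x}{2}\right)^{2k+\nu}$, with $\nu=\frac{n-3}{2}$, into the right-hand side of (\ref{btsn coherent states = bessel}) and to verify that it reproduces the explicit series (\ref{Bargmann transform of a coherent state}) defining $\Q{m}$ term by term. The key observation is that the prefactor $\big(\frac{\balpha\cdot\boldsymbol\beta}{2\hbar^{2}}\big)^{\frac{3-n}{4}}=\big(\frac{\balpha\cdot\boldsymbol\beta}{2\hbar^{2}}\big)^{-\nu/2}$ is engineered precisely to cancel the fractional factor $(x/2)^{\nu}$ carried by every term of the Bessel series when $x=\frac{\sqrt{2\balpha\cdot\boldsymbol\beta}}{\hbar}$. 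Indeed $(x/2)^{2}=\frac{\balpha\cdot\boldsymbol\beta}{2\hbar^{2}}$, so the two fractional powers multiply to $1$ independently of the chosen branch of the square root; the branch convention in the statement only serves to make the individual factors single-valued. After this cancellation, and using $\nu+1=\frac{n-1}{2}$, the right-hand side becomes the honest power series
\[
\Gamma\Big(\tfrac{n-1}{2}\Big)\sum_{k=0}^{\infty}\frac{1}{k!\,\Gamma\big(k+\tfrac{n-1}{2}\big)}\Big(\frac{\balpha\cdot\boldsymbol\beta}{2\hbar^{2}}\Big)^{k}.
\]

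This reduces the proposition to a purely algebraic identity in each case, namely an explicit evaluation of the Hermitian bilinear form $\balpha\cdot\boldsymbol\beta=\rho_{(n,m)}(\vecz)\cdot\rho_{(n,m)}(\vecw)$. Direct expansion of (\ref{definition rho (2,2)})--(\ref{definition rho (5,8)}) should yield
\[
\rho_{(2,2)}(\vecz)\cdot\rho_{(2,2)}(\vecw)=\tfrac12\,(z_1\overline{w_1}+z_2\overline{w_2})^{2},\qquad
\rho_{(3,4)}(\vecz)\cdot\rho_{(3,4)}(\vecw)=2(z_1\overline{w_1}+z_2\overline{w_2})(z_3\overline{w_3}+z_4\overline{w_4}),
\]
and $\rho_{(5,8)}(\vecz)\cdot\rho_{(5,8)}(\vecw)=2\,\varrho(\vecz,\vecw)$. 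Substituting these into the series above and simplifying the Gamma quotient collapses the coefficients to exactly those in (\ref{Bargmann transform of a coherent state}): for $n=2$ one uses $\Gamma(\tfrac12)/\Gamma(k+\tfrac12)=4^{k}k!/(2k)!$ together with $\big(\frac{\balpha\cdot\boldsymbol\beta}{2\hbar^{2}}\big)^{k}=(z_1\overline{w_1}+z_2\overline{w_2})^{2k}/(4^{k}\hbar^{2k})$ to obtain $1/((2k)!\hbar^{2k})$; for $n=3$, $\Gamma(1)/\Gamma(k+1)=1/k!$ gives $1/((k!)^{2}\hbar^{2k})$; and for $n=5$, $\Gamma(2)/\Gamma(k+2)=1/(k+1)!$ gives $1/(k!(k+1)!\hbar^{2k})$. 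In each case the comparison is exact, so all three identities close the proof simultaneously.

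The main obstacle is the algebraic identity in the case $(n,m)=(5,8)$: expanding $\sum_{j=1}^{6}\rho_{j}(\vecz)\,\overline{\rho_{j}(\vecw)}$ involves six products of four-term expressions, and a substantial amount of cancellation is required before the result regroups into the two-bracket form of $\varrho$ in (\ref{def varrho(z,w)}). The mechanism driving the cancellation is the pattern of imaginary units in (\ref{definition rho (5,8)}): grouping the products in conjugate pairs $\rho_{2j-1}\overline{\rho_{2j-1}}+\rho_{2j}\overline{\rho_{2j}}$ makes the unwanted mixed terms cancel, exactly as in the analogous but much shorter computations for $n=2,3$ (and as in the verification $\rho_1^{2}+\cdots+\rho_{n+1}^{2}\equiv 0$ that places the range of $\rho_{(n,m)}$ in $\Qua^{n}$). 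I would therefore carry out the $(5,8)$ expansion systematically, collecting terms of the form $(z_i\overline{w_i})(z_j\overline{w_j})$, and check that the surviving terms assemble into $2\,\varrho(\vecz,\vecw)$; the two lower-dimensional identities follow from the same pairing idea in a few lines.
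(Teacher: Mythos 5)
Your proposal is correct and follows essentially the same route as the paper's proof: both rest on the algebraic identities $\rho_{(2,2)}(\vect z)\cdot\rho_{(2,2)}(\vect w)=\tfrac12(\vect z\cdot\vect w)^2$, $\rho_{(3,4)}(\vect z)\cdot\rho_{(3,4)}(\vect w)=2(z_1\overline w_1+z_2\overline w_2)(z_3\overline w_3+z_4\overline w_4)$, $\rho_{(5,8)}(\vect z)\cdot\rho_{(5,8)}(\vect w)=2\varrho(\vect z,\vect w)$ combined with the power series $\mathbf I_\nu(\omega)=\left(\frac{\omega}{2}\right)^\nu\sum_{\ell\ge 0}\frac{(\omega^2/4)^\ell}{\ell!\,\Gamma(\nu+\ell+1)}$ to match the series in Eq. (\ref{Bargmann transform of a coherent state}), with you merely reading the identity from the Bessel side toward the series rather than the reverse, and supplying the Gamma-quotient computations the paper leaves implicit.
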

\begin{proof}
	Using the explicit expression definition of $\rho_{(n,m)}$ (see Eqs. (\ref{definition rho (2,2)}), (\ref{definition rho (3,4)}) and (\ref{definition rho (5,8)})) we obtain the following relations 
	\begin{align}
	%\mathrm{Para }\;\; (n,m)=(2,2): & \;\;
	\rho_{(2,2)}(\vecz) \cdot \rho_{(2,2)}(\vecw) & = \frac{1}{2}(\vecz \cdot \vecw)^2,\nonumber\\
	%\mathrm{Para }\;\; (n,m)=(3,4): & \;\;
	\rho_{(3,4)}(\vecz) \cdot \rho_{(3,4)}(\vecw) & = 2(z_1\overline w_1+ z_2\overline w_2)(z_3\overline w_3+ z_4
	\overline w_4),\label{alpha beta = z w}\\[.15cm]
	%\mathrm{Para }\;\; (n,m)=(5,8): & \;\;
	\rho_{(5,8)}(\vecz) \cdot \rho_{(5,8)}(\vecw) & = 2\varrho(\vecz,\vecw),\nonumber
	\end{align}
	with $\varrho(\vecz,\vecw)$ defined in Eq. (\ref{def varrho(z,w)}). From Eqs. (\ref{Bargmann transform of a coherent state}), (\ref{alpha beta = z w}) and using the following expression by $\mathbf I_\nu$ (see formula 9.6.10 of Ref. \cite{AS-72})
	\begin{equation*}
	\mathbf I_\nu(\omega)=\left(\frac{\omega}{2}\right)^\nu \sum_{\ell=0}^{\infty}\frac{\left(\frac{1}{4}\omega^2\right)^\ell}{\ell! \Gamma(\nu+\ell+1)}\;
	\end{equation*} 
	we obtain Eq. (\ref{btsn coherent states = bessel}).
\begin{comment}
	\begin{equation*}
	\Q{m}(\vect z,\vect w)=\bts{n} \csw(\vecz)=\Gamma\left(\frac{n-1}{2}\right)\left(\frac{\balpha\cdot \boldsymbol \beta}{2\hbar^2}\right)^{\frac{3-n}{4}}\mathbf I_{\frac{n-3}{2}}\left(\frac{\sqrt{2\balpha\cdot \boldsymbol\beta}}{\hbar}\right).
	\end{equation*}
	From Eq. (\ref{btsn csw = rkf}) we conclude the proof of Proposition \ref{kernel F_m = Bessel}.
\end{comment}
\end{proof}

Proposition \ref{kernel F_m = Bessel} and Eqs.  (\ref{Bargmann = transform cs}), (\ref{Bargmann transform of a coherent state}) allow us to express the inner product of two coherent states in terms of the modified Bessel functions: for $\balpha=\rho_{(n,m)}(\vecz)$ and $\boldsymbol{\beta}= \rho_{(n,m)}(\vecw)$, with $\vect z,\vect w \in \mathbb C^m-\{0\}$

%Let $\balpha=\rho_{(n,m)}(\vecz)$ and $\boldsymbol{\beta}= \rho_{(n,m)}(\vecw)$, with $\vect z,\vect w \in \mathbb C^m-\{0\}$, from Eqs. (\ref{Bargmann = transform cs}), (\ref{Bargmann transform of a coherent state}) and (\ref{btsn coherent states = bessel})

%Let $\balpha=\rho_{(n,m)}(\vecz)$ and $\boldsymbol{\beta}= \rho_{(n,m)}(\vecw)$, with $\vect z,\vect w \in \mathbb C^m-\{0\}$, since the Bargmann transform $\bts{n}$ is a unitary transformation, we obtain from Eq. (\ref{btsn csw = rkf}), the reproducing property of $\Q{m}$ (see Eq. (\ref{reproducing property Qm})) and the expression for the reproducing kernel $\Q{m}$ given in Eq. (\ref{btsn coherent states = bessel})
%of a coherent state is equal to the reproducing kernel of $\F_m$ (see Eq. (\ref{btsn csw = rkf})), the unitarity of $\bts{n}$, the reproducing property of $\Q{m}$ and Eq. (\ref{btsn coherent states = bessel}) we have 
\begin{align}
\left\langle  \Phi_{\boldsymbol \beta}^{(\hbar)},  \Phi_{\balpha}^{(\hbar)}\right\rangle_{\Sn{n}} & = \Q{m}(\vect z, \vect w)\label{inner product cs = reproducing kernel}\\
& = \Gamma\left(\frac{n-1}{2}\right)\left(\frac{\balpha\cdot \boldsymbol \beta}{2\hbar^2}\right)^{\frac{3-n}{4}}\mathbf I_{\frac{n-3}{2}}\left(\frac{\sqrt{2\balpha\cdot \boldsymbol\beta}}{\hbar}\right).\label{inner product cs = Bessel}
\end{align} 

Moreover, from Eq. (\ref{inner product cs = Bessel}) and the fact that the modified Bessel function $\mathrm I_\vartheta$, $\vartheta\in\mathbb R$, has the following asymptotic expression when $|\omega| \to \infty$ (see formula 8.451-5 of Ref. \cite{G-94})
\begin{equation}\label{asymptotic expression modified Bessel function}
\mathrm I_\vartheta(\omega)=\frac{\mathrm e^\omega}{\sqrt{2\pi\omega}}\sum_{k=0}^\infty \frac{(-1)^k}{(2\omega)^k}\frac{\Gamma(\vartheta+k+\frac{1}{2})}{k!\Gamma(\vartheta-k+\frac{1}{2})}\;,\hspace{0.5cm} |\mathrm{Arg}(\omega)|<\frac{\pi}{2},
\end{equation}
we can obtain the asymptotic expansion for the inner product of two coherent states:%, as  the following proposition establishes it:
\begin{proposition}
	Let $\vect z,\vect w \in \mathbb C^m-\{0\}$ and $\balpha=\rho_{(n,m)}(\vect z), \boldsymbol \beta= \rho_{(n,m)}(\vect w)$. Assume $\balpha \cdot \boldsymbol \beta\ne 0$ and $|\mathrm{Arg}(\balpha \cdot \boldsymbol \beta )| < \pi$. Then
	%	$\rho_{(n,m)}(\vect z)\cdot \rho_{(n,m)}(\vect w) \ne 0$ and $|\mathrm{Arg}\big(\rho_{(n,m)}(\vect z)\cdot \rho_{(n,m)}(\vect w)\big)|<\pi$. Then
	\begin{equation*}
	\left\langle \Phi_{\boldsymbol\beta}^{(\hbar)},\Phi_{\boldsymbol\alpha}^{(\hbar)}\right\rangle_{\Sn{n}}=\Gamma\left(\frac{n-1}{2}\right) \left(\frac{\hbar^2}{2\boldsymbol\alpha \cdot \boldsymbol\beta}\right)^{\frac{n-2}{4}} \frac{2^{\frac{n-4}{2}}}
	{\sqrt \pi} \exp\left(\frac{\sqrt{2\boldsymbol\alpha\cdot\boldsymbol\beta}}{\hbar}\right)[1+\mathrm O(\hbar)]\;.
	\end{equation*}
\end{proposition}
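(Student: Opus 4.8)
The plan is to substitute the asymptotic expansion (\ref{asymptotic expression modified Bessel function}) of the modified Bessel function directly into the closed form (\ref{inner product cs = Bessel}) and then simplify the resulting prefactor. Setting $\vartheta=\frac{n-3}{2}$ and $\omega=\frac{\sqrt{2\balpha\cdot\boldsymbol\beta}}{\hbar}$, equation (\ref{inner product cs = Bessel}) reads
\begin{equation*}
\left\langle \csbeta, \csalpha\right\rangle_{\Sn{n}} = \Gamma\!\left(\frac{n-1}{2}\right)\left(\frac{\balpha\cdot\boldsymbol\beta}{2\hbar^2}\right)^{\frac{3-n}{4}} \mathbf I_{\frac{n-3}{2}}(\omega),
\end{equation*}
so the whole statement reduces to understanding $\mathbf I_{\frac{n-3}{2}}(\omega)$ as $\hbar\to0$. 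The first step is to check that the hypotheses license the use of (\ref{asymptotic expression modified Bessel function}). Since $\hbar>0$ and the branch of the square root is the one fixed in Proposition \ref{kernel F_m = Bessel}, one has $\mathrm{Arg}(\omega)=\tfrac12\mathrm{Arg}(\balpha\cdot\boldsymbol\beta)$, so the assumption $|\mathrm{Arg}(\balpha\cdot\boldsymbol\beta)|<\pi$ gives exactly $|\mathrm{Arg}(\omega)|<\frac{\pi}{2}$, which is the domain of validity of (\ref{asymptotic expression modified Bessel function}). Moreover $\balpha\cdot\boldsymbol\beta\ne0$ forces $|\omega|\to\infty$ as $\hbar\to0$, so the expansion becomes an expansion in the small parameter $\hbar$.

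The second step is to extract the leading order. The $k=0$ term of the sum in (\ref{asymptotic expression modified Bessel function}) equals $\Gamma(\vartheta+\tfrac12)/\Gamma(\vartheta+\tfrac12)=1$, while every term with $k\ge1$ carries a factor $(2\omega)^{-k}$. Because $\omega^{-1}=\hbar/\sqrt{2\balpha\cdot\boldsymbol\beta}$ is proportional to $\hbar$, the tail of the series is $\mathrm O(\hbar)$, and hence
\begin{equation*}
\mathbf I_{\frac{n-3}{2}}(\omega)=\frac{\mathrm e^{\omega}}{\sqrt{2\pi\omega}}\bigl[1+\mathrm O(\hbar)\bigr].
\end{equation*}

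It then remains only to carry out the bookkeeping of the prefactor: writing $\sqrt{2\pi\omega}=\sqrt{2\pi}\,(2\balpha\cdot\boldsymbol\beta)^{1/4}\hbar^{-1/2}$ and combining the powers of $\hbar$, of $2$, and of $\balpha\cdot\boldsymbol\beta$ coming from $\bigl(\balpha\cdot\boldsymbol\beta/2\hbar^2\bigr)^{(3-n)/4}$ and from $1/\sqrt{2\pi\omega}$ collapses the constant to $\left(\frac{\hbar^2}{2\balpha\cdot\boldsymbol\beta}\right)^{\frac{n-2}{4}}\frac{2^{\frac{n-4}{2}}}{\sqrt\pi}$, while $\mathrm e^{\omega}=\exp\bigl(\sqrt{2\balpha\cdot\boldsymbol\beta}/\hbar\bigr)$ supplies the stated exponential, giving the claimed formula. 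None of these steps should be a serious obstacle; the only point demanding genuine care is the verification of $|\mathrm{Arg}(\omega)|<\frac{\pi}{2}$, since it is precisely there that the hypothesis $|\mathrm{Arg}(\balpha\cdot\boldsymbol\beta)|<\pi$ is used in full strength and a wrong branch choice would invalidate the entire estimate.
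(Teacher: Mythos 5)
Your proposal is correct and follows exactly the route the paper intends: the paper states this proposition as an immediate consequence of Eq. (\ref{inner product cs = Bessel}) together with the asymptotic formula (\ref{asymptotic expression modified Bessel function}), which is precisely your substitution, branch-of-square-root check ($|\mathrm{Arg}(\omega)|<\pi/2$ from $|\mathrm{Arg}(\balpha\cdot\boldsymbol\beta)|<\pi$), and prefactor bookkeeping. The algebra collapsing the powers of $2$, $\hbar$, and $\balpha\cdot\boldsymbol\beta$ to $\left(\frac{\hbar^2}{2\balpha\cdot\boldsymbol\beta}\right)^{\frac{n-2}{4}}\frac{2^{\frac{n-4}{2}}}{\sqrt\pi}$ checks out.
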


\begin{comment}
From the last Proposition and the relation , we obtain a norm estimate for the coherent states:
\begin{align}
\|\scsn\|^2_{S^n} & = \Gamma\left(\frac{n-1}{2}\right) \left(\frac{\hbar^2}{2|\boldsymbol\alpha|^2}\right)^{\frac{n-2}{4}} \frac{2^{\frac{n-4}{2}}}
{\sqrt \pi} \exp\left(\frac{\sqrt{2}|\boldsymbol\alpha|}{\hbar}\right)[1+\mathrm O(\hbar)]\label{norma ec1}\\
& = \frac{2\pi^{n/2}}{\mathrm{Vol}(S^n)}\left[\frac{\hbar}{|\mathrm{Re}\boldsymbol\alpha|}\right]^{\frac{n}{2}-1}\frac{1}{n-1}\exp\left(\frac{ 2|\mathrm{Re}
\boldsymbol\alpha|}{\hbar}\right)[1+\mathrm O(\hbar)] \label{norm ecn}.
\end{align}
where we are use the equality $\sqrt 2 |\balpha|=|\vect z|^2$
\end{comment}

In addition to the expression indicated in Eq. (\ref{btsn coherent states = bessel}) for the functions $\Q{m}(\cdot, \vect w)$, with $\vect w \in \mathbb C^m$, we can also express these functions as the average of the map $\exp((\boldsymbol{\cdot})\cdot \vect w)$ under the corresponding group $\mathcal G_m= \mathbb Z_2, S^1,\mathrm{SU}(2)$ (see Eqs. (\ref{z'=T(g)z m=2}), (\ref{z'=T(g)z m=4}) and (\ref{z'=T(g)z m=8})), as the following proposition establishes it:
\begin{proposition}
	For $\vecz,\vecw \in \mathbb C^m-\{0\}:$
	\begin{align}
	\mathbf Q_2^{(\hbar)}(\vecz,\vecw) & = \frac{1}{2}\left(\exp\left(\frac{z_1 \overline{w_1}+z_2\overline{w_2}}{\hbar}\right)+
	\exp\left(\frac{-z_1\overline{w_1}-z_2\overline{w_2}}{\hbar}\right)\right) ,\nonumber\\
	\mathbf Q_4^{(\hbar)}(\vecz,\vecw) & = \frac{1}{2\pi} \int_{0}^{2\pi} \exp\left(\frac{1}{\hbar} \;\vecz \cdot
	\T(g(\psi))\vecw\right) \mathrm d\psi ,\label{integral expresion Q_m}\\
	\mathbf Q_8^{(\hbar)}(\vecz,\vecw) & = \int_{\theta=0}^{\pi/2}\int_{\alpha=0}^{2\pi}
	\int_{\gamma=0}^{2\pi}\exp\left(\frac{1}{\hbar}\;\vecz \cdot \T(g(\theta,\alpha,\gamma))\vecw\right) 
	\mathrm dm(\theta,\alpha,\gamma).\nonumber
	\end{align}
	with $\T(g)$ given by the action of $\mathcal G_m= \mathbb Z_2,\Sn{1}, \mathrm{SU}(2)$ on $\mathbb C^2,\mathbb C^4, \mathbb C^8$ indicated in Eqs. (\ref{z'=T(g)z m=2}), (\ref{z'=T(g)z m=4}) and (\ref{z'=T(g)z m=8}). We are using the notation $g=g(\psi)=\mathrm{exp}(\imath\psi)\in S^1$ for the $m=4$ case and $g=g(\theta,\alpha,\gamma)\in \mathrm{SU}(2)$ for the $m=8$ case where
	\begin{equation}\label{g parametrizada}
	g(\theta,\alpha,\gamma)=
	\begin{pmatrix}
	\cos(\theta)\exp(\imath \alpha) &\mathrm{sen}(\theta)\exp(\imath \gamma) \\[.1cm]
	-\mathrm{sen}(\theta) \exp(-\imath \gamma) & \cos(\theta) \exp(-\imath \alpha)
	\end{pmatrix}\;,\hspace{0.5cm}
	\end{equation}
	with  $\theta\in[0,\pi/2]$ and $\alpha,\gamma \in [-\pi,\pi]$. The measure appearing in the integral expression for $\mathbf Q_8(\vecz, \vecw)$ is the Haar measure of $\mathrm{SU} (2)$ under the parametrization of $\mathrm {SU} (2)$ indicated in Eq. (\ref{g parametrizada}): $\mathrm dm(\theta,\alpha,\gamma)=\frac{1}{2\pi^2}\;	\mathrm{sen}(\theta)\cos(\theta)
	\mathrm d\theta \mathrm d \alpha \mathrm d \gamma$.
\end{proposition}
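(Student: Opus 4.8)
The plan is to expand the exponential inside each group average as a power series and integrate term by term, matching the outcome against the series for $\Q{m}(\vecz,\vecw)$ in Eq.~(\ref{Bargmann transform of a coherent state}). For $m=2$ the average over $\mathcal G_2=\mathbb Z_2$ is literally $\tfrac12\big(e^{a/\hbar}+e^{-a/\hbar}\big)=\cosh(a/\hbar)$ with $a=z_1\overline{w_1}+z_2\overline{w_2}$, whose Taylor coefficients are $1/(2k)!$, reproducing the first line of Eq.~(\ref{Bargmann transform of a coherent state}) at once. For $m=4$ I would first compute $\vecz\cdot\T(g(\psi))\vecw=e^{\imath\psi}A+e^{-\imath\psi}B$ with $A=z_1\overline{w_1}+z_2\overline{w_2}$ and $B=z_3\overline{w_3}+z_4\overline{w_4}$, expand $\exp(\tfrac1\hbar(e^{\imath\psi}A+e^{-\imath\psi}B))$ by the exponential series, and apply the Fourier orthogonality $\tfrac1{2\pi}\int_0^{2\pi}e^{\imath(j-k)\psi}\,\mathrm d\psi=\delta_{jk}$ to retain only the diagonal terms $j=k$; these sum to $\sum_k A^kB^k/((k!)^2\hbar^{2k})$, matching the second line.

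The genuine work is the $m=8$ case. Here I would exploit that $\mathbf L$ is a real signed-permutation matrix, hence orthogonal, so that $\mathbf L^\dagger=\mathbf L^{-1}$ and the bilinear form simplifies to $\vecz\cdot\T(g)\vecw=(\mathbf L\vecz)\cdot\mathbf V(g)(\mathbf L\vecw)$. Writing $\vect a=\mathbf L\vecz$, $\vect b=\mathbf L\vecw$ and using that $\mathbf V(g)$ is block-diagonal with the single $\mathrm{SU}(2)$ matrix $g$ repeated four times, this becomes $\sum_{p=1}^4\vect a^{(p)}\cdot g\,\vect b^{(p)}$ over the four coordinate pairs. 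Parametrising $g=\big(\begin{smallmatrix}u&v\\-\overline v&\overline u\end{smallmatrix}\big)$ with $u=\cos\theta\,e^{\imath\alpha}$, $v=\sin\theta\,e^{\imath\gamma}$, a short computation collects the form into four ``directions''
\begin{equation*}
\vecz\cdot\T(g)\vecw=\overline u\,P+u\,Q+\overline v\,R-v\,S,
\end{equation*}
where $P,Q,R,S$ are explicit $g$-independent bilinear expressions in $(\vecz,\vecw)$.

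Next I would expand $(\overline uP+uQ+\overline vR-vS)^{n}$ by the multinomial theorem and integrate against the Haar measure $\mathrm dm=\tfrac1{2\pi^2}\sin\theta\cos\theta\,\mathrm d\theta\,\mathrm d\alpha\,\mathrm d\gamma$. The $\alpha$- and $\gamma$-integrations force the powers of $u$ and $\overline u$ (resp.\ of $v$ and $\overline v$) to coincide, which annihilates every odd $n$ and leaves, for $n=2k$, the Beta-integral value $\int u^p\overline u^p v^r\overline v^r\,\mathrm dm=\tfrac{p!\,r!}{(p+r+1)!}$. Summing the surviving terms over $p+r=k$ collapses by the binomial theorem to
\begin{equation*}
\int_{\mathrm{SU}(2)}(\vecz\cdot\T(g)\vecw)^{2k}\,\mathrm dm=\frac{(2k)!}{k!\,(k+1)!}\,(PQ-RS)^k,
\end{equation*}
the prefactor being the $k$-th Catalan number. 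The final step is the identification $PQ-RS=\varrho(\vecz,\vecw)$: substituting the entries of $\mathbf L$ one finds $P=\sum_{j=1}^4 z_j\overline{w_j}$ and $Q=\sum_{j=5}^8 z_j\overline{w_j}$, so $PQ$ is exactly the first bracket-product in Eq.~(\ref{def varrho(z,w)}), while $S$ equals the first factor and $-R$ the second factor of the second bracket-product, so that $-RS$ reproduces that term. Dividing by $(2k)!\,\hbar^{2k}$ and summing over $k$ then yields $\sum_k\varrho^k/(k!(k+1)!\hbar^{2k})=\Q{8}(\vecz,\vecw)$ from Eq.~(\ref{Bargmann transform of a coherent state}).

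I expect the combinatorial collapse in the penultimate display to be the crux: one must check that, after the monomial integrations, the double sum over $p,r$ reorganises into a single $k$-th power $(PQ-RS)^k$ with the Catalan prefactor, and separately verify the bookkeeping identity $PQ-RS=\varrho$ against the precise sign pattern of Eq.~(\ref{def varrho(z,w)}). Both are elementary but error-prone; everything else is routine once the form is written in the $u,v$ variables.
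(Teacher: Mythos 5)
Your proof is correct and follows essentially the same route as the paper's: expand the exponential in a Taylor series and integrate term by term, using orthogonality in the angular variables and the Beta integral $\int_0^{\pi/2}(\cos\theta)^{2p+1}(\sin\theta)^{2r+1}\,\mathrm d\theta = p!\,r!/\big(2(p+r+1)!\big)$ for the $m=8$ case. Your reorganization via $\vect a=\mathbf L\vecz$, $\vect b=\mathbf L\vecw$, the binomial collapse to the Catalan prefactor $(2k)!/(k!\,(k+1)!)$, and the identity $PQ-RS=\varrho(\vecz,\vecw)$ all check out; they simply carry out explicitly the ``more elaborated calculation'' that the paper's proof only sketches.
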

\begin{proof}
	%The main idea is the same for the three cases $m=2,4,8$. We describe the most complicated case $m=8$. 
	Use the Taylor series expansion for the exponential function in the integral on the right-hand side of Eqs. (\ref{integral expresion Q_m}). The last case $(n,m)=(5,8)$ requires a more elaborated calculation. For this case, 	integrate first with respect to $\beta$ and then with respect to $\alpha$. To perform the integration with respect to $\theta$ use the identity $\int_0^{\pi/2} (\cos \theta)^{2k+1} (\sin \theta)^{2(\ell -k) +1} \mathrm d\theta = (\ell - k)! k! / 2(\ell + 1)!$, which in turn can be obtained using the formula 2.511-4 of Ref. \cite{G-94}.
\end{proof}

We end this section by showing that the family of coherent states is complete
\begin{proposition}\label{complete}
	The family of coherent states forms a complete system in $L^2(\Sn{n})$.
\end{proposition}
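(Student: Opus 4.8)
The plan is to use the standard characterization of completeness: a family of vectors is complete in a Hilbert space precisely when the only vector orthogonal to every member of the family is the zero vector. Thus I would take $\Psi \in L^2(\Sn{n})$ with $\langle \Psi, \csz\rangle_{\Sn{n}} = 0$ for every $\vecz \in \mathbb C^m - \{0\}$ and aim to deduce that $\Psi = 0$.

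The key observation is that this orthogonality condition is exactly the vanishing of the transform $\bts{n}\Psi$. Indeed, by Eq. (\ref{Bargmann = transform cs}) one has $\bts{n}\Psi(\vecz) = \langle \Psi, \csz\rangle_{\Sn{n}}$, so the hypothesis says $\bts{n}\Psi(\vecz) = 0$ for all $\vecz \in \mathbb C^m - \{0\}$. Since $\bts{n}\Psi$ lies in $\F_m \subset \bs{m}$, it is an entire function of $\vecz$, in particular continuous; as $\mathbb C^m - \{0\}$ is dense in $\mathbb C^m$, this forces $\bts{n}\Psi \equiv 0$ on all of $\mathbb C^m$. (Alternatively, the single point $\vecz = 0$ has measure zero for $\du{m}$, so one may ignore it from the outset.)

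Next I would invoke the property, due to Villegas-Blas and recalled above, that $\bts{n}$ is a unitary transformation from $L^2(\Sn{n})$ onto its range $\F_m$. In particular $\bts{n}$ is injective---equivalently, it is norm-preserving, so that $\|\Psi\|_{L^2(\Sn{n})}^2 = \int_{\mathbb C^m} |\bts{n}\Psi(\vecz)|^2 \, \du{m}(\vecz)$. Either formulation shows that $\bts{n}\Psi = 0$ yields $\Psi = 0$, which is what we want; hence the family $\{\csz : \vecz \in \mathbb C^m - \{0\}\}$ is complete.

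The argument is short because all the analytic difficulty is already packaged into the unitarity of $\bts{n}$. The only point that requires a moment's care---and the closest thing to an obstacle---is that the coherent states are indexed not by arbitrary points of $\mathbb C^{n+1}$ but by the image $\rho_{(n,m)}(\mathbb C^m - \{0\})$ lying on the quadric $\Qua^n$; one must make sure that letting $\vecz$ range over all of $\mathbb C^m - \{0\}$ probes the full transform $\bts{n}\Psi$. This is precisely guaranteed by the identity (\ref{Bargmann = transform cs}), which expresses $\bts{n}\Psi$ directly as a function of $\vecz$ rather than of $\rho_{(n,m)}(\vecz)$.
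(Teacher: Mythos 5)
Your proof is correct and rests on exactly the same two ingredients as the paper's: the unitarity of $\bts{n}$ onto $\F_m$ and the identity $\bts{n}\Psi(\vecz)=\langle\Psi,\csz\rangle_{\Sn{n}}$ of Eq. (\ref{Bargmann = transform cs}). The paper phrases the conclusion as a resolution-of-identity formula $\langle\phi,\psi\rangle_{\Sn{n}}=\int_{\mathbb C^m}\langle\phi,\csz\rangle_{\Sn{n}}\langle\csz,\psi\rangle_{\Sn{n}}\du{m}(\vect z)$, from which completeness follows by taking $\phi=\psi$ orthogonal to all coherent states, whereas you argue directly via injectivity of the unitary transform; these are the same argument in slightly different packaging.
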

\begin{proof}
	Let $\phi,\psi \in L^2(\Sn{n})$, since the transformation $\bts{n}$ is unitary and Eq. (\ref{Bargmann = transform cs})
	\begin{equation*}
	\left \langle \phi,\psi\right\rangle_{\Sn{n}} = \left \langle \bts{n} \phi, \bts{n} \psi\right\rangle_{\F_m} = \int_{\mathbb C^m} \langle \phi, \csz\rangle_{\Sn{n}} \langle \csz, \psi \rangle_{\Sn{n}} \du{m}(\vect z).
	\end{equation*}
\end{proof}

\section{Berezin symbolic calculus}\label{section Berezin symbolic calculus}
In order to introduce our star product we consider the following:
%According to Berezin's theory (see Ref. \cite{B-74}), since the Bargmann trasform is a unitary operator, Eq. (\ref{Bargmann = transform cs})  and  Proposition \ref{complete}, we may consider the following
\begin{definition}
	The Berezin symbol of a bounded linear operator $A$ with domain in $L^2(\Sn{n})$ is defined, for every $\vecz \in \mathbb C^m$, by
	\begin{equation}\label{definition Berezin transform}
	\B{n}{m}(A)(\vecz)=\frac{\left\langle A\csz,\csz\right\rangle_{{\Sn{n}}}}{\left\langle \csz,\csz\right\rangle_{{\Sn{n}}}}\;.
	%\; \;\;\mbox{with } \;\balpha=\rho_{(n,m)}(\vecz).
	\end{equation}
	We  will denote by $\A{n}{m}$ the algebra of these Berezin symbols, that is
	\begin{displaymath}
	\A{n}{m}=\left\{\B{n}{m}(A)\;|\; A\in \mathbf B (L^2(\Sn{n}))\right\}.
	\end{displaymath}
\end{definition}

\noindent Example. Let's us consider the operators $\mathbf A_\ell$, $\ell=1,\ldots, n+1$, for which the coherent states $\csz$, $\vect z \in \mathbb C^m$, are their eigenfunctions with eigenvalue $(\overline{\rho_{(n,m)}(\vect z)})_\ell$ (see appendix \ref{appendix eigenfunctions cs} for details). From Eq. (\ref{definition Berezin transform}) it's not difficult to show
\begin{equation*}
\B{n}{m}(\mathbf A_\ell)(\vect z)= (\overline{\rho_{(n,m)}(\vect z)})_\ell\;, \quad \B{n}{m}(\mathbf A_\ell^*)(\vect z)= ({\rho_{(n,m)}(\vect z)})_\ell\;,
\end{equation*}
where $\mathbf A_\ell^*$ denotes the adjoint of the operator $\mathbf A_\ell$. 

Thus, by taking appropriate compositions of the operators $\mathbf A_\ell, \mathbf A_j^*$, $\ell,j=1,\ldots,n+1$, it can be shown that
\begin{align*}
\mathfrak E_{(n,m)}:=\left\{(\overline{\rho_{(n,m)}(\vect z)})^{\vect s} (\rho_{(n,m)}(\vect z))^{\vect k} \;|\; \vect k, \vect s \in \mathbb Z_+^{n+1}\right\} \subset \A{n}{m}.
\end{align*}
 
Moreover, since the map $\rho_{(n,m)}$ is invariant under the action of the group $\mathcal G_m = \mathbb Z_2$, $\Sn{1}$ and $\mathrm{SU}(2)$ on $\mathbb C^2$, $\mathbb C^4$ and $\mathbb C^8$, respectively (see Eqs. (\ref{z'=T(g)z m=2}), (\ref{z'=T(g)z m=4}) and (\ref{z'=T(g)z m=8})), we can prove that the space $\mathcal P_{inv}$ of all invariant polynomials, in the variables $\vect z$ and $\overline{ \vect z}$, under the action of the group $\mathcal G_m$ on $\mathbb C^m$ belong to $\A{n}{m}$. We will not go into great details of the proof and we will only sketch the structure of it. In order to show this assert, we first show that all polynomials in $\mathcal P_{inv}$ have even degree. 

For $\vect k,\vect s \in \mathbb Z_+^{m}$, let us define the function $P_{\vect k,\vect s}:\mathbb C^{m}\to \mathbb C$ by $P_{\vect k,\vect s}(\vect w):= \vect w^{\vect k} \overline{\vect w}^{\vect s}$.  If we assume that  $P_{\vect k,\vect s} \in \mathcal P_{inv}$, then $P_{\vect k,\vect s}=(P_{\vect k,\vect s})_{ave}$, with $(P_{\vect k,\vect s})_{ave}$ denoting the average of the polynomial $P_{\vect k,\vect s}$ under the group $\mathcal G_m$, i.e.
\begin{align}
P_{\vect k,\vect s}(\vect w) & =\frac{1}{2}\big(P_{\vect k,\vect s}(\vect w)+P_{\vect k,\vect s}(-\vect w)\big) \;,\quad for \; m=2,\nonumber\\
P_{\vect k,\vect s}(\vect w) & =\frac{1}{2\pi} \int_0^\pi P_{\vect k,\vect s}(\T(g(\psi))\vect w) \mathrm d\psi \;,\quad for \; m=4,\label{polynomial invariant case m=4}\\
P_{\vect k,\vect s}(\vect w) & =  \int_{\theta=0}^{\pi/2}\int_{\alpha=0}^{2\pi}
\int_{\gamma=0}^{2\pi} P_{\vect k,\vect s}(\T(g(\theta,\alpha,\gamma))\vect w) \mathrm dm(\theta,\alpha,\gamma) \;,\quad for \; m=8, \label{polynomial invariant case m=8}
\end{align}
with $\T(g)$ given by the action of $\mathcal G_m$ on $\mathbb C^m$ indicated in Eqs. (\ref{z'=T(g)z m=2}), (\ref{z'=T(g)z m=4}) and (\ref{z'=T(g)z m=8}). We assert that if $P_{\vect k, \vect s}$ is odd then it must be equal to the zero polynomial. 
%We only describe the most complicated case $m=8$. 
Using the explicit expression of $\T(g(\psi))$, $\T(g(\TAG))$ (see Eqs. (\ref{z'=T(g)z m=4}), (\ref{z'=T(g)z m=8})) on the right-hand side of Eqs. (\ref{polynomial invariant case m=4}), (\ref{polynomial invariant case m=8}) and integrating the result with respect to the variable $\psi$, $\theta$, respectively,  we obtain that  the right-hand side of Eqs. (\ref{polynomial invariant case m=4}), (\ref{polynomial invariant case m=8}) must be equal to zero, where for the case $m=8$ we have used the formulas 2.511-1, 2.511-3, 2.512-1 and 2.512-3 of Ref. \cite{G-94}.

Let $\ell \in \mathbb Z_+$, since the set of polynomials $\{P_{\vect k,\vect s}\;|\; \vect k,\vect s \in \mathbb Z_+^{m}\;,\; |\vect k|+|\vect s|=2 \ell\}$, is a basis of the space $\mathcal W_{2\ell}$ of homogeneous polynomials of degree $2 \ell$ in the complex variable $\vect w, \overline{\vect w} \in\mathbb C^{m}$, then the set of polynomials
$\left\{(P_{\vect k,\vect s})_{ave}\;|\; \vect k,\vect s \in \mathbb Z_+^{m}\;,\; |\vect k|+|\vect s|=2\ell\right\}$%, with $(P_{\vect k,\vect s})_{ave}$ denoting the average of the polynomial $P_{\vect k,\vect s}$ under the group $\mathcal G_m$, 
is a basis of the subspace of $\mathcal W_{2\ell}$ consisting of all invariant polynomials under the group $\mathcal G_m$. 
Moreover, by induction and algebraic manipulations it can be proved that $(P_{\vect k, \vect s})_{ave}$, $|\vect k|+|\vect s|=2\ell$, belongs to the span of the set $\mathcal S_{(n,m)}$ defined by
\begin{align*}
\mathcal S_{(2,2)} & = \left\{(z_1^2)^{a_1}(z_2^2)^{a_2} (z_1z_2)^{a_3}\;|\; a_j \in \mathbb Z_+,\; j=1,2,3,\; a_1+a_2+a_3=\ell\right\},\\
\mathcal S_{(3,4)} & =\big\{(z_1z_3)^{a_1}(z_2z_4)^{a_2} (z_1z_4)^{a_3}(z_2z_3)^{a_4}\;|\; a_j \in \mathbb Z_+, \;j=1,\ldots,4,\; a_1+\ldots+a_4=\ell\big\},\\
\mathcal S_{(5,8)} & = \big\{(z_3z_8+z_2z_5)^{a_1} (z_1z_6+z_4z_7)^{a_2} (z_2z_6+z_4z_8)^{a_3}(z_3z_7-z_1z_5)^{a_4} (z_1z_8+z_2z_7)^{a_5}(z_3z_6+z_4z_5)^{a_6} \;|\; a_j \in \mathbb Z_+,\\
& \hspace{0.5cm}\; j=1,\ldots,6,\; a_1+\ldots+a_6=\ell \big\}.
\end{align*}
Finally, since the functions in $\mathcal S_{(n,m)}$ belong to the span of the set $\mathfrak E_{(n,m)}$ we have $\mathcal P_{inv}$ belongs to $\A{n}{m}$.

From Eq. (\ref{inner product cs = Bessel}) we have 
\begin{equation*}
||\csz||_{\Sn{n}}^2= %||\Q{m}(\cdot,\rho_{(n,m)}(\vecz))||^2_{\F_m}=
\Gamma\left(\frac{n-1}{2}\right) \left(\frac{|\rho_{(n,m)}(\vect z)|^2} {2\hbar^2}\right)^{\frac{3-n}{4}} \mathrm I_{\frac{n-3}{2}}\left(\frac{\sqrt 2|\rho_{(n,m)}(\vecz)|}{\hbar}\right)>0,
\end{equation*}
hence the functions $\csz $ are continuous, i.e. the map $\vecz \mapsto |\csz|$ is continuous. Therefore, if $A:L^2(\Sn{n})\to L^2(\Sn{n})$ is a bounded linear operator, its Berezin symbol can be extended uniquely to a function defined on a neighbourhood of the diagonal in $\mathbb C^m \times \mathbb C^m$  in such a way that it is holomorphic in the first factor and anti-holomorphic in the second. In fact, such an extension is given explicitly by
\begin{equation}\label{extended covariant symbol}
\B{n}{m}(A)(\vecw,\vecz):=\frac{\langle A\csz,\csw\rangle_{{\Sn{n}}}}{\langle\csz,\csw\rangle_{{\Sn{n}}}}\;.
\end{equation}
\begin{remark}
	By Eq. (\ref{inner product cs = Bessel}), the extended Berezin symbol has singularities at the zeros of the modified Bessel function $\mathrm I_{\frac{n-3}{2}}(z)$, which are well known (see Ref. \cite{L-65} Sec. 5.13) and at $\rho_{(n,m)}(\vecw) \cdot\rho_{(n,m)}(\vecz)=0$. %Then the extended Berezin symbol is defined on $\mathbb C^n \times \mathbb C^n \setminus \mathcal S$, with
	%\begin{equation*}
	%\mathcal S=\left\{(\vecw,\vecz)\in \mathbb C^n \times \mathbb C^n\;|\; \vecw \cdot\vecz=0 \mbox{ or } \frac{2\sqrt{\vecw\cdot\vecz}}{\hbar}=\imath \lambda \mbox{ with } \lambda\in \mathbb R,\; \mathrm I_{n+p-1}(\imath\lambda)=0 \right\}.
	%\end{equation*}
	%where $\lambda$ is a negative real  number that satisfies  $\mathrm I_{n+p-1}(\imath\lambda)=0$.
\end{remark}

%From Eq. (\ref{extended covariant symbol})  some nice properties of the extended Berezin symbol are followed directly: it is linear, the identity operator corresponds to the unit constant and Hermitian conjugation of operators corresponds to complex conjugation of symbols. Moreover, 

We now give the rules for symbolic calculus

\begin{proposition}\label{proposition rules for Berezin}
%	\begin{enumerate}
%		\item [a)] The Berezin symbol 
%		\item [c)] 
Let $A, B$ be bounded linear operators with domain in $L^2(\Sn{n})$. Then for $\vecz,\vecw \in \mathbb C^m$ and $\phi\in L^2(\Sn{n})$ we have
		\begin{align}
		\B{n}{m}(\mathrm{Id})& = 1\;, \mbox{with $\mathrm{Id}$ denoting the identity operator,}\nonumber\\
		\B{n}{m}(A^*) (\vect z,\vect w) & = \overline{\B{n}{m}(A)(\vect w,\vect z)}\;,\nonumber\\
		\bts{n}(A\phi)(\vect z) & = %\frac{1}{(\pi\hbar)^m}
		\int_{\mathbb C^m} \bts{n} \phi(\vect u) \B{n}{m}(A)(\vect z,\vect u) \Q{m}(\vect z,\vect u)\du{m}(\vect u).\label{Eq 4}\\
		\B{n}{m}(AB)(\vect z,\vect w) & = \int_{\mathbb C^m}  \B{n}{m}(B)(\vect u,\vect w) \B{n}{m}(A)(\vect z,\vect u) %\nonumber\\
		%& \hspace{2cm} \times 
		\frac{ \Q{m}(\vect z,\vect u) \Q{m}(\vect u,\vect w)}{\Q{m}(\vect z,\vect w)}\; \du{m}(\vect u)%\mathrm d\vecu \mathrm d\overline\vecu\;
		,\label{Eq 3}
		\end{align}
%	\end{enumerate}	
\end{proposition}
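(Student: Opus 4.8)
The plan is to reduce all four identities to the defining Eq.~(\ref{extended covariant symbol}), the reproducing-kernel identity Eq.~(\ref{inner product cs = reproducing kernel}), and the resolution of the identity furnished by Proposition~\ref{complete}. The first two identities are purely formal. For the unit, the extended symbol of $\mathrm{Id}$ is $\langle \csz,\csw\rangle_{\Sn{n}}/\langle \csz,\csw\rangle_{\Sn{n}}=1$, which restricts to the constant $1$ on the diagonal. For the adjoint I would push $A^{*}$ across the inner product and conjugate: since $\B{n}{m}(A^{*})(\vect z,\vect w)=\langle A^{*}\csw,\csz\rangle_{\Sn{n}}/\langle \csw,\csz\rangle_{\Sn{n}}$, and $\langle A^{*}\csw,\csz\rangle_{\Sn{n}}=\overline{\langle A\csz,\csw\rangle_{\Sn{n}}}$ while $\langle \csw,\csz\rangle_{\Sn{n}}=\overline{\langle \csz,\csw\rangle_{\Sn{n}}}$, the ratio is exactly the complex conjugate of $\B{n}{m}(A)(\vect w,\vect z)$.

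Before treating the two integral formulas I would record the bookkeeping that does all the work. Eq.~(\ref{inner product cs = reproducing kernel}) gives $\langle \cs u,\csz\rangle_{\Sn{n}}=\Q{m}(\vect z,\vect u)$, $\langle \csw,\cs u\rangle_{\Sn{n}}=\Q{m}(\vect u,\vect w)$ and $\langle \csw,\csz\rangle_{\Sn{n}}=\Q{m}(\vect z,\vect w)$, while Eq.~(\ref{extended covariant symbol}) lets me factor operator matrix elements between two coherent states as $\langle A\cs u,\csz\rangle_{\Sn{n}}=\B{n}{m}(A)(\vect z,\vect u)\,\Q{m}(\vect z,\vect u)$ and $\langle B\csw,\cs u\rangle_{\Sn{n}}=\B{n}{m}(B)(\vect u,\vect w)\,\Q{m}(\vect u,\vect w)$. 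For Eq.~(\ref{Eq 4}) I start from $\bts{n}(A\phi)(\vect z)=\langle A\phi,\csz\rangle_{\Sn{n}}=\langle \phi,A^{*}\csz\rangle_{\Sn{n}}$ via Eq.~(\ref{Bargmann = transform cs}), insert the resolution of the identity of Proposition~\ref{complete} into the pair $(\phi,A^{*}\csz)$ to get $\int_{\mathbb C^{m}}\langle \phi,\cs u\rangle_{\Sn{n}}\,\langle \cs u,A^{*}\csz\rangle_{\Sn{n}}\,\du{m}(\vect u)$, and then move $A^{*}$ back through the adjoint to recognize $\langle \cs u,A^{*}\csz\rangle_{\Sn{n}}=\langle A\cs u,\csz\rangle_{\Sn{n}}$. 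Using $\bts{n}\phi(\vect u)=\langle \phi,\cs u\rangle_{\Sn{n}}$ and the factorization above reproduces the integrand of Eq.~(\ref{Eq 4}) term by term.

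For the composition rule Eq.~(\ref{Eq 3}) the strategy is identical. I would write $\B{n}{m}(AB)(\vect z,\vect w)=\langle AB\csw,\csz\rangle_{\Sn{n}}/\langle \csw,\csz\rangle_{\Sn{n}}=\langle B\csw,A^{*}\csz\rangle_{\Sn{n}}/\Q{m}(\vect z,\vect w)$, apply the resolution of the identity to the numerator $\langle B\csw,A^{*}\csz\rangle_{\Sn{n}}$, and split it as $\int_{\mathbb C^{m}}\langle B\csw,\cs u\rangle_{\Sn{n}}\,\langle \cs u,A^{*}\csz\rangle_{\Sn{n}}\,\du{m}(\vect u)$. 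Rewriting $\langle \cs u,A^{*}\csz\rangle_{\Sn{n}}=\langle A\cs u,\csz\rangle_{\Sn{n}}$ and inserting the two factorizations turns the numerator into $\int_{\mathbb C^{m}}\B{n}{m}(B)(\vect u,\vect w)\,\B{n}{m}(A)(\vect z,\vect u)\,\Q{m}(\vect u,\vect w)\,\Q{m}(\vect z,\vect u)\,\du{m}(\vect u)$; dividing by $\Q{m}(\vect z,\vect w)$ yields the asserted kernel.

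The only genuinely non-formal point is analytic. Proposition~\ref{complete} provides the resolution of the identity as a scalar integral identity valid for every pair of $L^{2}(\Sn{n})$ functions, so the decisive use of boundedness is simply that $A^{*}\csz$ and $B\csw$ again lie in $L^{2}(\Sn{n})$; inserting the resolution \emph{after} the operators have been moved onto coherent states means I never have to commute $A$ or $B$ with the integral, which removes any Fubini or dominated-convergence issue. What remains to be stated carefully is the domain of validity: by the Remark following Eq.~(\ref{extended covariant symbol}), the extended symbols and the quotient $\Q{m}(\vect z,\vect u)\Q{m}(\vect u,\vect w)/\Q{m}(\vect z,\vect w)$ are singular where $\Q{m}(\vect z,\vect w)$ or the relevant Bessel factor vanishes, so the identities are to be read for $(\vect z,\vect w)$ in the neighbourhood of the diagonal on which all three extended symbols are defined. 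I expect the main obstacle to be nothing deeper than keeping the slot ordering consistent throughout, since each extended symbol is holomorphic in its first entry and anti-holomorphic in its second.
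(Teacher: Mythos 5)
Your proposal is correct and follows essentially the same route as the paper's proof: the first two identities directly from the definition in Eq.~(\ref{extended covariant symbol}), and Eqs.~(\ref{Eq 4})--(\ref{Eq 3}) by writing $\bts{n}(A\phi)(\vect z)=\langle\phi,A^{*}\csz\rangle_{\Sn{n}}$ (resp.\ $\B{n}{m}(AB)(\vect z,\vect w)=\langle B\csw,A^{*}\csz\rangle_{\Sn{n}}/\Q{m}(\vect z,\vect w)$), inserting the resolution of the identity from Proposition~\ref{complete}, and recognizing the factors via Eqs.~(\ref{Bargmann = transform cs}) and (\ref{inner product cs = reproducing kernel}). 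Your explicit bookkeeping of the slot ordering and the remark on boundedness merely spell out steps the paper leaves implicit.
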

\begin{proof}
	The first two equalities follow from the definition of Berezin symbol (see Eq. (\ref{definition Berezin transform})). Eq. (\ref{Eq 4}) follows from 
	Eq. (\ref{Bargmann = transform cs}), Proposition \ref{complete} and the definition of extended Berezin symbol (see Eq. (\ref{extended covariant symbol}))
	\begin{equation*}
	\bts{n}(A \phi)(\vect z)= \left\langle \phi, A^*\cs{z}\right\rangle_{\Sn{n}}= \int_{\mathbb C^m} \bts{n} \phi(\vect u) \B{n}{m}(A)(\vect z,\vect u) \Q{m}(\vect z,\vect u)\du{m}(\vect u).
	\end{equation*}
	The formula in Eq. (\ref{Eq 3}) follows from Proposition \ref{complete} and Eq. (\ref{extended covariant symbol}).
%	This is a direct consequence of the formulas in Ref. \cite{B-74}, the unitarity of the Bargmann transform $\bts{n}$, Eq. (\ref{btsn csw = rkf}) and the reproducing property of $\Q{m}$ (see Eq. (\ref{reproducing property Qm})). %the equality $\bts{n}\Phi_{\rho_{(n,m)}(\mathbf v)}^{(\hbar)}(\mathbf s) =\Q{m}(\mathbf s,\mathbf v)$ for all $\mathbf v, \mathbf s \in \mathbb C^m$.
	%\textcolor{red}{The proof  duplicates the standard proof of the existence of a Bergman kernel function}
\end{proof}

\begin{corollary}\label{corollary}
	Let $A:L^2(\Sn{n})\to L^2(\Sn{n})$ be a bounded linear operator, and define the function on $\mathbb C^m \times \mathbb C^m$,
	\begin{displaymath}
	\mathfrak K_A(\vecz,\vecw):=\big\langle A\csw,\csz\big\rangle_{\Sn{n}}.
	\end{displaymath}
%	Let $A^{\sharp}$ be the operator on $L^2(\mathbb C^n,\du{m})$ with Schwartz kernel $\mathfrak K_A$.
%	Then $\bts{n} A(\bts{n})^{-1}$ is a integral operator on $\F_m$ with Schwartz kernel $\mathfrak K_A$. Namely, for all $f\in \F_m\;,\;\vecz \in \mathbb C^m$ we have
%	\begin{equation} \label{BAB-1=integral operator}
	%\bts{n} A(\bts{n})^{-1} f(\vecz)=\int_{\mathbb C^m} f(\vecw) \mathfrak K_A(\vecz,\vecw) \du{m}(\vecw).
	%\end{equation}
	Let $A^\sharp$ be the operator on $\F_m$ with Schwarz kernel $\mathfrak K_A$. Then
		\begin{equation}\label{BAB-1=integral operator}
		A^\sharp f(\vecz)=\bts{n} A(\bts{n})^{-1} f(\vecz) \;\hspace{0.5cm}\forall f\in \F_m\;,\;\vecz \in \mathbb C^m.
		\end{equation}
\end{corollary}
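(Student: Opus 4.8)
The plan is to read (\ref{BAB-1=integral operator}) as a direct reformulation of the transport identity (\ref{Eq 4}) already established in Proposition \ref{proposition rules for Berezin}. Recall that ``$A^\sharp$ is the operator on $\F_m$ with Schwarz kernel $\mathfrak K_A$'' means that $A^\sharp$ acts by $A^\sharp f(\vecz)=\int_{\mathbb C^m}\mathfrak K_A(\vecz,\vecw)f(\vecw)\,\du{m}(\vecw)$ for $f\in\F_m$. So the whole point is to show that the integrand on the right-hand side of (\ref{Eq 4}) is precisely $\mathfrak K_A(\vecz,\vecw)f(\vecw)$ after the natural identification of $f$ with a function on the sphere.

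First I would fix $f\in\F_m$ and, using that $\bts{n}$ is a unitary transformation onto $\F_m$, write $f=\bts{n}\phi$ with $\phi:=(\bts{n})^{-1}f\in L^2(\Sn{n})$; by (\ref{Bargmann = transform cs}) this gives $f(\vecw)=\bts{n}\phi(\vecw)=\langle\phi,\csw\rangle_{\Sn{n}}$. Next I would record the single algebraic simplification that the proof needs: combining the definition (\ref{extended covariant symbol}) of the extended symbol with the identification (\ref{inner product cs = reproducing kernel}), namely $\Q{m}(\vecz,\vecw)=\langle\csw,\csz\rangle_{\Sn{n}}$, the product of the extended symbol and the reproducing kernel appearing in (\ref{Eq 4}) collapses to
\begin{equation*}
\B{n}{m}(A)(\vecz,\vecw)\,\Q{m}(\vecz,\vecw)=\langle A\csw,\csz\rangle_{\Sn{n}}=\mathfrak K_A(\vecz,\vecw).
\end{equation*}
Substituting this identity together with $f=\bts{n}\phi$ into (\ref{Eq 4}) (relabelling its integration variable $\vecu$ as $\vecw$) turns the right-hand side into $\int_{\mathbb C^m}\mathfrak K_A(\vecz,\vecw)f(\vecw)\,\du{m}(\vecw)=A^\sharp f(\vecz)$, while the left-hand side $\bts{n}(A\phi)(\vecz)$ equals $\bts{n}A(\bts{n})^{-1}f(\vecz)$ by the choice of $\phi$. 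This is exactly (\ref{BAB-1=integral operator}). If one preferred to avoid citing (\ref{Eq 4}), the same conclusion follows by writing $\bts{n}(A\phi)(\vecz)=\langle A\phi,\csz\rangle_{\Sn{n}}=\langle\phi,A^*\csz\rangle_{\Sn{n}}$ and inserting the completeness relation of Proposition \ref{complete} with $\psi=A^*\csz$.

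The computation is therefore essentially immediate once (\ref{Eq 4}) is available, and I expect the only delicate point to be the well-definedness of $A^\sharp$: that the kernel integral converges and defines a bounded operator on $\F_m$. This, however, is handled for free, because (\ref{Eq 4}) already asserts that the same integral reproduces the values of the genuinely bounded operator $\bts{n}A(\bts{n})^{-1}$ (a composition of the unitary $\bts{n}$, the bounded $A$, and the unitary $(\bts{n})^{-1}$); the convergence built into (\ref{Eq 4}) through the completeness relation of Proposition \ref{complete} is all that is required, so no separate estimate on $\mathfrak K_A$ is needed.
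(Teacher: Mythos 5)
Your proposal is correct and takes essentially the same route as the paper's own proof: the paper also sets $\phi=(\bts{n})^{-1}f$ and obtains Eq. (\ref{BAB-1=integral operator}) by combining Eq. (\ref{Eq 4}) with Eqs. (\ref{extended covariant symbol}) and (\ref{inner product cs = reproducing kernel}), i.e.\ by the same collapse $\B{n}{m}(A)(\vecz,\vecu)\,\Q{m}(\vecz,\vecu)=\mathfrak K_A(\vecz,\vecu)$ that you identify as the single algebraic step.
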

\begin{proof} 	
	Let $f \in \F_m$, and $\phi=(\bts{n})^{-1} f$. We obtain Eq. (\ref{BAB-1=integral operator}) from Eqs. (\ref{Eq 4}),  (\ref{extended covariant symbol}) and (\ref{inner product cs = reproducing kernel}).
	%Then from Eqs. (\ref{Eq 4}) and (\ref{extended covariant symbol})
	%\begin{equation*}
	%\bts{n} A(\bts{n})^{-1} f(\vecz) = \int_{\mathbb C^m} f(\vect w) \mathfrak K_A(\vect z,\vect w) \du{m}.
	%\end{equation*}
\begin{comment}
	\begin{align*}
	A^\sharp f(\vect z) & = \int_{\mathbb C^m} \left\langle \phi, \csw \right\rangle_{\Sn{n}}\left\langle A\csw,\csz\right\rangle_{\Sn{n}} \du{m}(\vect w)\nonumber\\
	& = \left\langle\int_{\mathbb C^m} \left\langle \phi, \csw \right\rangle_{\Sn{n}}A\csw \du{m}(\vect w), \csz\right\rangle_{\Sn{n}}.
	\end{align*}
	Applying $A$ to both sides of (), we see that the first entry in the last inner product is $A(\psi)$, and therefore
	
	We obtain Eq. (\ref{BAB-1=integral operator}) from Eqs. (\ref{Eq 4}),  (\ref{extended covariant symbol}) and (\ref{inner product cs = reproducing kernel}).%  (\ref{btsn csw = rkf}) and (\ref{Bargmann = transform cs}).
	%	\begin{align*}
	%	\bts{n} A (\bts{n})^{-1} f (\vecz)= \int_{\mathbb C^m} f(\vecw) \mathfrak K_A(\vecz,\vecw) \du{m}(\vecw).
	%	\end{align*}
\end{comment}	
\end{proof}

Now we show some properties of the extended Berezin symbol that we will use in the next section to obtain the asymptotic expansion of the star-product.

\begin{proposition}\label{proposition extended berezin symbol invariant}
	Let $A$ be a bounded linear operator on $L^2(\Sn{n})$, $n=2,3,5$ and $\vecz,\vecw \in \mathbb C^m$, $m=2,4,8$, respectively. Then
	\begin{equation}\label{extended berezin symbol invariant}
	\B{n}{m}(A)\big(\T(g)\vecw,\T(\tilde g)\vecz\big)=\B{n}{m}(A)(\vecw,\vecz)\;, \hspace{1cm}\forall g,\tilde g \in \mathcal G_m
	\end{equation}
	where  $\T(g), \T(\tilde g)$ are given by the action of $\mathcal G_m$ %=\mathbb Z_2, S^1, \mathrm{SU}(2)$
	on $\mathbb C^m$%, \mathbb C^4,\mathbb C^8$ respectively 
	(see Eqs. (\ref{z'=T(g)z m=2}), (\ref{z'=T(g)z m=4}) and (\ref{z'=T(g)z m=8})). 
	\begin{comment}
	In particular for $\vect z$ fixed, the extended Berezin symbol $\B{n}{m}(A)(\vecw,\vecz)$ is invariant under the action of the group $\mathcal G_m$, i.e
	\begin{equation}\label{ebs invariant one variable}
	\B{n}{m}(A)(\T(g)\vecw,\vecz)=\B{n}{m}(A)(\vecw,\vecz)\;, \hspace{0.5cm}\forall \vect w \in \mathbb C^m\;,\; \forall g, \in \mathcal G_m\;.
	\end{equation}
	A similar result is obtained by freezing the first variable in the extended Berezin symbol.
	\end{comment}
\end{proposition}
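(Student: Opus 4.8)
The plan is to reduce the asserted invariance of the extended Berezin symbol entirely to the $\mathcal G_m$-invariance of the map $\rho_{(n,m)}$ already recorded in Sec. \ref{section bt + cs}. The essential observation is that the symbol $\B{n}{m}(A)(\vecw,\vecz)$ depends on its arguments $\vecw,\vecz \in \mathbb C^m$ only through the coherent states $\csw$ and $\csz$, and these in turn depend on $\vecw,\vecz$ only through the values $\rho_{(n,m)}(\vecw)$ and $\rho_{(n,m)}(\vecz)$.

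First I would recall from Eq. (\ref{coherent state}) that the coherent state $\csz$ is by definition $\Phi_{\rho_{(n,m)}(\vecz)}^{(\hbar)}$, so that it is a function of the label $\vecz$ solely through the composite $\rho_{(n,m)}(\vecz)\in \Qua^n$. The key structural input is the $\mathcal G_m$-invariance of $\rho_{(n,m)}$ noted just after its definition, namely $\rho_{(n,m)}(\T(g)\vecz)=\rho_{(n,m)}(\vecz)$ for every $g\in\mathcal G_m$ and every $\vecz\in\mathbb C^m$; this identity is the case-by-case consequence of the explicit formulas (\ref{definition rho (2,2)})--(\ref{definition rho (5,8)}) together with (\ref{z'=T(g)z m=2})--(\ref{z'=T(g)z m=8}). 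Combining the two remarks, for all $g,\tilde g \in \mathcal G_m$ one obtains the equalities of coherent states
\begin{equation*}
\Phi_{\rho_{(n,m)}(\T(g)\vecw)}^{(\hbar)}=\csw \qquad\text{and}\qquad \Phi_{\rho_{(n,m)}(\T(\tilde g)\vecz)}^{(\hbar)}=\csz .
\end{equation*}

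Finally, I would substitute these identities into the defining formula (\ref{extended covariant symbol}) for the extended symbol evaluated at $(\T(g)\vecw,\T(\tilde g)\vecz)$: both the numerator $\langle A\,\Phi_{\rho_{(n,m)}(\T(\tilde g)\vecz)}^{(\hbar)},\Phi_{\rho_{(n,m)}(\T(g)\vecw)}^{(\hbar)}\rangle_{\Sn{n}}$ and the denominator $\langle \Phi_{\rho_{(n,m)}(\T(\tilde g)\vecz)}^{(\hbar)},\Phi_{\rho_{(n,m)}(\T(g)\vecw)}^{(\hbar)}\rangle_{\Sn{n}}$ then coincide term by term with the corresponding quantities $\langle A\csz,\csw\rangle_{\Sn{n}}$ and $\langle\csz,\csw\rangle_{\Sn{n}}$ attached to $(\vecw,\vecz)$, whence Eq. (\ref{extended berezin symbol invariant}) is immediate. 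There is no genuine analytic obstacle here; the only point demanding care is the verification of $\rho_{(n,m)}\circ\T(g)=\rho_{(n,m)}$, which is precisely the invariance already built into the construction of Sec. \ref{section bt + cs} and may therefore be taken as given.
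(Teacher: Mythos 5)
Your proof is correct, but it follows a genuinely different and more elementary route than the paper's. The paper does not argue at the level of coherent-state labels; it works on the Bargmann side: it uses that the elements of $\F_m$ are invariant under $\mathcal G_m$, that $\Q{m}(\cdot,\vecv)=\bts{n}\cs{v}\in\F_m$, and the conjugate symmetry $\Q{m}(\vecs,\vecv)=\overline{\Q{m}(\vecv,\vecs)}$ to get invariance of $\Q{m}$ in the second variable (Eq. (\ref{csf invariant second variable})); it then invokes Corollary \ref{corollary} to transfer this invariance to the operator $\bts{n}A(\bts{n})^{-1}$ (Eq. (\ref{tbsn A tbsn inverse invariant})), and finally uses the unitarity of $\bts{n}$ to rewrite $\B{n}{m}(A)(\vecw,\vecz)$ as a quotient of $\F_m$-inner products (Eq. (\ref{extended berezin symbol + csf})), from which Eq. (\ref{extended berezin symbol invariant}) follows. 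Your argument cuts beneath all of this machinery: since $\csz=\Phi_{\rho_{(n,m)}(\vecz)}^{(\hbar)}$ depends on its label only through $\rho_{(n,m)}(\vecz)$, and $\rho_{(n,m)}\circ\T(g)=\rho_{(n,m)}$ (stated explicitly in Sec. \ref{section bt + cs} and checkable case by case from Eqs. (\ref{definition rho (2,2)})--(\ref{definition rho (5,8)})), the coherent states themselves satisfy $\Phi_{\rho_{(n,m)}(\T(g)\vecw)}^{(\hbar)}=\csw$ and $\Phi_{\rho_{(n,m)}(\T(\tilde g)\vecz)}^{(\hbar)}=\csz$, so the numerator and denominator in Eq. (\ref{extended covariant symbol}) are literally unchanged; no unitarity, no Schwarz-kernel representation, and no properties of $\F_m$ are needed. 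What the paper's longer route buys is the intermediate identities (\ref{csf invariant second variable}) and (\ref{tbsn A tbsn inverse invariant}) in a form it reuses later (the remark on $\mathcal G_m$-invariance of the star product cites (\ref{csf invariant second variable}) directly), and it keeps the argument internal to the $\F_m$-picture used throughout Sec. \ref{section Berezin symbolic calculus}; but both of those identities also follow immediately from your observation, so your proof is a legitimate simplification.
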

%\begin{theorem}
%	Let $A$ be a bounded linear operator on $L^2(\Sn{n})$. Then for $\vecw\in \mathbb C^m$ fixed,  the extended Berezin symbol $\B{n}{m}(A)(\vecz,\vecw)$ is invariant under the action of the group $Z_2$, $S^1$ and $\mathrm{SU}(2)$ on $\mathbb C^m$ for $m=2,4,8$ respectively, i.e. $\B{n}{m}(\T(g)\vecz,\vecw)=\B{n}{m}(\vecz,\vecw)$, where $\T(g)$ is given by the action of $\mathcal G_m$ on $\mathbb C^m$ indicated in Eqs. (), () and ().
%\end{theorem}
\begin{proof}
	%Let $g$ in $Z_2, S^1,\mathrm{SU}(2)$ and $\T(g)$ the action on $\mathbb C^m$ for $m=2,4,8$ indicated in Eqs (), () and () respectively. 
	Since the elements of the Hilbert spaces $\F_m$ are  invariant functions in $\bs{m}$ under the action of the corresponding group $\mathcal G_m$, and $\Q{m}(\boldsymbol \cdot,\vecv)=\bts{n} \cs{\vect v} \in \F_m$, $\vect v \in \mathbb C^m$ (see Eqs. (\ref{Bargmann = transform cs}) and \ref{inner product cs = reproducing kernel})), we have
	%Since, for $\vect v \in \mathbb C^m$ fixed, the function $\Q{m}(\vecs,\vecv)=\bts{n} \cs{\vect v}(\vect s)$ is a vector in the space $\F_m$ and %in turn satisfies the equality  $\Q{m}(\vecs,\vecv)= \overline{\Q{m}(\vecv,\vecs)}$, then:
	\begin{equation}\label{csf invariant second variable}
	\Q{m}(\vecs,\T(g)\vecv)=\Q{m}(\vecs,\vecv), \hspace{0.5cm} \forall \vecs,\vecv \in \mathbb C^m,\;\forall g \in \mathcal G_m\;,
	\end{equation}
	where we have used $\Q{m}(\vecs,\vecv)= \overline{\Q{m}(\vecv,\vecs)}$. Thus, from  Corollary \ref{corollary} and Eq. (\ref{csf invariant second variable})
	\begin{equation}\label{tbsn A tbsn inverse invariant}
	\bts{n} A (\bts{n})^{-1} \Q{m}(\cdot,\T(g)\vecs)=\bts{n} A (\bts{n})^{-1} \Q{m}(\cdot,\vecs),\quad \forall \vect s \in \mathbb C^m.
	\end{equation}
	Since the transformation $\bts{n}$ is unitary, we obtain from Eq. (\ref{extended covariant symbol})% and (\ref{btsn csw = rkf})
	\begin{equation}\label{extended berezin symbol + csf}
	\B{n}{m}(A)(\vecw,\vecz)= \frac{\big \langle \bts{n} A (\bts{n})^{-1} \Q{m}(\cdot,\vecz), \Q{m}(\cdot,\vecw)\big\rangle_{\F_m} }{\big\langle\Q{m}(\cdot,\vecw),\Q{m}(\cdot,\vecz)\big\rangle_{\F_m}}\;.
	\end{equation}
	
	Therefore, from Eqs. (\ref{extended berezin symbol + csf}), (\ref{tbsn A tbsn inverse invariant}) and (\ref{csf invariant second variable}) we obtain Eq. (\ref{extended berezin symbol invariant}).
	
%	On the other hand, let $g_0$ be the identity element in $\mathcal G_m$. From the explicit expression for $\T(g)$ (see Eqs. (\ref{z'=T(g)z m=2}), (\ref{z'=T(g)z m=4}) and (\ref{z'=T(g)z m=8})), we obtain $\T(g_0)\vecv=\vecv$ for all $\vecv \in \mathbb C^m$. Thus, taking $\tilde g=g_0$ in Eq (\ref{extended berezin symbol invariant}), we obtain Eq. (\ref{ebs invariant one variable}).
\end{proof}

\begin{corollary}\label{extended berezin symbol invariant one variable}
	Let $A$ be a bounded linear operator on $L^2(\Sn{n})$, $n=2,3,5$, then for $\vecz\in \mathbb C^m$ fixed, $m=2,4,8$ respectively,  the extended Berezin symbol $\B{n}{m}(A)(\vecw,\vecz)$ is invariant under the action of the group $\mathcal G_m$ %\mathbb Z_2$, $S^1$ and $\mathrm{SU}(2)$
	on $\mathbb C^m$ (see Eqs. (\ref{z'=T(g)z m=2}), (\ref{z'=T(g)z m=4}) and (\ref{z'=T(g)z m=8})), %$\mathbb C^m$ for $m=2,4,8$ respectively,
	i.e. 
	\begin{equation}\label{ebs invariant one variable}
	\B{n}{m}(A)(\T(g)\vecw,\vecz)=\B{n}{m}(A)(\vecw,\vecz)\;,\hspace{0.5cm}\forall \vecw \in \mathbb C^m\;,\; \forall g \in \mathcal G_m\;.
	\end{equation}
	%for $g \in \mathcal G_m$: $\B{n}{m}(A)(\T(g)\vecw,\vecz)=\B{n}{m}(A)(\vecw,\vecz)$. %, where $\T(g)$ is given by the action of $\mathcal G_m$ on $\mathbb C^m$ indicated in Eqs. (), () and (). 
	A similar result is obtained by freezing the first variable in the extended Berezin symbol.
\end{corollary}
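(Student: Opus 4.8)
The plan is to obtain the corollary as the special case of Proposition \ref{proposition extended berezin symbol invariant} in which one of the two group elements is the neutral element of $\mathcal G_m$. Since each $\T$ is a representation of $\mathcal G_m$ on $\mathbb C^m$, the neutral element $e \in \mathcal G_m$ must act as the identity, $\T(e) = \mathrm{Id}$, and I would first record this explicitly in the three cases. For $m=2$ we have $\mathcal G_2 = \mathbb Z_2$ and $\T(1) = +1$; for $m=4$ the choice $\theta = 0$ in (\ref{z'=T(g)z m=4}) gives the diagonal matrix all of whose entries equal $\exp(0) = 1$; and for $m=8$, taking $g = I_2$ in (\ref{z'=T(g)z m=8}) yields $\mathbf V(I_2) = \mathrm{Id}$, whence $\T(e) = \mathbf L^\dagger \mathbf L$, which equals $\mathrm{Id}$ because the matrix $\mathbf L$ of (\ref{matrix L}) is a real signed permutation matrix and hence orthogonal.

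With this in hand, applying Eq. (\ref{extended berezin symbol invariant}) with $\tilde g = e$ gives, for every $g \in \mathcal G_m$ and all $\vecw \in \mathbb C^m$,
\begin{equation*}
\B{n}{m}(A)(\T(g)\vecw,\vecz) = \B{n}{m}(A)\big(\T(g)\vecw,\T(e)\vecz\big) = \B{n}{m}(A)(\vecw,\vecz),
\end{equation*}
which is precisely Eq. (\ref{ebs invariant one variable}). Freezing instead the first variable and choosing $g = e$ in Proposition \ref{proposition extended berezin symbol invariant} yields the companion identity $\B{n}{m}(A)(\vecw,\T(\tilde g)\vecz) = \B{n}{m}(A)(\vecw,\vecz)$ for all $\tilde g \in \mathcal G_m$, establishing the stated symmetric result.

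I do not expect any genuine obstacle here: the entire content is already carried by the proposition, and the only point requiring verification is the routine fact that $\T(e) = \mathrm{Id}$. The sole nontrivial instance is $m=8$, where it reduces to the orthogonality of the explicit matrix $\mathbf L$; this is immediate by inspection, since each row and column of $\mathbf L$ has a single nonzero entry equal to $\pm 1$.
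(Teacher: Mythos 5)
Your proposal is correct and follows essentially the same route as the paper: both specialize Proposition \ref{proposition extended berezin symbol invariant} by setting one group element to the identity and using $\T(g_0)\vecv=\vecv$. The only difference is that you spell out the verification of $\T(e)=\mathrm{Id}$ in the three cases (including the orthogonality of $\mathbf L$ for $m=8$), which the paper leaves as an appeal to the explicit expressions.
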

\begin{proof}
	Let $g_0$ be the identity element in $\mathcal G_m$. From the explicit expression for $\T(g)$ (see Eqs. (\ref{z'=T(g)z m=2}), (\ref{z'=T(g)z m=4}) and (\ref{z'=T(g)z m=8})), we obtain $\T(g_0)\vecv=\vecv$ for all $\vecv \in \mathbb C^m$. Thus, taking $\tilde g=g_0$ in Eq (\ref{extended berezin symbol invariant}), we obtain Eq. (\ref{ebs invariant one variable}). 
	
	Similarly, we obtain the second part of this Corollary.
\end{proof}

In addition to satisfying  the property indicated in Proposition \ref{proposition extended berezin symbol invariant}, the extended Berezin symbol belongs to the kernel of the operator $\mathcal L$ (for the case $(n,m)=(3,4)$) and the kernel of the operators $\mathcal R_1$, $\mathcal R_2$, $\mathcal R_3$ (for the case $(n,m)=(5,8)$), as the following proposition establishes it: 
\begin{proposition}\label{extended Berezin symbol in kernel}
	Let $A$ be a bounded linear operator on $L^2(\Sn{n})$, $n=3,5$, and $\vect{w},\vect{v} \in \mathbb C^m$, $m=4,8$ respectively. Assume $\rho_{(n,m)}(\vect{v}) \cdot\rho_{(n,m)} (\vect{w})\ne 0$ and $\mathrm{Arg}(\rho_{(n,m)}\big(\vect{v})\cdot\rho_{(n,m)}(\vect{w})\big)<\pi$. Then
	\begin{align*}
	\mbox{For $(n,m)=(3,4):$} & \hspace{1.5cm}  \mathcal L\;\B{3}{4}(A)(\vect{v},\vect{w})=0\;,\\
	\mbox{For $(n,m)=(5,8):$} & \hspace{1.5cm}\mathcal R_j \;\B{5}{8}(A)(\vect{v},\vect{w})=0\;, \;j=1,2,3
	\end{align*}
	with $\mathcal L$, $\mathcal R_j$, $j=1,2,3$, defined in Eqs. (\ref{Operator F4}), (\ref{Operators F8}) and where  we think of $\B{n}{m}(A)(\vect{v},\vect{w})$ as a function of $\vect{v}$ for $\vect{w}$ fixed.
\end{proposition}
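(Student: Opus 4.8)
The plan is to show that, with $\vect w$ held fixed, both the numerator and the denominator of the extended Berezin symbol are themselves elements of $\F_m$ as functions of $\vect v$, and then to exploit the fact that $\F_m$ is cut out as the common kernel of the \emph{first-order} operators $\mathcal L$ (for $m=4$) and $\mathcal R_1,\mathcal R_2,\mathcal R_3$ (for $m=8$), each of which is a derivation.

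First I would rewrite $\B{n}{m}(A)(\vect v,\vect w)$ via Eq. (\ref{extended covariant symbol}) as the ratio $\langle A\cs{w},\cs{v}\rangle_{\Sn{n}}/\langle \cs{w},\cs{v}\rangle_{\Sn{n}}$. For the denominator, Eq. (\ref{inner product cs = reproducing kernel}) gives $\langle\cs{w},\cs{v}\rangle_{\Sn{n}}=\Q{m}(\vect v,\vect w)$, which as a function of $\vect v$ is $\Q{m}(\cdot,\vect w)=\bts{n}\cs{w}\in\F_m$. For the numerator, unitarity of $\bts{n}$ together with Corollary \ref{corollary} give $\langle A\cs{w},\cs{v}\rangle_{\Sn{n}}=\langle A^\sharp\Q{m}(\cdot,\vect w),\Q{m}(\cdot,\vect v)\rangle_{\F_m}$, where $A^\sharp=\bts{n}A(\bts{n})^{-1}$ maps $\F_m$ into $\F_m$. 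The key observation is that $\Q{m}(\cdot,\vect v)$ is the reproducing kernel of $\F_m$: for $f=\bts{n}\Psi\in\F_m$, Eq. (\ref{Bargmann = transform cs}) and unitarity yield $\langle f,\Q{m}(\cdot,\vect v)\rangle_{\F_m}=\langle\Psi,\cs{v}\rangle_{\Sn{n}}=f(\vect v)$. Hence the numerator equals $[A^\sharp\Q{m}(\cdot,\vect w)](\vect v)$, again an element of $\F_m$ in the variable $\vect v$, so that
\[
\B{n}{m}(A)(\vect v,\vect w)=\frac{[A^\sharp\Q{m}(\cdot,\vect w)](\vect v)}{\Q{m}(\vect v,\vect w)}
\]
is a quotient of two functions belonging to $\F_m$.

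Next I would invoke the characterization of $\F_m$ recalled after Eqs. (\ref{Operator F4})--(\ref{Operators F8}): $\F_4=\ker\mathcal L$ and $\F_8=\ker\mathcal R_1\cap\ker\mathcal R_2\cap\ker\mathcal R_3$. Thus the relevant operators annihilate both numerator and denominator. Since $\mathcal L$ and each $\mathcal R_j$ are first-order holomorphic differential operators with no zeroth-order term (visible directly from Eqs. (\ref{Operator F4}) and (\ref{Operators F8})), they act as derivations, so for any such $D$ one has $D(f/g)=(Df)/g-f(Dg)/g^2$; with $Df=Dg=0$ this forces $D(f/g)=0$ wherever $g\neq0$. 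The hypotheses $\rho_{(n,m)}(\vect v)\cdot\rho_{(n,m)}(\vect w)\neq0$ and $\mathrm{Arg}(\rho_{(n,m)}(\vect v)\cdot\rho_{(n,m)}(\vect w))<\pi$ keep us away from the zeros of the Bessel function and from its branch cut (cf. Eq. (\ref{inner product cs = Bessel})), so $\Q{m}(\vect v,\vect w)$ is a nonvanishing holomorphic function of $\vect v$ on a neighbourhood and the quotient rule applies, giving $\mathcal L\,\B{3}{4}(A)(\vect v,\vect w)=0$ and $\mathcal R_j\,\B{5}{8}(A)(\vect v,\vect w)=0$.

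The main obstacle is the middle step, namely establishing that the numerator is literally the $\F_m$-function $A^\sharp\Q{m}(\cdot,\vect w)$ evaluated at $\vect v$; this rests on identifying $\Q{m}(\cdot,\vect v)$ as the reproducing kernel of $\F_m$ and on the mapping property $A^\sharp(\F_m)\subseteq\F_m$ from Corollary \ref{corollary}. Once these are in place the differential statement is immediate from the derivation property, the only remaining care being to confirm the absence of a multiplication term in $\mathcal L$ and the $\mathcal R_j$ so that the quotient rule indeed produces zero.
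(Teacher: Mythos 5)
Your proof is correct, but it takes a genuinely different route from the paper. The paper's own argument is group-theoretic: it invokes Corollary \ref{extended berezin symbol invariant one variable} (invariance of $\B{n}{m}(A)(\cdot,\vect w)$ under $\mathcal G_m$), writes $\B{n}{m}(A)(\T(g(\theta))\vect v,\vect w)=\B{n}{m}(A)(\vect v,\vect w)$ (respectively with $g(\theta,\alpha,\gamma)\in\mathrm{SU}(2)$), and differentiates in the group parameters at the identity — taking a limit $\theta\to 0$ in the $\mathcal R_3$ case — so that $\mathcal L$ and the $\mathcal R_j$ appear as infinitesimal generators of the $\mathcal G_m$-action. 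You instead bypass the group entirely: you observe that, with $\vect w$ frozen, the denominator $\Q{m}(\cdot,\vect w)=\bts{n}\cs{w}$ and the numerator $\langle A\cs{w},\cs{\boldsymbol\cdot}\rangle_{\Sn{n}}=\bts{n}(A\cs{w})$ are each elements of $\F_m$ (your detour through $A^\sharp$ and the reproducing property is correct but not even needed — Eq. (\ref{Bargmann = transform cs}) applied to $\Psi=A\cs{w}$ gives the numerator directly as $\bts{n}(A\cs{w})(\vect v)\in\F_m$), then use the paper's characterization $\F_4=\ker\mathcal L$, $\F_8=\bigcap_j\ker\mathcal R_j$ together with the fact that these operators are pure first-order vector fields, hence derivations, so the quotient rule kills the ratio wherever the denominator is nonzero. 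Your route is arguably cleaner: it avoids the $\mathrm{SU}(2)$ parametrization and the delicate differentiation/limit computations of the $(5,8)$ case, and it makes explicit where the hypotheses $\rho_{(n,m)}(\vect v)\cdot\rho_{(n,m)}(\vect w)\neq 0$, $|\mathrm{Arg}|<\pi$ enter (nonvanishing of the denominator, since the zeros of $\mathbf I_{(n-3)/2}$ are avoided). What the paper's route buys is conceptual uniformity with the rest of Section \ref{section Berezin symbolic calculus}: it exhibits the kernel conditions as the infinitesimal expression of the same $\mathcal G_m$-invariance used throughout, whereas your argument relies on the (equivalent, but separately stated) kernel description of $\F_m$. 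Both proofs ultimately rest on the same structural fact, since the kernels of $\mathcal L,\mathcal R_j$ are exactly the $\mathcal G_m$-invariants of $\bs{m}$.
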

\begin{proof}
	The case $(n,m)=(3,4)$. Let $g(\theta)=\mathrm{exp} ({\imath\theta})$ and $\T(g(\theta))$ given in Eq. (\ref{z'=T(g)z m=4}). From Corollary \ref{extended berezin symbol invariant one variable}
	\begin{equation}\label{kernel B34}
	\B{3}{4}(A)\big(\T(g(\theta))\vect{v},\vect{w}\big)= \B{3}{4}(A)(\vect{v},\vect{w}).
	\end{equation}

	By considering the partial derivative  of both sides in Eq. (\ref{kernel B34}) with respect to $\theta$ and evaluating the resulting equation at the point $\theta=0$, we obtain that $\B{3}{4}(A)(\vect{z},\vect{w})$ must belong to the kernel of the operator $\mathcal L$.
	
	The case $(n,m)=(5,8)$. Let $g(\theta,\alpha,\gamma) \in \mathrm{SU}(2)$ and $\T(g(\theta,\alpha,\gamma))$ given in Eq. (\ref{z'=T(g)z m=8}). From Corollary \ref{extended berezin symbol invariant one variable}
	\begin{equation}\label{kernel B58}
	\B{5}{8}(A)\big(\T(g(\TAG))\vect{v},\vect{w}\big)= \B{5}{8}(A)(\vect{v},\vect{w}).
	\end{equation}
	
	We consider the expression for $g(\TAG) \in \mathrm{SU}(2)$ given in Eq. (\ref{g parametrizada}). In a similar way that for the case $n=3$, we can prove that $\B{5}{8}(\vect{z},\vect{w})$ is in the kernel of the operators $\mathcal R_1$, $\mathcal R_2$ and $\mathcal R_3$ by considering the partial derivatives of both sides in Eq. (\ref{kernel B58}) with respect to $\alpha,\theta,\gamma$, respectively, and then evaluating at the point $(\TAG)=(0,0,0)$ (we actually need to take the limit $\theta \to 0$ in the last case).

\end{proof}

\section{The star product}\label{section star product}

In Ref. \cite{B-74}, Berezin showed that the product (\ref{Eq 3}) will allow us to define a star product, which will be denoted by $\spn{n}{m}$, on  the algebra $\A{n}{m}$ which consists of Berezin symbols for bounded linear operators with domain in $L^2(\Sn{n})$. See Ref. \cite{BF-78} for the standard definition of star product. Thus, we have the following %(see Refs. \cite{BF78} for the standard definition of star-product), i.e.
\begin{definition}
	For $f_1,f_2 \in \A{n}{m}$,
	\begin{align}
	\big(f_1 \spn{n}{m} f_2\big)(\vecz) & =\int_{\mathbb C^m} f_1(\vecz,\vecu) f_2(\vecu,\vecz) \frac{|\Q{m}(\vecu,\vecz)|^2}{\Q{m}(\vecz,\vecz)}\du{m}(\vect u),\label{expression star product}
	\end{align}
	where the functions $f_j(\vect v,\vect w)$, $j=1,2$, are the analytic continuation of $f_j(\vect v)$  to  $\mathbb C^m\times \mathbb C^m$ (see Eq. (\ref{extended covariant symbol})).
\end{definition}

\begin{remark}
	From Eqs. (\ref{extended berezin symbol invariant}) and (\ref{csf invariant second variable}), it follows  that the above star product is $\mathcal G_m$-invariant in the sense that
	\begin{equation}
	(f_1\circ \T(g))\spn{n}{m} (f_2\circ \T(g))= (f_1\spn{n}{m} f_2)\circ \T(g)\;, \hspace{0.5cm} \forall g \in \mathcal G_m, \;\forall f_1,f_2 \in \A{n}{m},
	\end{equation}
	where $\T(g)$ is given by the action of $\mathcal G_m$ on $\mathbb C^m$ (see Eqs. (\ref{z'=T(g)z m=2}), (\ref{z'=T(g)z m=4}) and (\ref{z'=T(g)z m=8})).
	Even more, in the next section we will prove that the star product $\spn{n}{m}$ is $\mathfrak F_m$-invariant, where $\mathfrak F_2=\mathrm{SU}(2)$, $\mathfrak F_4= \mathrm{SU}(2)\times \mathrm{SU}(2)$ and $\mathfrak F_8=\mathrm{SU}(4)$.
\end{remark}

In this section we verify that this noncommutative star product $\spn{n}{m}$ satisfies the usual requirement on the semiclassical limit, i.e. as $\hbar \to 0$
\begin{equation*}
f_1 \spn{n}{m} f_2(\vecz)=f_1(\vecz)f_2(\vecz)+\hbar  \tilde{\mathcal B}(f_1,f_2)(\vecz)+\mathrm O(\hbar^2),\hspace{0.5cm}\vecz\in \tilde{\mathbb C}^m, f_1,f_2\in \A{n}{m}, 
\end{equation*}
where $\tilde{\mathcal B}(\cdot,\cdot)$ is a certain bidifferential operator of the first order.

\begin{theorem} \label{theorem star product}Let $(n,m)=(2,2),(3,4),(5,8)$. The star product $\spn{n}{m}$ (see Eq. (\ref{expression star product})) satisfies
	\begin{enumerate}
		\item[a) ] $f \spn{n}{m} 1=1\spn{n}{m} f= f$, for all $f \in \A{n}{m}$,
		\item[b) ] $\spn{n}{m}$ is associative, and
		\item[c) ] for $f_1, f_2 \in \A{n}{m}$ and $\vect z \in \tilde{\mathbb C}^m$, we have the following asymptotic expression when $\hbar \to 0$
		\begin{align}
		f_1 \spn{n}{m} f_2(\vecz) &= f_1(\vecz) f_2(\vecz) + \hbar\left[\sum_{\ell=1}^{m}\partial_{u_\ell} f_2(\vecu,\vecz)\partial_{\overline u_\ell} f_1(\vecz,\vecu)\right]_{\vecu=\vecz} + \mathrm O(\hbar^2), \label{asymptotic star product}
		\end{align}
		where the functions $f_j(\vecz,\vecu)$, $j=1,2$, are the analytic continuation of $f_j(\vecz)$  to  $\mathbb C^m\times \mathbb C^m$ (see Eq. (\ref{extended covariant symbol})).
	\end{enumerate}
\end{theorem}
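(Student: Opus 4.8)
The plan is to dispatch parts (a) and (b) as formal consequences of the operator correspondence encoded in Eq.~(\ref{Eq 3}), and to reserve the real work for the semiclassical expansion (c), which I would attack by feeding the integral representation of $\Q{m}$ from Eq.~(\ref{integral expresion Q_m}) into an \emph{exact} Bargmann--Gaussian integration and then applying Laplace's method on the compact group $\mathcal G_m$.

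For (a) and (b) the key observation is that if $f_1=\B{n}{m}(A)$ and $f_2=\B{n}{m}(B)$, then restricting Eq.~(\ref{Eq 3}) to the diagonal $\vecw=\vecz$ and using $\Q{m}(\vecu,\vecz)\Q{m}(\vecz,\vecu)=|\Q{m}(\vecu,\vecz)|^2$ identifies the integral in Eq.~(\ref{expression star product}) with $\B{n}{m}(AB)(\vecz)$, whose analytic continuation to $\mathbb C^m\times\mathbb C^m$ is the extended symbol $\B{n}{m}(AB)(\vecz,\vecw)$ by uniqueness of the continuation. For (a) I would take $B=\mathrm{Id}$, whose extended symbol is the constant $1$ by Eq.~(\ref{extended covariant symbol}) and Proposition~\ref{proposition rules for Berezin}, giving $f\spn{n}{m}1=\B{n}{m}(A)=f$ and symmetrically on the left. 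For (b), since $f_1\spn{n}{m}f_2=\B{n}{m}(AB)$ lies again in $\A{n}{m}$ with extended symbol given by Eq.~(\ref{Eq 3}), iterating yields $(f_1\spn{n}{m}f_2)\spn{n}{m}f_3=\B{n}{m}((AB)C)$ and $f_1\spn{n}{m}(f_2\spn{n}{m}f_3)=\B{n}{m}(A(BC))$ on the diagonal, so associativity follows from associativity of operator composition; no injectivity of the Berezin map is needed, because the integral in Eq.~(\ref{expression star product}) depends only on the extended symbols.

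For (c), I would insert $\Q{m}(\vecu,\vecz)=\int_{\mathcal G_m}\exp(\vecu\cdot\T(g)\vecz/\hbar)\,dg$ and $\Q{m}(\vecz,\vecu)=\int_{\mathcal G_m}\exp(\vecz\cdot\T(h)\vecu/\hbar)\,dh$ into Eq.~(\ref{expression star product}) and carry out the inner $\vecu$-integral against $\du{m}$ exactly. With $\vect p=\T(g)\vecz$ and $\vect q=\T(h)^{-1}\vecz$, completing the square and using the reproducing moment identities $\int u_s\,e^{(\vecu\cdot\vect p+\vect q\cdot\vecu)/\hbar}\du{m}(\vecu)=q_s\,e^{\vect q\cdot\vect p/\hbar}$ and $\int u_s\overline u_t\,e^{(\vecu\cdot\vect p+\vect q\cdot\vecu)/\hbar}\du{m}(\vecu)=(q_s\overline p_t+\hbar\,\delta_{st})\,e^{\vect q\cdot\vect p/\hbar}$, the Taylor expansions of $f_2(\vecu,\vecz)$ in $\vecu$ and of $f_1(\vecz,\vecu)$ in $\overline\vecu$ collapse to
\begin{equation*}
\int f_2(\vecu,\vecz)f_1(\vecz,\vecu)\,e^{(\vecu\cdot\vect p+\vect q\cdot\vecu)/\hbar}\du{m}(\vecu)=e^{\vect q\cdot\vect p/\hbar}\Big[f_2(\vect q,\vecz)f_1(\vecz,\vect p)+\hbar\sum_{\ell}\partial_{u_\ell}f_2(\vect q,\vecz)\,\partial_{\overline u_\ell}f_1(\vecz,\vect p)+\mathrm O(\hbar^2)\Big].
\end{equation*}
The crucial simplification is that, by Corollary~\ref{extended berezin symbol invariant one variable}, $f_2(\vect q,\vecz)=f_2(\vecz)$ and $f_1(\vecz,\vect p)=f_1(\vecz)$, while $\vect q\cdot\vect p=\vecz\cdot\T(hg)\vecz$. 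It then remains to integrate over $(g,h)\in\mathcal G_m\times\mathcal G_m$ against the weight $e^{\vecz\cdot\T(hg)\vecz/\hbar}/\Q{m}(\vecz,\vecz)$, which by the substitution $k=hg$ and invariance of Haar measure is an \emph{exact} probability measure (total mass $1$ by Eq.~(\ref{integral expresion Q_m})). As $\hbar\to0$ it concentrates, by Laplace's method, on $\{\T(hg)\vecz=\vecz\}$, where $\mathrm{Re}\,\vecz\cdot\T(hg)\vecz$ is maximal; there one checks $\vect q=\vect p=\T(g)\vecz$, so the leading term reproduces $f_1(\vecz)f_2(\vecz)$, and the order-$\hbar$ integrand becomes $\sum_\ell\partial_{u_\ell}f_2(\T(g)\vecz,\vecz)\,\partial_{\overline u_\ell}f_1(\vecz,\T(g)\vecz)$. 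Differentiating the invariances $f_2(\T(g)\vecu,\vecz)=f_2(\vecu,\vecz)$ and $f_1(\vecz,\T(g)\vecu)=f_1(\vecz,\vecu)$ and contracting, the matrix factors combine as $\T(g)^{-1}\T(g)=\mathrm{Id}$ (using unitarity of $\T(g)$), so this contraction equals the $g$-independent constant $\big[\sum_\ell\partial_{u_\ell}f_2(\vecu,\vecz)\partial_{\overline u_\ell}f_1(\vecz,\vecu)\big]_{\vecu=\vecz}$, which is exactly the coefficient claimed in Eq.~(\ref{asymptotic star product}).

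The main obstacle I anticipate is the rigorous control of the two asymptotic steps and their interaction: justifying the exact Gaussian integration and its $\mathrm O(\hbar^2)$ remainder when $f_1,f_2$ are only the (possibly singular) extended symbols of bounded operators rather than entire functions, and then bounding the group-Laplace error so that the deviation of the contraction from its value on $\{\T(hg)\vecz=\vecz\}$ enters only at order $\hbar^2$ (this is where the regularity of $\vecz\in\tilde{\mathbb C}^m$ and the nondegeneracy of the phase $\vecz\cdot\T(k)\vecz$ are used, and where the estimates of Appendix~\ref{Appendix A} are needed). The discrete case $m=2$ is easiest, since $\mathcal G_2=\mathbb Z_2$ turns the group integral into a two-term sum with an exponentially small subdominant contribution; the cases $m=4,8$ require the genuine stationary-phase analysis on $S^1$ and $\mathrm{SU}(2)$.
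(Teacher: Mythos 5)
Your parts a) and b) match the paper's own argument (completeness of the coherent states for the unit, associativity of operator composition for b); your remark that injectivity of the Berezin map is not needed, because the extension in Eq.~(\ref{extended covariant symbol}) is determined by the diagonal values, is a point the paper leaves implicit). For part c) you take a genuinely different route. The paper also starts by inserting the group-integral representation (\ref{integral expresion Q_m}) for both kernel factors, but it then runs a \emph{single joint} stationary-phase analysis in the $2m$ real variables of $\vecu$ together with the inner group variables: this forces the computation of the full complex Hessians (\ref{Hessian matrix (n,m)=(3,4)}), (\ref{Hessian matrix (n,m)=(5,8)}), their determinants and inverses via the partitioned-matrix lemma of Appendix \ref{Appendix A}, and it generates extra second-order terms such as $\left((\MB{4}\vecu_0)^t\partial_\vecu\right)^2\beta_\vecz$ that are cancelled only by Proposition \ref{extended Berezin symbol in kernel} (the symbols lie in the kernels of $\mathcal L$ and $\mathcal R_j$); finally the prefactor $\mathrm e^{|\vecz|^2/\hbar}/\Q{m}(\vecz,\vecz)$ must be absorbed through the Bessel asymptotics (\ref{asymptotic expression modified Bessel function}). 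You instead perform the $\vecu$-integral \emph{exactly} via the Bargmann moment identities --- which is in fact what the paper itself does in the case $m=2$, where the heat-kernel formula (\ref{formula solution heat equation}) plays exactly this role --- so the normalization $\Q{m}(\vecz,\vecz)$ cancels identically, the leading term is exact, and all asymptotics are concentrated in a scalar Laplace problem on the compact group; Corollary \ref{extended berezin symbol invariant one variable} plus unitarity of $\T(g)$ then shows the first-order coefficient is constant on the critical set $\{\T(hg)\vecz=\vecz\}$, giving Eq.~(\ref{asymptotic star product}). Your organization is leaner: no Hessian inversion, no need for Proposition \ref{extended Berezin symbol in kernel}, and a uniform treatment of $m=2,4,8$. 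What the paper's route buys is that it rests on one quoted theorem (Theorem \ref{theorem stationary phase method}) with explicit operators $\mathbf M_\ell$, so in principle all higher-order terms of the expansion come out of the same computation.

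Three caveats for your write-up. First, the weight $\mathrm e^{\vecz\cdot\T(k)\vecz/\hbar}\,\mathrm dk/\Q{m}(\vecz,\vecz)$ has total mass one but is in general complex-valued, so your ``Laplace method'' must really be stationary phase with complex phase: one uses $\Re\big(\vecz\cdot\T(k)\vecz\big)\le|\vecz|^2$ with equality exactly when $\T(k)\vecz=\vecz$, which is a nondegenerate maximum precisely because $\vect 0\ne\vecz\in\tilde{\mathbb C}^m$ --- this is where the hypothesis $\vecz\in\tilde{\mathbb C}^m$ enters, just as it does in the paper's computation of the critical points. Second, the point $\vecz=\vect 0\in\tilde{\mathbb C}^m$ must be treated separately (as the paper does at the end of its proof), since there the group phase degenerates. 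Third, the rigor gap you flag honestly --- $f_1(\vecz,\cdot)$ and $f_2(\cdot,\vecz)$ are neither entire nor compactly supported, having singularities at the zeros of $\Q{m}$ --- is not a defect relative to the paper: the paper applies Theorem \ref{theorem stationary phase method} to the same singular, non-compactly-supported $\beta_\vecz$, so both proofs share this unaddressed point.
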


\begin{proof}
	a) Let $f \in \A{n}{m}$, then $f=\B{n}{m}(A)$ with $A\in \mathbf B(L^2(\Sn{n}))$. From  Eqs. (\ref{inner product cs = reproducing kernel}), (\ref{extended covariant symbol}) and Proposition \ref{complete} we have
	\begin{align*}
	f \spn{n}{m} 1 (\vect z)  & = \int_{\mathbb C^m} \big\langle A
	\csw,\csz\big\rangle_{\Sn{n}} \frac{\big\langle \csz,\csw\big\rangle _{\Sn{n}}} {\big\langle \csz,\csz\big\rangle_{\Sn{n}}} \du{m}(\vect w) =  f(\vect z).
	\end{align*}
	Analogously, $1 \spn{n}{m} f=f$.
	
	b) The associativity follows from the fact that the composition in the algebra of all bounded linear operator on $L^2(\Sn{n})$ is associative.
	
	%Let
	%\begin{equation}
	%\mathbf I(\vecz):= \frac{1}{(\pi\hbar)^m}\int_{\mathbb C^n} f_\vecz(\vecu,\vecu)\frac{|\Q{m}(\vecw,\vecz)|^2}{||\Q{m}(\vecz,\vecz)||^2}e^{|\vecw|^2/\hbar}\mathrm d \vecw \mathrm d \overline \vecw\;,
	%\end{equation}
	%where $f_\vecz(\vecu,\overline\vecu)=f_1(\vecu,\vecz) f_2(\vecz,\vecu)$.
	%We will use the stationary phase method to obtain Eq. (\ref{asymptotic star product}). In order to applied this theorem, 

	c) Let us first assume that $\vecz \ne 0$. The case $\vecz=0$ can be easily studied, see below.
	
	Case $(n,m)=(2,2)$:
	From Eq. (\ref{integral expresion Q_m})
	\begin{align}
	\frac{|\Q{2}(\vecu,\vecz)|^2}{\Q{2}(\vecz,\vecz)}\e^{-|\vecu|^2/\hbar} & = \frac{\exp(-|\vect{u}-\vect{z}|^2/\hbar)} {2(1+\exp(-2|\vect{z}|^2/\hbar))} +
	\frac{\exp(-|\vect{u}+\vect{z}|^2/\hbar)}{2(1+\exp(-2|\vect{z}|^2/\hbar))}\nonumber \\
	&\hspace{0.5cm} 
	+ \frac{\cos(2\mathrm{Im}(\vect{u}\cdot \vect{z})/\hbar)\exp(-|\vect{u}|^2/\hbar)\exp(-|\vect{z}|^2/\hbar)}{1+\exp(-2|\vect{z}|^2/\hbar)}.\label{principal m=2}
	\end{align}
	Since the last term in Eq. (\ref{principal m=2}) is $\mathrm O(\hbar^\infty)$, where $\mathrm O(\hbar^\infty )$ denotes a quantity tending to zero faster than any power of $\hbar$, we have from Eqs. (\ref{expression star product}) and (\ref{du})
	\begin{equation}\label{m=2 help}
	(f_1 \spn{2}{2} f_2)(\vecz)=\frac{\mathbf I(\vect{z}) + \mathbf I(-\vect{z})}{2(1+\exp(-2|\vect{z}|^2/\hbar))} + \mathrm O(\hbar^\infty)%=\frac{1}{2}\left(\mathbf I(\vect{z}) + \mathbf I(-\vect{z})+\mathrm O(\hbar^\infty)\right)\;
	\end{equation}
	where 
	\begin{equation}\label{I(v) integral expression}
	\mathbf I(\vect{v}):= \frac{1}{(\pi\hbar)^2} \int \beta_{\vect{z}}(\vect{u},\overline{\vect{u}})
	\mathrm{exp}\left(-\frac{1}{\hbar}|\vect u - \vect v|^2\right)
	%\mathrm{exp}(-|\vect{u}-\vect{v}|/\hbar) 
	\mathrm d\vect{u} \mathrm d \overline{\vect{u}}
	\end{equation}
	with $\beta_{\vect{z}}(\vect u,\overline{ \vect u})= f_1(\vect{z},\vect{u}) f_2(\vect{u},\vect{z})$. Notice however that $\mathbf I(\vect v)$ is just the standard formula for the solution at time  $t=\hbar/4$ of the heat equation on $\mathbb C^2=\mathbb R^{4}$ with initial data $\beta_\vect z$; i.e.
	\begin{equation}\label{formula solution heat equation}
	\mathbf I(\vect v)= \left.\sum_{\ell=0}^{\infty} \hbar^\ell \big(\partial_{\vect u\overline{\vect u}}\big )^\ell \beta_\vect z \right|_{\vect u=\vect v},
	\end{equation}
	where $\partial_{\vecu\overline \vecu}=\sum_j\partial_{u_j\overline u_j}$  denotes the Laplace operator. Eq. (\ref{formula solution heat equation}) can also be obtained using the stationary phase method (see Eq. (\ref{stationary phase method})), however we do not include its description in order not to make this proof too long.
	
	Thus, from Eqs. (\ref{m=2 help}), (\ref{I(v) integral expression}) and (\ref{formula solution heat equation})
	\begin{equation}\label{n=2 star productg invariance}
	(f_1 \spn{2}{2} f_2) (\vect z)= \big(1+\hbar \partial_{\vect u \overline{\vect u}}\big) \left[\left.\frac{1}{2}\beta_\vect z(\vect u,\overline{\vect u}) \right|_{\vect u=\vect z} + \left.\frac{1}{2}\beta_\vect z(\vect u,\overline{\vect u}) \right|_{\vect u=-\vect z}\right] + \mathrm{O}(\hbar^2).
	\end{equation}
	where we have used $1+ \mathrm{exp}(-2|\vect z|^2/\hbar)=1+\mathrm O(\hbar^\infty)$.
	
	Note that for $\vect v \in \mathbb C^2$, the function $\beta_\vect z(\vect v,\overline{\vect v})= f_1(\vect z, \vect v) f_2(\vect v, \vect z)$ satisfies $\beta_\vect z(\vect v,\overline{\vect v}) = \beta_{\vect z}(-\vect v,-\overline{\vect v}) $ because $f_1,f_2 \in \A{2}{2} $ and Corollary \ref{extended berezin symbol invariant one variable}. 
	
	%Thus, from the chain rule, Eq. (\ref{n=2 star productg invariance}) and used  that the extended Berezin symbol is holomorphic in the first factor and anti-holomorphic in the second we obtain Eq. (\ref{asymptotic star product}).

	Thus, from the chain rule and Eq. (\ref{n=2 star productg invariance})
	\begin{align*}
	(f_1 \spn{2}{2} f_2)(\vecz) &= f_1(\vecz) f_2(\vecz) + \hbar\sum_{\ell=1}^{2}\Big[\partial_{u_\ell} f_2(\vecu,\vecz)\partial_{\overline u_\ell} f_1(\vecz,\vecu)\Big]_{\vecu=\vecz} + \mathrm O(\hbar^2),
	\end{align*}
	where we have used that the extended Berezin symbol is holomorphic in the first factor and anti-holomorphic in the second.\\

	Case $(n,m)=(3,4)$: From Eqs. (\ref{expression star product}), (\ref{integral expresion Q_m}) and (\ref{du}) we have
	\begin{align}
	\big(f_1 \spn{3}{4} f_2\big)(\vecz) & = \frac{\e^{|\vecz|^2/\hbar}}{\Q{4}(\vecz,\vecz)} \frac{1}{4\pi^6\hbar^4}%\nonumber\\
	%& \hspace{1cm}
	\int\limits_{\psi=0}^{2\pi}  \int\limits_{\vecw \in \mathbb C^4} \int\limits_{\theta=0}^{2\pi} \beta_\vecz(\vecu,\overline \vecu) \mathrm{exp} \left(\frac{\imath}{\hbar}  \p_\psi(\vecu, \overline \vecu,\theta)\right) \mathrm d\theta
	\mathrm d \vecu \mathrm d \overline \vecu \mathrm d \psi, \label{SB}
	\end{align}
	where $\beta_\vecz(\vecu,\overline \vecu)=f_1(\vecz,\vecu)f_2(\vecu,\vecz)$ and the phase function $\p_\psi$ is 
	\begin{equation*}
	\p_\psi(\vecu,\overline\vecu,\theta)=\imath\left(|\vecu|^2 + |\vecz|^2-\vecu \cdot \T(g(\psi))\vecz - \overline{\vecu\cdot 
		\T(g(\theta))\vecz}\right)\;,
	\end{equation*}
	with $g(\psi)=\mathrm{exp}(\imath \psi)$ and $g(\theta)=\mathrm{exp}(\imath \theta)$. To obtain the asymptotic expansion (\ref{asymptotic star product}), we can use the stationary phase method with complex phase, see Eq. (\ref{stationary phase method}), in the integral appearing on the right hand side of Eq. (\ref{SB}).
	
	For our purpose, we need to consider the gradient and Hessian matrix of the function $\p_\psi$ with respect to the nine variables $\theta$, $x_j=\Re(u_j)$, $y_j=\Im(u_j)$, $j=1,2,3,4$. It is actually more convenient to consider the derivatives of $\p_\psi$ with respect to the variables $\theta$, $u_j$, $\overline u_j$ , $j = 1, 2, 3, 4$. Namely, 
	\begin{enumerate}
		\item[i) ]  the condition $\nabla_{\vecx,\vecy,\theta} \p_\psi=0$ is equivalent to $\nabla_{\vecu,\overline{\vecu},\theta} \p_\psi = 0$ with $\vecx = (x_1,\ldots, x_4 )$, $\vecy = (y_1,\ldots, y_4)$, $\vecu =(u_1,\ldots,u_4)$, and 
		\item[ii) ] to obtain the Hessian matrix of $\p_\psi$ with respect to the variables $\vecx, \vecy, \theta$ we use the following equalities: $\partial_{x_jx_k}=\partial_{u_ju_k}+\partial_{u_j\overline u_k}+\partial_{\overline u_ju_k}+\partial_{\overline u_j\overline u_k}$, $\partial_{x_jy_k} = \imath(\partial_{u_ju_k}-\partial_{u_j\overline u_k}+\partial_{\overline u_ju_k}-\partial_{\overline u_j\overline u_k})$, $\partial_{y_jy_k}=-\partial_{u_ju_k} +\partial_{u_j\overline u_k} +\partial_{\overline u_ju_k}-\partial_{\overline u_j\overline u_k}$, $\partial_{x_j\theta}=\partial_{u_j\theta}+\partial_{\overline u_j\theta}$, $\partial_{y_j\theta} =\imath(\partial_{u_j\theta} -\partial_{\overline u_j\theta})$.
	\end{enumerate}
	
	Notice that $\Im \p_\psi \ge 0$ because for all $\psi$ and $\theta$
	\begin{equation*}
	|\Re(z_j \overline u_j e^{\imath\psi} + \overline z_j u_j e^{-\imath\theta})| \le 2|z_j|\;|u_j|\;\;\mbox{with } j=1,2,3,4.
	\end{equation*}
	
	Given $\psi$ fixed, $\psi \in (0,2\pi)$, the gradient $\nabla_{\vect u,\overline{\vect u},\theta}\p_\psi$ is zero if and only if $\theta=\psi$, and $\vecu=\T(g(\psi)\vecz)$ (i.e. $u_j=z_j \mathrm{exp}(-\imath\psi)$, $j=1,2$ and $u_j=z_j \mathrm{exp}(\imath \psi)$, $j=3,4$), where we have used that $\vect z \in \tilde{\mathbb C}^m$, i.e. $\vect z$ satisfies the condition $|z_1|^2+|z_2|^2 =|z_3|^2+|z_4|^2$ (see Eq. (\ref{condition (3,4)})). Moreover, the Hessian matrix of $\p_\psi$ evaluated at the critical point $\vecx_0+\imath \vecy_0=\vecu_0=\T(g(\psi)\vecz)$, $\theta_0=\psi$ is equal to
	\begin{equation}\label{Hessian matrix (n,m)=(3,4)}
	\mathrm A:=\p''_\psi(\vecu_0,\overline\vecu_0,\theta_0)=\left(
	\begin{array}{c c c}
	2\imath \mathbf I_4 & \mathbf 0_4 &  \MB{4}\vecu_0\\[0.1cm]
	\mathbf 0_4 & 2\imath \mathbf I_4 & -\imath \MB{4} \vecu_0\\[0.1cm]
	(\MB{4}\vecu_0)^t & -\imath (\MB{4}\vecu_0)^t & \imath |\vecz|^2
	\end{array}
	\right)
	\end{equation}
	where $\mathbf I_s$ and $\mathbf 0_s$ denote the identity matrix and zero matrix of size $s$ respectively, $A^t$ denotes the transpose matrix of a given matrix $A$ and  
	\begin{equation*}
	\MB{2\ell}=\left(\begin{array}{c c}
	-\mathbf I_\ell & \mathbf 0_\ell\\
	\mathbf 0_\ell & \mathbf I_\ell
	\end{array}
	\right),\quad\ell \in \mathbb N.
	\end{equation*}

	Then from Eqs. (\ref{Hessian matrix (n,m)=(3,4)}) and (\ref{determinant}), with $\mathcal D=\imath |\vecz|^2$, we have
	\begin{equation*}
	\det (\p''_\psi(\vecu_0,\overline \vecu_0,\theta_0))=(2\imath)^8 \imath |\vecz|^2.
	\end{equation*}
	From the stationary phase method we obtain that
	\begin{equation}\label{tfe}
	\big(f_1 \spn{3}{4} f_2\big)(\vecz) = \frac{\e^{|\vecz|^2/\hbar}}{\Q{4}(\vecz,\vecz)} \frac{\sqrt{2\hbar}}{2^2\pi\sqrt\pi |\vecz|}
	\int_0^{2\pi} \left[\sum_{\ell<k} \hbar^\ell \mathbf M_\ell \beta_\vecz\Big|_{cp} + \mathrm O(\hbar^k)\right] \mathrm d \psi
	\end{equation}
	where
	\begin{equation}\label{Mj}
	\mathbf M_\ell \beta_\vecz\Big|_{cp}= \sum_{s=\ell}^{3\ell} \frac{\imath^{-\ell} 2^{-s}}{ s!(s-\ell)!} \left[ \left(-\mathrm A^{-1}\right) \hat D \cdot \hat D\right]^s \beta_\vecz (\p_{cp})^{s-\ell}\biggl|_{cp}
	\end{equation}
	with $\mathfrak g\big|_{cp}$ denoting the evaluation  at the critical point $\vect u_0, \theta_0$ of a given function $\mathfrak g$, 
	\begin{equation}\label{definition f|cp n=3}
	\p_{cp}=\p_{cp}(\vecu,\overline \vecu,\theta) = \imath \left(-\T(g(\theta))\vecz+\vecu_0+\imath(\theta-\psi)\MB{4}\vecu_0 \right) \cdot \vecu-\frac{\imath}{2}|\vecz|^2(\theta-\psi)^2,
	\end{equation}
	%%%%%%%%%%%%%%%%%%%%%%%%%%%%%%%%%%%%%%%%%%%%%%%%%%%%%%%%%%%%
	%%%%%%%%%%%%%%%%%%%%%%%%%%%%%%%%%%%%%%%%%%%%%%%%%%%%%%%%%%%%
	%%%%%%%%%%%%%%%%%%%%%%%%%%%%%%%%%%%%%%%%%%%%%%%%%%%%%%%%%%%%
	%%%%%%%%%%%%%%%%%%%%%%%%%%%%%%%%%%%%%%%%%%%%%%%%%%%%%%%%%%%%
	\begin{comment}
	\begin{equation}\label{tfe}
	\big(f_1 \spn{3}{4} f_2\big)(\vecz) = \frac{\e^{|\vecz|^2/\hbar}}{\Q{4}(\vecz,\vecz)} \frac{\sqrt{2\hbar}}{2^2\pi\sqrt\pi |\vecz|}
	\int_0^{2\pi} \left[\sum_{\ell<k} \hbar^\ell \mathbf M_\ell \beta_\vecz \Big|_{cp} (\vecu_0,\overline{\vecu}_0,\theta_0)+ \mathrm O(\hbar^k)\right] \mathrm d \psi
	\end{equation}
	where
	%\begin{equation}\label{Mj}
	%\mathbf M_j \beta_\vecz = \sum_{\nu- \mu=j} \sum_{2\nu \ge 3\mu} \frac{1}{\imath^j 2^\nu} \frac{1}{\nu! \mu!} \left\langle-\mathbf A^{-1} \hat D, \hat D\right\rangle^\nu (d_{cp}^\mu \beta_\vecz)(\vecu_0,\overline \vecu_0,\theta_0)
	%\end{equation}
	\begin{equation}\label{Mj}
	\mathbf M_\ell \beta_\vecz(\vecu_0,\overline \vecu_0,\theta_0)= \sum_{s=\ell}^{3\ell} \frac{\imath^{-\ell} 2^{-s}}{ s!(s-\ell)!} \left[ \left(-\mathrm A^{-1}\right) \hat D \cdot \hat D\right]^s \beta_\vecz (f_{\vecu_0,\theta_0})^{s-\ell}\biggl|_{\substack{\vecu=\vecu_0 \\ \theta=\theta_0}}
	\end{equation}
	%\begin{equation*}
	%\mathbf M_j \beta_\vecz = \sum_{\nu- \mu=j} \sum_{2\nu \ge 3\mu} \frac{1}{\imath^j 2^\nu} \frac{1}{\nu! \mu!} \left[\left(-\mathrm A^{-1}\right)_{jk} \partial_{u_j}\partial_{u_k}\right]^\nu (d_{cp}^\mu \beta_\vecz)(\vecu_0,\overline \vecu_0,\theta_0)\;,
	%\end{equation*}
	with 
	\begin{equation}\label{definition f|cp n=3}
	f_{\vecu_0,\theta_0}(\vecu,\overline \vecu,\theta) = \imath \left(-\T(g(\theta))\vecz+\vecu_0+\imath(\theta-\psi)B\vecu_0\right) \cdot \vecu-\frac{\imath}{2}|\vecz|^2(\theta-\psi)^2,
	\end{equation}
	\end{comment}
	%%%%%%%%%%%%%%%%%%%%%%%%%%%%%%%%%%%%%%%%%%%%%%%%%%%%%%%%%%%%
	%%%%%%%%%%%%%%%%%%%%%%%%%%%%%%%%%%%%%%%%%%%%%%%%%%%%%%%%%%%%
	%%%%%%%%%%%%%%%%%%%%%%%%%%%%%%%%%%%%%%%%%%%%%%%%%%%%%%%%%%%%
	%%%%%%%%%%%%%%%%%%%%%%%%%%%%%%%%%%%%%%%%%%%%%%%%%%%%%%%%%%%%
	and $\hat D$ the column vector of size 9 whose entries are defined by: $(\hat D)_{j}=\partial_{x_j}$, $(\hat D)_{4+j}=\partial_{y_j}$, $j=1,\ldots,4$, and $(\hat D)_9=\partial_\theta$. The last Eq. (\ref{definition f|cp n=3}) is a consequence of equalities $|\vecu_0|=|\vecz|$ and $\p_\psi(\vect u_0,\overline{\vect u}_0,\theta_0)=0$.%that $\p_\psi=0$ at the critical point.

	In order to estimate $\mathbf M_1 \beta_\vecz\big|_{cp}$,
	%$\mathbf M_1 \beta_\vecz(\vecu_0, \overline{\vecu}_0,\theta_0)$, 
	we first need to obtain the inverse of the matrix $A$. From Eqs. (\ref{Hessian matrix (n,m)=(3,4)}) and (\ref{inverse}), with $\mathcal D=\imath |\vecz|^2$,
	\begin{equation*}
	\mathrm A^{-1}=\left(\begin{array}{c c c}
	\displaystyle \frac{1}{2\imath} \mathbf I_4-\frac{1}{4\imath |\vecz|^2}\MB{4} \vecu_0 \vecu_0^t \MB{4} & \displaystyle \frac{1}{4|\vecz|^2}\MB{4} \vecu_0 \vecu_0^t \MB{4} & \displaystyle \frac{1}{2|\vecz|^2}\MB{4} \vecu_0 \\[0.5cm]
	\displaystyle \frac{1}{4|\vecz|^2}\MB{4} \vecu_0 \vecu_0^t \MB{4} &  \displaystyle \frac{1}{2\imath} \mathbf I_4+\frac{1}{4\imath |\vecz|^2}\MB{4} \vecu_0 \vecu_0^t \MB{4} & \displaystyle\frac{1}{2\imath |\vecz|^2}\MB{4} \vecu_0\\[0.5cm]
	\displaystyle \frac{1}{2|\vecz|^2}\vecu_0^t \MB{4} & \displaystyle \frac{1}{2\imath|\vecz|^2}\vecu_0^t \MB{4} & \displaystyle \frac{1}{\imath |\vecz|^2}
	\end{array}
	\right).
	\end{equation*}
	
	Using the following equalities: $\MB{4}\vecu_0\vecu_0^t\MB{4} \mathbf p \cdot \overline{\mathbf p}=((\MB{4}\vecu_0)^t \mathbf p)^2$ for all $\mathbf p\in \mathbb C^4$ and $\partial_{u_j}=(\partial_{x_j}- \imath \partial_{y_j})/2$, and easy linear algebra manipulations, we can show
	\begin{equation} \label{operator in M-3}
	-(\mathrm A^{-1}) \hat D\cdot \hat D = 
	%\left(-\mathbf A^{-1}\right)_{jk} \hat D_j\hat D_k= 
	2\imath \partial_{\vecu\overline \vecu} - \frac{\imath}{|\vecz|^2}\left[(\MB{4} \vecu_0)^t \partial_\vecu-\imath \partial_\theta\right]^2,
	\end{equation}
	where $\partial_{\vecu\overline \vecu}=\sum_{j=1}^4\partial_{u_j\overline u_j}$  and $\partial_\vecu$ denote the Laplace operator and the column vector of size 4 whose $j$ entry is $\partial_{u_j}$ (i.e $(\partial_{\vecu})_j= \partial_{u_j}$) respectively. 
	
	From Eqs. (\ref{Mj}) and (\ref{operator in M-3})
	\begin{align}
	%\mathbf M_0\beta_\vecz (\vecu_0,\overline{\vecu_0},\theta_0) 
	\mathbf M_0\beta_\vecz \Big|_{cp} 
	& =\beta_\vecz(\vecu_0,\overline\vecu_0), \label{M0 n=3}\\
	%\mathbf M_1 \beta_\vecz(\vecu_0,\overline{\vecu_0},\theta_0)
	\mathbf M_1 \beta_\vecz\Big|_{cp}
	& =\left[\partial_{\vecu\overline\vecu} - \frac{1}{2|\vecz|^2} \left((\MB{4} \vecu_0)^t \partial_\vecu \right)^2-\frac{1}{2|\vecz|^2} \vecu_0^t \partial_\vecu%\right.\nonumber\\
%	&\hspace{1cm}\left. 
+\frac{1}{2^3|\vecz|^2}\right] \beta_\vecz(\vecu_0,\overline\vecu_0).\label{M1 n=3}
	\end{align}
	where we have used the equality $\MB{4}\vect u_0 \cdot \vect u_0=0$ to obtain Eq. (\ref{M1 n=3}). 
	
	Notice that the right side of Eqs. (\ref{M0 n=3}) and (\ref{M1 n=3}) still depend on the variable $\psi$ because $\vecu_0 =\T(g(\psi))\vecz$.
	
	Thus,  from Eqs. (\ref{tfe}), (\ref{M0 n=3}) and (\ref{M1 n=3})
	\begin{align}
	(f_1\spn{3}{4} f_2)(\vecz) & = \frac{\e^{|\vecz|^2/\hbar}}{\Q{4}(\vecz,\vecz)} \frac{\sqrt{2\hbar}}{2^2\pi\sqrt\pi |\vecz|}\left\{
	\int_0^{2\pi} \left[1+\hbar \left(\partial_{\vecu\overline\vecu} - \frac{1}{2|\vecz|^2} \left((\MB{4} \vecu_0)^t \partial_\vecu\right)^2\right. \right. \right.\nonumber\\[.2cm]
	& \hspace{1cm}\left.\left.\left.-\frac{1}{2|\vecz|^2} \vecu_0^t \partial_\vecu +\frac{1}{2^3|\vecz|^2}\right)\right] \beta_\vecz(\vecu_0,\overline\vecu_0)\mathrm d \psi + \mathrm O(\hbar^2) \right\}.\label{SB1}
	\end{align}

	Note that for $\vecv \in \mathbb C^4$ and $g(\theta) \in S^1$, the function $\beta_\vecz(\vecv,\overline \vecv)=f_1(\vecz,\vecv)f_2(\vecv,\vecz)$ satisfies 
	\begin{equation*}
	\beta_\vecz(\vecv,\overline\vecv)=\beta_\vecz(\T(g(\theta))\vecv,\overline{\T(g(\theta))\vecv})
	\end{equation*}
	because %where the last equality is a consequence that 
	$f_1,f_2 \in \A{3}{4}$ and Corollary \ref{extended berezin symbol invariant one variable}. Then, from the chain rule
	\begin{align}
	\vecv^t \partial_\vecv \beta_\vecz(\vecv,\overline \vecv)\Big|_{\substack{\vecv=\vecz\\ \theta=\psi}} & =  \vecu_0^t \partial_\vecu \beta_\vecz(\vecu_0,\overline\vecu_0), \nonumber\\
	\partial_{\vecv \overline \vecv} \beta_\vecz (\vecv,\overline\vecv)\Big|_{\substack{\vecv=\vecz\\ \theta=\psi}} & = \partial_{\vecu\overline \vecu} \beta_\vecz(\vecu_0,\overline \vecu_0), \label{derivate invariance}\\
	\left[\big(\vecv^t \MB{4} \partial_\vecv\big)^2 -\vecv^t \partial_\vecv\right] \beta_\vecz (\vecv,\overline \vecv)\Big|_{\substack{\vecv=\vecz\\ \theta=\psi}} & =\biggl(\sum_{j,k=1}^4 (\MB{4})_{j,j}(\MB{4})_{k,k}v_j v_k\partial_{v_j} \partial_{v_k}\biggl) \beta_\vecz(\vecv ,\overline{\vecv})\biggl|_{\substack{\vecv=\vecz\\ \theta=\psi}}\nonumber \\
	& = \left((\MB{4} \vecu_0)^t \partial_\vecu\right)^2 \beta_\vecz (\vecu_0,\overline \vecu_0). \nonumber
	%\vecz^t B \partial_\vecz \beta_\vecz(\vecz ,\overline{\vecz}) & = \overline \vecz^t B \partial_{\overline \vecz} \beta_\vecz(\vecz,\overline{\vecz}) \label{derivada invariantes angulo}
	\end{align}

	Thus, from Eqs. (\ref{SB1}), (\ref{derivate invariance}), Proposition \ref{kernel F_m = Bessel},
	the asymptotic  expression of the modified Bessel function $\mathbf I_\nu$ (see Eq. (\ref{asymptotic expression modified Bessel function})) and the relation $\sqrt 2 |\rho_{(3,4)} (\vect z)|=|\vect z|^2$ we have
	\begin{align*}
	(f_1\spn{3}{4} f_2)(\vecz) & = \left.\left[1+ \hbar\left(\partial_{\vecu\overline\vecu}-\frac{1}{2|\vecz|^2}(\vecu^t \MB{4} \partial_\vecu)^2 \right)\right]\beta_\vecz(\vecu,\overline{\vecu})\right|_{\vecu=\vecz}+\mathrm O(\hbar^2),
	\end{align*}
	
	Using the equality $\vect{u}^t \MB{4} \partial_{\vect{u}} f_2(\vect{u},\vect{z})=0$ (see Proposition \ref{extended Berezin symbol in kernel}) and that the extended Berezin symbol is holomorphic in the first factor and anti-holomorphic in the second  we finally obtain Eq. (\ref{asymptotic star product}).
	%From Proposition \ref{extended Berezin symbol in kernel}: $\vect{u}^t \MB{4} \partial_{\vect{u}} f_2(\vect{u},\vect{z})=0$. Therefore, 
	%	\begin{align*}
	%	(f_1 \spn{3}{4} f_2)(\vecz) &= f_1(\vecz) f_2(\vecz) + \hbar\sum_{\ell=1}^{4}\Big[\partial_{u_\ell} f_2(\vecu,\vecz)\partial_{\overline u_\ell} f_1(\vecz,\vecu)\Big]_{\vecu=\vecz} + \mathrm O(\hbar^2),
	%	\end{align*}
	%where we have used that the extended Berezin symbol is holomorphic in the first factor and anti-holomorphic in the second.\\

	Case $(n,m)=(5,8)$: This case is similar to the case when $(n,m)=(3,4)$ but the computations are more involved. First note that from Eqs. (\ref{expression star product}), (\ref{integral expresion Q_m}) and (\ref{du})
	\begin{align}
	\big(f_1 \spn{5}{8} f_2\big)(\vecz) & = \frac{\e^{|\vecz|^2/\hbar}}{\Q{8}(\vecz,\vecz)} \frac{1}{(\pi\hbar)^8}\int\limits_{\theta,\alpha,\gamma}  \int\limits_{\vecu \in \mathbb C^8} \int\limits_{\tilde \theta,\tilde \alpha,\tilde \gamma} \mathrm{exp} \left(\frac{\imath}{\hbar} \p_{\TAG} (\vecu, \overline \vecu,\TTAG)\right)%\nonumber\\
	%& \hspace{1cm}\times 
	\beta_\vecz(\vecu,\overline \vecu)  \mathrm dm(\TTAG)\mathrm d \vecu \mathrm d \overline \vecu \mathrm d m(\TAG), \label{SB n=5}
	\end{align}
	with $\beta_\vecz(\vecu,\overline \vecu)=f_1(\vecz,\vecu)f_2(\vecu,\vecz)$ and the phase function $\p_{\TAG}$ is given by 
	\begin{equation*}
	\p_{\TAG}(\vecu,\overline\vecu,\TTAG)=\imath\left(|\vecu|^2 + |\vecz|^2-\vect u \cdot \T(g(\TAG))\vect z - \overline{\vecu\cdot \T(g(\TTAG))\vecz}\right),
	\end{equation*}
	where we are considering the expression for $g(\TAG), g(\TTAG)\in \mathrm{SU}(2)$ given in Eq. (\ref{g parametrizada}). For $\TAG$ fixed, $\theta \in (0,\pi/2)$ and $\alpha,\gamma \in (-\pi,\pi)$, the equations $\partial_{u_\ell} \p_{\TAG}=0$ and $\partial_{\overline u_\ell} \p_{\TAG}=0$, $\ell=1,\ldots,8$ imply $\vect u= \T(g(\TAG))\vect z= \T(g(\TTAG))\vect z$, which in turn implies (see Eq. (\ref{z'=T(g)z m=8}))
	\begin{equation} \label{VgL=L}
	\mathbf V\left(g^{-1}(\TTAG)g(\TAG) \right)\mathbf L \vect z= \mathbf L \vect z,
	\end{equation}
	where $\mathbf V(g)$ and $\mathbf L$ are defined in Eq. (\ref{matrix L}). Since $\vect z\ne \mathbf 0$ then $\mathbf L \vect z\ne \mathbf 0$, therefore we obtain from Eq. (\ref{VgL=L}) that $g^{-1}(\TTAG)g(\TAG)$ must be the identity matrix, which in turn implies $(\TTAG)=(\TAG)$. 
	
	Even more, we claim that
	\begin{equation}\label{partial f evaluated cp n=5}
	\partial_{\tilde \vartheta} \p_{\TAG}\Big|_{cp}=0, \hspace{0.5cm} \mbox{for }\vartheta=\TAG,
	\end{equation}
	where $\partial_{\tilde \vartheta} \p_{\TAG}\Big|_{cp}$ denotes $\partial_{\tilde \vartheta} \p_{\TAG}$ evaluated at $\vect u= \T(g(\TAG))\vect z$ and $(\TTAG)=(\TAG)$. To show this fact, note that from the explicit expression of the function $\p_{\TAG}$ and Eq.  (\ref{z'=T(g)z m=8}), we have
	\begin{equation*}
	\partial_{\tilde \vartheta} \p_{\TAG} \Big|_{cp} = -\imath \mathbf V(g^{-1}(\TAG)) \partial_{\vartheta} \mathbf V(g(\TAG)) \mathbf L \vect{z} \cdot \mathbf L\vect{\vecz},\hspace{0.5cm} \vartheta= \TAG.
	\end{equation*}
	
	From the expression for $g=g(\TAG)$ given in Eq. (\ref{g parametrizada}) we find
	\begin{align*}
	\partial_{\tilde \theta} \p_{\TAG} \Big|_{cp} & = 2\Im\left(e^{\imath (\gamma-\alpha)}[z_7 \overline z_1 + z_5 \overline z_3 -z_6 \overline z_4-z_8\overline z_2]\right),\\
	\partial_{\tilde \alpha} \p_{\TAG} \Big|_{cp} & = \cos^2\theta[|z_1|^2+|z_2|^2+|z_3|^2+|z_4|^2-|z_5|^2-|z_6|^2-|z_7|^2-|z_8|^2]\\
	& \hspace{.5cm}+ 2\Re\left(\sin(\theta)\cos(\theta) e^{\imath(\gamma-\alpha)}[z_5\overline z_3-z_6 \overline z_4+z_7\overline z_1 -z_8\overline z_2 ]\right),\\
	\partial_{\tilde \gamma} \p_{\TAG} \Big|_{cp} & = \sin^2\theta[|z_5|^2+|z_6|^2+|z_7|^2+|z_8|^2-|z_1|^2-|z_2|^2-|z_3|^2-|z_4|^2]\\
	&\hspace{0.5cm}+2\Re\left(e^{\imath(\gamma-\alpha) }\sin(\theta)\cos(\theta)[z_5 \overline z_3-z_6\overline z_4+z_7 \overline z_1 -z_8\overline z_2]\right).
	\end{align*}
	
	Since $\vect{z} \in \tilde{\mathbb C}^8$ (see Eq. (\ref{condition (5,8)})), %satisfies the restriction (\ref{condition (5,8)}) 
	then Eqs. (\ref{partial f evaluated cp n=5}) hold. Thus, the critical point is  
	\begin{equation}\label{critical point n=5}
	\vect{u}_0= \T(g(\TAG))\vect z, \hspace{0.5cm} (\tilde \theta_0,\tilde \alpha_0,\tilde \beta_0)=(\TAG).
	\end{equation}
	
	One can also check that $\Im \p_{\TAG}\ge 0$ on the domain of $\p_{\TAG}$ and that $\p_{\TAG}=0$ at the critical point. Moreover, the Hessian matrix of $\p_{\TAG}$ with respect to the variables $\vect x,\vect y, \TTAG$ (with $\vect x= \Re \vect u$ and $\vect y=\Im \vect u$) evaluated at the critical point is equal to
	\begin{equation}\label{Hessian matrix (n,m)=(5,8)}
	\mathrm A:=%f''_\psi(\vecu_0,\overline\vecu_0,\theta_0)=
	\left(
	\begin{array}{c c c c c}
	2\imath \mathbf I_8 & \mathbf 0_8 &   -\imath \T_\theta(g) \vect z &  -\imath \T_\alpha(g) \vect z & -\imath \T_\beta(g)\vect z\\[0.1cm]
	\mathbf 0_8 & 2\imath \mathbf I_8 &    - \T_\theta(g) \vect z &  - \T_\alpha(g) \vect z & - \T_\beta(g)\vect z\\[0.1cm]
	-\imath \big(\T_\theta \vect z\big)^t & - \big(\T_\theta \vect z\big)^t & \imath |\vect z|^2 & 0 & 0\\[0.1cm]
	-\imath \big(\T_\alpha \vect z\big)^t & - \big(\T_\alpha \vect z\big)^t & 0 & \imath |\vect z|^2 \cos^2\theta & 0\\[0.1cm]
	-\imath \big(\T_\beta \vect z\big)^t & - \big(\T_\beta \vect z\big)^t & 0 & 0 & \imath |\vect z|^2\sin^2 \theta 
	\end{array}
	\right)
	\end{equation}
	where $\T_\vartheta(g) \vect z:=\partial_\vartheta \T(g(\TAG)) \vect z= \mathbf L^\dagger \partial_\vartheta \mathbf V(g(\TAG)) \mathbf L \vect z$, $\vartheta=\TAG$ (see Eq. (\ref{z'=T(g)z m=8})).
	
	From Eqs. (\ref{Hessian matrix (n,m)=(5,8)}) and (\ref{determinant}), with $\mathcal D$ the diagonal matrix
	\begin{equation}\label{matrix D n=5}
	\mathcal D=\mathrm{diag}(\imath |\vect z|^2, \imath |\vect z|^2 \cos^2\theta,  \imath |\vect z|^2 \sin^2\theta),
	\end{equation}  
	we have that $\det (A)=(2\imath)^{16} \imath^3 |\vect z|^6 \cos^2 \theta \sin^2\theta.$
	%\begin{equation*}
	%\det (A)=(2\imath)^{16} \imath^3 |\vect z|^6 \cos^2 \theta \sin^2\theta.
	%\end{equation*}
	
	Thus, from the stationary phase method
	\begin{align}
	\big(f_1 \spn{5}{8} f_2\big)(\vecz)  & = \frac{\e^{|\vecz|^2/\hbar}}{\Q{8}(\vect z,\vect z)}\frac{1}{(\pi\hbar)^8} \int\limits_{\TAG}
	\left(\frac{2^3(\pi \hbar)^{19}}{|\vect z|^6 \cos^2\theta\sin^2\theta} \right)^{\frac{1}{2}} \frac{1}{2\pi^2}
	%\nonumber\\
	%& \hspace{0.5cm}\times 
	\left[\sum_{\ell<k} \hbar^\ell \mathbf M_\ell \left(\cos \tilde{\theta} \sin\tilde{\theta}\beta_\vecz\right) \Big|_{cp}%(\vecu_0,\overline{\vecu}_0,\theta_0,\alpha_0,\beta_0)
	+ \mathrm O(\hbar^k)\right] \mathrm dm(\TAG) \label{tfe n=5}
	\end{align}
	where
	\begin{equation}\label{Mj n=5}
	\mathbf M_\ell\left(\cos\tilde{\theta} \sin\tilde{\theta} \beta_\vecz\right)\Big|_{cp}= \sum_{s=\ell}^{3\ell} \frac{\imath^{-\ell} 2^{-s}}{ s!(s-\ell)!} \left[ \left(-\mathrm A^{-1}\right) \hat D \cdot \hat D\right]^s \cos\tilde{\theta} \sin\tilde{\theta}\beta_\vecz (\p_{cp})^{s-\ell}\biggl|_{cp}
	\end{equation}
	with $\mathfrak g\big|_{cp}$ denoting the evaluation  at the critical point $\vect u_0, \theta_0,\alpha_0,\gamma_0$ of a given function $\mathfrak g$, 
	%with $(\cdot)\big|_{cp}$ denoting a function $(\cdot)$ evaluated at the critical point $\vect u_0, \theta_0,\alpha_0,\beta_0$, 
	\begin{align}
	\p_{cp} & =\p_{cp}(\vecu,\overline \vecu,\TTAG)\nonumber\\
	& = \imath \Big(\vect u_0-\T(g(\TTAG))\vect z+ (\tilde \theta - \theta)\T_\theta(g) \vect z 
	+ (\tilde \alpha-\alpha)\T_\alpha(g) \vect z %\nonumber\\
	%& \hspace{0.3cm}
	+(\tilde\gamma -\gamma)\T_\gamma(g)\vect z\Big) \cdot \vecu\nonumber\\
	&\hspace{0.5cm}-\frac{\imath}{2}|\vecz|^2\Big[(\tilde\theta-\theta)^2 +\cos^2\theta(\tilde{\alpha}-\alpha)^2%\nonumber\\
	%&\hspace{0.3cm}
	+\sin^2\theta(\tilde{\gamma}-\gamma)^2 
	\Big],\label{definition f|cp n=5}
	\end{align}
	and $\hat D$ the column vector of size 19
	%\begin{equation*}
	%\hat D=(\partial_{x_1},\ldots, \partial_{x_8}, \partial_{y_1}, \ldots,\partial_{y_8}, \partial_{\tilde{\theta}}, \partial_{\tilde{\alpha}}, \partial_{\tilde{\beta}})
	%\end{equation*}
	%
	whose entries are defined by: $(\hat D)_{j}=\partial_{x_j}$, $(\hat D)_{8+j}=\partial_{y_j}$, $j=1,\ldots,8$, $(\hat D)_{17}=\partial_{\tilde\theta}$, $(\hat D)_{18}=\partial_{\tilde\alpha}$ and $(\hat D)_{19}=\partial_{\tilde\gamma}$. The last Eq. (\ref{definition f|cp n=5}) is a consequence of equalities $|\vecu_0|=|\vecz|$ and $\T_\vartheta(g)\vect z \cdot \vect u_0=0$, $\vartheta=\TAG$.

	In a similar way as we did for the case $n=3$, we obtain  the inverse of the matrix $A$ (see Eq. (\ref{Hessian matrix (n,m)=(5,8)})) using Eq. (\ref{inverse}), with $\mathcal D$ the diagonal matrix given in Eq. (\ref{matrix D n=5}). By considering the explicit expression for the inverse matrix $A^{-1}$, using the equality $\T_{\vartheta} (g) \vect z \left(\T_{\vartheta}(g) \vect z\right)^t  \overline{ \mathbf p} \cdot {\mathbf p}=(\T_\vartheta(g)\vect z\cdot \mathbf p)^2$, $\vartheta=\TAG$, for all $\mathbf p\in \mathbb C^8$,  and easy linear algebra manipulations, we can show
	\begin{align} 
	-(\mathrm A^{-1}) \hat D\cdot \hat D & = 
	%\left(-\mathbf A^{-1}\right)_{jk} \hat D_j\hat D_k= 
	2\imath \partial_{\vecu\overline \vecu} + \frac{\imath}{|\vect z|^2}\left[\frac{1}{\cos^2\theta} \Big((\T_\alpha(g)\vect z )^t \partial_\vecu+ \partial_{\tilde\alpha} \Big)^2%\right.\nonumber\\
	%& \hspace{0.5cm}\left.
	+\frac{1}{\sin^2\theta} \left((\T_\gamma(g)\vect z)^t \partial_\vecu+ \partial_{\tilde\gamma}\right)^2+ \Big((\T_\theta(g)\vect z)^t \partial_\vecu+ \partial_{\tilde\theta}\Big)^2\right], \label{operator in M-5}
	\end{align}
	where $\partial_{\vecu\overline \vecu}=\sum_{j=1}^8\partial_{u_j\overline u_j}$  and $\partial_\vecu$ denote the Laplace operator and the column vector of size 8 whose $j$ entry is $\partial_{u_j}$ (i.e $(\partial_{\vecu})_j= \partial_{u_j}$) respectively.

	From Eqs. (\ref{Mj n=5}) and (\ref{operator in M-5})
	\begin{align}
	%\mathbf M_0\beta_\vecz (\vecu_0,\overline{\vecu_0},\theta_0) 
	\mathbf M_0  \left(\cos \tilde{\theta}\sin\tilde{\theta}\beta_\vecz \right) \Big|_{cp} 
	& =\cos\theta\sin\theta\beta_\vecz(\vecu_0,\overline\vecu_0), \label{M0 n=5}\\
	%\mathbf M_1 \beta_\vecz(\vecu_0,\overline{\vecu_0},\theta_0)
	\mathbf M_1 \left(\cos \tilde{\theta}\sin\tilde{\theta} \beta_\vecz\right)\Big|_{cp} & 	=\frac{ \cos\theta \sin\theta}{2|\vect z|^2} \bigg[2|\vect z|^2 \partial_{\vect u \overline{\vect u}} -(\T(g)\vect z)^t\partial_\vect u +  \left[(\T_\theta(g) \vect z)^t\partial_\vect u\right]^2 %\nonumber\\
	%& 
	+ \frac{\left[(\T_\alpha(g) \vect z)^t\partial_\vect u\right]^2}{\cos^2 \theta}+ \frac{\left[(\T_\gamma(g) \vect z)^t\partial_\vect u\right]^2}{\sin^2 \theta}\nonumber\\
	& \quad-  \frac{(\MB{8}\T_\alpha(g)\vect z)^t\partial_\vect u}{\imath \cos^2\theta} - \frac{(\MB{8}\T_\gamma(g)\vect z)^t\partial_\vect u}{\imath\sin^2\theta}% \nonumber\\
	%&
	    + \left(\frac{\cos\theta}{\sin\theta} -\frac{\sin\theta} {\cos\theta}\right)(\T_\theta(g)\vect z)^t \partial_\vect u- \frac{3}{4}
	\bigg] \beta_\vect z(\vect u_0,\overline{\vect u_0}).\label{M1 n=5}
	\end{align}
	Notice that the right side of Eqs. (\ref{M0 n=5}) and (\ref{M1 n=5}) still depend on the variables $\TAG$, because $\vecu_0 =\T(g(\TAG))\vecz$ (see Eq. (\ref{critical point n=5})).

	On the other hand, since $f_1,f_2 \in \B{5}{8}$, we obtain from Corollary \ref{extended berezin symbol invariant one variable} that, for $\vect v \in \mathbb C^8$ and $g(\TTAG)\in \mathrm{SU}(2)$, the function $\beta_\vect z(\vect v,\overline{\vect v})=f_1(\vect z,\vect v)f_2(\vect v,\vect z)$ satisfies
	%
	%for $\vect v \in \mathbb C^8$ and $g(\TTAG)\in \mathrm{SU}(2)$, the function $\beta_\vect z(\vect v,\overline{\vect v})=f_1(\vect z,\vect v)f_2(\vect v,\vect z)$ satisfies (see Corollary \ref{extended berezin symbol invariant one variable}):
	\begin{equation*}
	\beta_\vecz(\vecv,\overline\vecv)=\beta_\vecz(\T(g(\TTAG))\vecv,\overline{\T(g(\TTAG))\vecv}). %\quad \forall \vect v \in \mathbb C^8, \forall g(\TTAG)\in \mathrm{SU}(2) .
	\end{equation*}
	Then, from the chain rule
	\begin{align}
	\vecv^t \partial_\vecv \beta_\vecz(\vecv,\overline \vecv)\Big|_p
	%\Big|_{\substack{\vecv=\vecz\\ (\TTAG)=(\TAG)}} 
	& = \vecu_0^t \partial_\vecu \beta_\vecz(\vecu_0,\overline\vecu_0) %\nonumber\\
	%&
	 =  -\left(\T_{\theta \theta} (g)\vect z\right)^t \partial_\vect u \beta_\vect z(\vect u_0,\overline{\vect u_0})\nonumber\\
	\partial_{\vecv \overline \vecv} \beta_\vecz (\vecv,\overline\vecv) \Big|_p
	%\Big|_{\substack{\vecv=\vecz\\ (\TTAG)=(\TAG)}} 
	& = \partial_{\vecu\overline \vecu} \beta_\vecz(\vecu_0,\overline \vecu_0) \label{derivate invariance n=5}\\[0.2cm]
	\Big[-(\mathcal R_1)^2+(\mathcal R_2)^2  %\hspace{1.5cm}&  \\ 
	-(\mathcal R_3)^2 +2\vect v^t \partial_\vect v\Big]  \beta_\vect z(\vect v,\overline{\vect v}) \Big|_p
	%\Big|_{\substack{\vecv=\vecz\\ (\TTAG)=(\TAG)}} 
	& = \left( \left[(\T_\theta(g) \vect z)^t\partial_\vect u\right]^2 + \frac{\left[(\T_\alpha(g) \vect z)^t\partial_\vect u\right]^2}{\cos^2 \theta}%\right.\nonumber\\
	%&  \hspace{0.5cm}\left.
	+ \frac{\left[(\T_\gamma(g) \vect z)^t\partial_\vect u\right]^2}{\sin^2 \theta}\right) \beta_\vect z(\vect u_0,\overline{\vect u _0})\nonumber \\[0.2cm]
	-2\vect v^t \partial_\vect v \beta_\vect z(\vect v,\overline{\vect v}) \Big|_p
	%\Big|_{{\vecv=\vecz\\ (\TTAG)=(\TAG)}} 
	& = \left(-  \frac{(\MB{8}\T_\alpha(g)\vect z)^t\partial_\vect u}{\imath \cos^2\theta} - \frac{(\MB{8}\T_\gamma(g)\vect z)^t\partial_\vect u}{\imath\sin^2\theta}\right. \nonumber\\
	& \hspace{0.5cm}\left.+ \left(\frac{\cos\theta}{\sin\theta} -\frac{\sin\theta} {\cos\theta}\right)(\T_\theta(g)\vect z)^t \partial_\vect u\right) \beta_\vect z(\vect u_0,\overline{\vect u _0})\nonumber
	\end{align}
	where for a given function $\mathfrak g$ we denote by $\mathfrak g|_{p}$ the evaluation of $\mathfrak g$ at $\vect v=\vect z$, $(\TTAG)=(\TAG)$ and $\mathcal R_\ell$, $\ell=1,2,3$ are give in Eq. (\ref{Operators F8}).
	
	Thus, from Proposition \ref{kernel F_m = Bessel}, the asymptotic  expression of the modified Bessel function $\mathbf I_\nu$ (see Eq. (\ref{asymptotic expression modified Bessel function})), Eqs. (\ref{tfe n=5}), (\ref{M0 n=5}), (\ref{M1 n=5}), (\ref{derivate invariance n=5}),  and the relation $\sqrt 2 |\rho_{(5,8)} (\vect z)|=|\vect z|^2$ we have

	\begin{align*}
	(f_1*f_2)(\vect z)& = \left(1+ \hbar\left[\partial_{\vecu\overline\vecu} + \frac{1}{2|\vecz|^2} \Big(-(\mathcal R_1)^2+(\mathcal R_2)^2% \right.\right.\nonumber\\
	%& \hspace{0.5cm}\left.\left.
	-(\mathcal R_3)^2  \Big)\bigg]\right) \beta_\vecz(\vecu,\overline\vecu) \right|_{\vect u =\vect z} + \mathrm O(\hbar^2). %\label{M1 n=5}
	\end{align*}
	%where $\mathcal R_\ell$, $\ell=1,2,3$, are defined in Eq. (\ref{Operators F8}). 
	Then, from Proposition \ref{extended Berezin symbol in kernel} (applying to $f_2$) and using that the extended Berezin symbol is holomorphic in the first factor and anti-holomorphic in the second we obtain Eq. (\ref{asymptotic star product}).
	%\begin{align*}
	%(f_1 \spn{3}{4} f_2)(\vecz) &= f_1(\vecz) f_2(\vecz) + \hbar\sum_{\ell=1}^{8}\Big[\partial_{u_\ell} f_2(\vecu,\vecz)\partial_{\overline u_\ell} f_1(\vecz,\vecu)\Big]_{\vecu=\vecz} + \mathrm O(\hbar^2).
	%\end{align*}

	Finally, suppose we have the case $|\vect z|=0$. Then $\rho_{(n,m)}(\vect z)=0$ and therefore the coherent state $\Phi_{\rho_{(n,m)}(\vect z)}^{(\hbar)}$ is the constant function $1$ on the whole sphere with $L^2(\Sn{n})$ norm equal to one. Thus, in a similar way as we did for the case $n=2$ (see Eqs. (\ref{I(v) integral expression}) and (\ref{formula solution heat equation})) we conclude from the stationary phase method: 
	\begin{align*}
	f_1 \spn{n}{m} f_2(\vect z)& = \frac{1}{(\pi\hbar)^m}\int_{\mathbb C^m} f_1(\vect 0,\vect u)f_2(\vect u,\vect 0)\mathrm{exp}\left(- \frac{|\vect u|^2}{\hbar}\right) \mathrm d \vect u \mathrm d \overline{\vect u}\\
	& = f_1(\vect 0) f_2(\vect 0) + \hbar\sum_{\ell=1}^{m}\Big[\partial_{u_\ell} f_2(\vect u,\vect 0)\partial_{\overline u_\ell} f_1(\vect 0,\vect u)\Big]_{\vect u=\vect 0} + \mathrm O(\hbar^2).
	\end{align*}
\end{proof}

\section{Invariance of star-product $\spn{n}{m}$}\label{section invariance star product}

In this section we show that the star product $\spn{n}{m}$ defined on the algebra $\A{n}{m}$ is invariant under the group $\mathfrak F_m=\mathrm{SU}(2), \mathrm{SU}(2)\times \mathrm{SU}(2), \mathrm{SU}(4)$ for $m=2,4,8$ respectively, i.e. it satisfy	
\begin{equation*}
\big(f_1\circ \OL(g)\big) \spn{n}{m} \big(f_2 \circ \OL(g)\big)=(f_1 \spn{n}{m} f_2) \circ \OL(g),
%\OL(g)^*(f_1 \spn{n}{m} f_2)=\big(\OL(g)^*f_1 \big) \spn{n}{m} \big(\OL(g)^*f_2\big), 
\hspace{0.5cm} \forall g \in \mathfrak F_m,\; \forall f_1,f_2 \in \A{n}{m},
\end{equation*}
where $\OL(g)$ denotes the action of the group $\mathfrak F_m$ on $\mathbb C^m$ and it is defined by the following equation
%
%
%Donde consideramos que un producto estrella $*$ es $G$-invariante (G un grupo de Lie) si para cualquier $g \in G$ y todo $f_1,f_2 \in$ se tiene
%donde 
%Sea $G$ un grupo de Lie actuando sobre una variedad $M$, dicha acci\'on es $G \times M \to M$: $(g,p)\mapsto \sigma(g)p$. Un producto estrella se dice que es $G$-invariante, si para cualquier $g \in G$ y todo $\phi,\psi \in C^\infty(M)$  se tiene

\begin{equation}\label{action m=2,4,8}
\vect{z}'=\OL(g) \vect{z},\hspace{0.5cm} g \in \mathfrak F_m
\end{equation}
with $\OL(g)$ given by:

Case $m=2$: For $g \in \mathrm{SU}(2)$
\begin{equation*}
\OL(g)=g
\end{equation*}

Case $m=4$: For $g=(V,W)\in \mathrm{SU}(2)\times \mathrm{SU}(2)$
\begin{equation*}
\OL(g)=\begin{pmatrix}
V & \mathbf 0_2\\
\mathbf 0_2 & W 
\end{pmatrix},
\end{equation*}
where $\mathbf 0_\ell$ denotes the zero matrix of size $\ell$.

Case $m=8$: For $g\in \mathrm{SU}(4)$
\begin{equation} \label{action SU(4) C8}
\OL(g)=\begin{pmatrix}
U & \mathbf 0_4\\
\mathbf 0_4 & EUE 
\end{pmatrix},
\end{equation}
where $E$ is the following orthogonal matrix
\begin{equation*}
E=\begin{pmatrix}
0 & 0 & -1 & 0\\
0 & 0 & 0 & 1\\
-1 & 0 & 0 & 0\\
0 & 1 & 0 & 0
\end{pmatrix}.
\end{equation*}

In order to prove that the start product $\spn{n}{m}$ is $\mathfrak F_m$-invariant, we first give the explicit relation between the $\mathrm{SO}(n+1)$ action of rotations on the quadric $\Qua^n$, $n=2,3,5$ (whose elements are expressed in term of the map $\rho_{(n,m)}$) and the action of the group $\mathfrak F_m$ on $\mathbb C^m$, $m=2,4,8$, respectively.

\begin{proposition} \label{invariance rho}
	Let $g\in \mathfrak F_m$ and $\OL(g)$ the action defined above, $m=2,4,8$,
	% be a element in $\mathfrak F_m=\mathrm{SU}(2), \mathrm{SU}(2)\times \mathrm{SU}(2), \mathrm{SU}(4)$ for $m=2,4,8$ respectively. Let $\mathrm L_g$ be the action of the group $\mathfrak F_m$ defined by Eqs. (\ref{action SU(2) C2}), (\ref{action SU(2)xSU(2) C4}) and (\ref{action SU(4) C8}) for the cases $m=2,4,8$ respectively.
	then exist $R \in \mathrm{SO}(n+1)$, $n=2,3,5$ respectively, such that
	\begin{equation}\label{invariance rho equation}
	R\rho_{(n,m)}(\vecz)=\rho_{(n,m)}(\OL(g) \vecz), \quad \forall \vect z \in \mathbb C^m,
	\end{equation}
	where for $\vect{w}\in \mathbb C^{n+1}$, $R\vect{w}=R \Re(\vect{w})+\imath R\Im(\vect{w})$ with $R\Re(\vect w)$ and $R \Im(\vect w)$ denoting the usual action of $R$ on the real and imaginary part of $\vect{w}$, respectively (regarded as elements of $\mathbb R^{n+1}$).
\end{proposition}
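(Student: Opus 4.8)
The plan is to realize $R=R(g)$ as the matrix of a linear map acting on the \emph{components} of $\rho_{(n,m)}$, and then to pin it down inside $\mathrm{SO}(n+1)$ by showing that it simultaneously preserves a complex-bilinear and a Hermitian form on $\mathbb C^{n+1}$. First I would record the elementary but essential observation that each $\rho_\ell$ is a homogeneous quadratic form in $\vecz$ (see Eqs. (\ref{definition rho (2,2)}), (\ref{definition rho (3,4)}), (\ref{definition rho (5,8)})) and that $\OL(g)$ is linear, so every $\rho_\ell\circ\OL(g)$ is again a quadratic form. The pivotal claim is that $\rho_\ell\circ\OL(g)$ lies in the span $S=\mathrm{span}\{\rho_1,\dots,\rho_{n+1}\}$. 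For $(n,m)=(2,2)$ the three forms $\rho_1,\rho_2,\rho_3$ already span the whole $3$-dimensional space of quadratics in $(z_1,z_2)$; for $(n,m)=(3,4)$ the four $\rho_\ell$ span the full space of mixed bilinears $z_iz_j$ with $i\in\{1,2\}$, $j\in\{3,4\}$, which is visibly invariant because $\OL(g)$ is block-diagonal and mixes $z_1,z_2$ and $z_3,z_4$ separately. In both these cases invariance of $S$ is automatic, and the rule $\rho_\ell(\OL(g)\vecz)=\sum_k R(g)_{\ell k}\,\rho_k(\vecz)$ then \emph{defines} the matrix $R(g)$; this is precisely Eq. (\ref{invariance rho equation}), provided we verify $R(g)\in\mathrm{SO}(n+1)$.

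To see that $R(g)$ is real orthogonal I would prove that it preserves two forms. For the complex-bilinear form $B(\balpha,\boldsymbol\beta)=\sum_\ell\alpha_\ell\beta_\ell$, a short direct computation gives $\sum_\ell(\rho_{(2,2)}(\vecz))_\ell(\rho_{(2,2)}(\vecw))_\ell=-\tfrac12(z_1w_2-z_2w_1)^2$ and $\sum_\ell(\rho_{(3,4)}(\vecz))_\ell(\rho_{(3,4)}(\vecw))_\ell=2(z_1w_2-z_2w_1)(z_3w_4-z_4w_3)$; since each right-hand side is a product of $2\times2$ determinants it is invariant under the relevant $\mathrm{SU}(2)$ factors, so $B(\rho_{(n,m)}(\OL(g)\vecz),\rho_{(n,m)}(\OL(g)\vecw))=B(\rho_{(n,m)}(\vecz),\rho_{(n,m)}(\vecw))$. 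Because the $\rho_\ell$ are linearly independent, the vectors $\rho_{(n,m)}(\vecz)$ span $\mathbb C^{n+1}$, and writing $\rho_{(n,m)}(\OL(g)\vecz)=R(g)\rho_{(n,m)}(\vecz)$ this upgrades to $R(g)^tR(g)=I$. For the Hermitian form $\rho_{(n,m)}(\vecz)\cdot\rho_{(n,m)}(\vecw)$ I would invoke the relations already recorded in Eq. (\ref{alpha beta = z w}): each right-hand side is a product of Hermitian inner products of the coordinate blocks, each acted on by a unitary matrix, so this quantity is $\OL(g)$-invariant and hence $R(g)^\dagger R(g)=I$. The two identities $R^tR=I$ and $R^\dagger R=I$ together force $\overline R=R$, so $R(g)$ is a real orthogonal matrix. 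Finally, since $g\mapsto R(g)$ is continuous, $\mathfrak F_m$ is connected and $R(\mathrm{id})=I$, the continuous function $g\mapsto\det R(g)\in\{\pm1\}$ is constantly $1$, giving $R(g)\in\mathrm{SO}(n+1)$.

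The genuinely delicate case is $(n,m)=(5,8)$, and I expect it to be the main obstacle. Here the six forms $\rho_1,\dots,\rho_6$ span only a $6$-dimensional subspace of the $16$-dimensional space of mixed bilinears $z_iz_{j+4}$ ($1\le i,j\le4$), so invariance of $S$ under $\OL(g)$ of Eq. (\ref{action SU(4) C8}) is no longer automatic. The orthogonal involution $E$ appearing there is chosen precisely so that this $6$-plane (the image of $\Lambda^2\mathbb C^4$ under the exceptional isomorphism $\mathrm{SU}(4)\cong\mathrm{Spin}(6)$, with $\Lambda^2$ of the fundamental realizing the vector representation of $\mathrm{SO}(6)$) is stabilized and so that the Hermitian quantity $\varrho(\vecz,\vecw)$ of Eq. (\ref{def varrho(z,w)}) is $\OL(g)$-invariant. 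I would establish both facts infinitesimally: differentiating $\rho_\ell(\OL(e^{tX})\vecz)$ at $t=0$ as $X$ ranges over a basis of $\mathfrak{su}(4)$ produces quadratic forms that one checks lie in $S$, which both proves invariance of $S$ and simultaneously exhibits the induced Lie-algebra homomorphism $\mathfrak{su}(4)\to\mathfrak{so}(6)$; the corresponding unitarity statement is exactly the invariance of $\varrho$. This closure computation, together with the block-diagonal form of $\OL(g)$ forcing invariance of $\varrho$ through the interplay of $U$ and $EUE$, is the one place where a non-trivial verification (rather than a formal argument) is unavoidable. Once it is carried out, the bilinear/Hermitian argument of the preceding paragraph applies verbatim to conclude $R(g)\in\mathrm{SO}(6)$.
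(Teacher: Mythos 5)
Your architecture matches the paper's at the top level --- expand $\rho_\ell\circ\OL(g)$ in the basis $\{\rho_k\}$ to define $R(g)$, prove $R(g)$ is real orthogonal, then use continuity, connectedness of $\mathfrak F_m$ and $R(\mathrm{id})=\mathbf I$ to land in $\mathrm{SO}(n+1)$ --- but your mechanisms differ in two genuinely interesting ways. First, for $(n,m)=(2,2)$ and $(3,4)$ you get membership of $\rho_\ell\circ\OL(g)$ in $\mathrm{span}\{\rho_k\}$ for free from a dimension count (the $\rho_k$ exhaust all quadratics in $z_1,z_2$, resp.\ all mixed bilinears $z_iz_j$, $i\in\{1,2\}$, $j\in\{3,4\}$), whereas the paper proves it by hand: it writes $\rho_\ell$ in matrix form with matrices $\mathcal B_\ell$, $\mathcal C_\ell$ and checks that $U^t\mathcal B_\ell U$, resp.\ $V^t\mathcal C_\ell W$, stays in the real span (for $n=3$ via the observation $\mathcal C_\ell\in\mathrm{SU}(2)$, essentially a quaternion argument). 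Second, your route to realness-plus-orthogonality --- simultaneous preservation of the complex bilinear form (your determinant identities for $m=2,4$ are correct; I verified them) and of the Hermitian form of Eq.~(\ref{alpha beta = z w}), giving $R^tR=\mathbf I$ and $R^\dagger R=\mathbf I$, hence $\overline R=R$ --- is different from, and arguably cleaner than, the paper's device, which builds realness into the definition by expanding $U^t\mathcal A_\ell EUE$ in a real inner product on the real span $\mathcal V$ of the $\mathcal A_\ell$ (Eq.~(\ref{elements R})) and gets orthogonality because that conjugation map is an isometry of $\langle\cdot,\cdot\rangle_{\mathcal V}$.

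The gap is the case $(n,m)=(5,8)$, and it is not a small one: the two facts you defer --- that $\rho_\ell\circ\OL(g)$ stays in the $6$-dimensional span $S$, and that $\varrho$ of Eq.~(\ref{def varrho(z,w)}) is $\OL(g)$-invariant --- are precisely the mathematical content of the paper's proof, not a routine appendix to yours. The paper handles this by writing $\rho_\ell(\vecz)=(z_1,\dots,z_4)\,\mathcal A_\ell\,(z_5,\dots,z_8)^t$ (Eq.~(\ref{rho 5,8 expresed matrix A})), characterizing $\mathcal V$ explicitly as the set of matrices of the shape (\ref{elements in V}), and then verifying $U^t\mathcal A_\ell EUE\in\mathcal V$ by means of six quadratic identities among the entries of $U$ that follow from $\det U=1$ together with $U^\dagger=U^{-1}$ (cofactor relations). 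Your proposed alternative --- check that the infinitesimal action of a basis of $\mathfrak{su}(4)$ maps $S$ into $S$, then integrate using connectedness --- is a valid route (a connected group preserves a subspace iff its Lie algebra does), but it is a computation of comparable bulk (15 generators against 6 forms) that you have not performed, and you would still need the invariance of $\varrho$ as a separate verification before your two-forms trick can produce $\overline R=R$. As it stands, therefore, the proposal is a complete and correct proof for $n=2,3$, but for $n=5$ it is a well-motivated plan whose decisive step remains unexecuted.
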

\begin{proof}
	The main idea is the same for the three cases $n=2,3,5$. We describe in detail the most complicated case $n=5$. The cases $n=2,3$ follow in a similar way and we will only sketch the structure of the proof.
	
	The case $n=5$: let us write the map $\rho_{(5,8)}(\vecz)= \big(\rho_1(\vecz),\rho_{2}(\vecz),\ldots,\rho_6(\vecz)\big)$	
	in matrix form
	\begin{equation}\label{rho 5,8 expresed matrix A}
	\rho_\ell(\vecz)=(z_1,z_2,z_3,z_4) \mathcal A_\ell \begin{pmatrix}
	z_5\\ z_6\\z_7\\z_8
	\end{pmatrix}\;,\hspace{0.5cm}
	\begin{array}{l}
	\vect{z}=(z_1,\ldots,z_8),\\[0.1cm]
	\ell=1,\ldots,6,
	\end{array}
	\end{equation}
	where the matrices $\mathcal A_\ell$ are defined as follows:

	\begin{equation*}
	\begin{array}{lll}
	\mathcal A_1= \begin{pmatrix}
	0 & -\imath & 0 & 0\\
	\imath & 0 & 0 & 0\\
	0 & 0 & 0 & \imath\\
	0 & 0 & -\imath & 0\\
	\end{pmatrix},
	&
	\mathcal A_2= \begin{pmatrix}
	0 & 1 & 0 & 0\\
	1 & 0 & 0 & 0\\
	0 & 0 & 0 & 1\\
	0 & 0 & 1 & 0\\
	\end{pmatrix},
%	\\[1cm]
	&
	\mathcal A_3= \begin{pmatrix}
	-1 & 0 & 0 & 0\\
	0 & 1 & 0 & 0\\
	0 & 0 & 1 & 0\\
	0 & 0 & 0 & -1\\
	\end{pmatrix},
	\\[1cm]
	\mathcal A_4= \begin{pmatrix}
	-\imath & 0 & 0 & 0\\
	0 & -\imath & 0 & 0\\
	0 & 0 & \imath & 0\\
	0 & 0 & 0 & \imath\\
	\end{pmatrix},\qquad
%	\\[1cm]
	&
	\mathcal A_5= \begin{pmatrix}
	0 & 0 & 0 & -\imath\\
	0 & 0 & -\imath & 0\\
	0 & -\imath & 0 & 0\\
	-\imath & 0 & 0 & 0\\
	\end{pmatrix}, \qquad
	&
	\mathcal A_6= \begin{pmatrix}
	0 & 0 & 0 & 1\\
	0 & 0 & 1 & 0\\
	0 & -1 & 0 & 0\\
	-1 & 0 & 0 & 0\\
	\end{pmatrix}.
	\end{array}
	\end{equation*}
	\begin{comment}
	Let us consider the following action of $\mathrm{SU}(4)$ on $\mathbb C^8$: given  $U \in \mathrm{SU}(4)$ we define
	\begin{equation}\label{action SU(4) C8}
	\begin{pmatrix}
	z_1'\\[0.1cm]z_2'\\[0.1cm]z_3'\\[0.1cm]z_4'
	\end{pmatrix} = U 
	\begin{pmatrix}
	z_1\\[0.1cm]z_2\\[0.1cm]z_3\\[0.1cm]z_4
	\end{pmatrix}, \hspace{0.5cm}
	\begin{pmatrix}
	z_5'\\[0.1cm]z_6'\\[0.1cm]z_7'\\[0.1cm]z_8'
	\end{pmatrix} = EUE 
	\begin{pmatrix}
	z_5\\[0.1cm]z_6\\[0.1cm]z_7\\[0.1cm]z_8
	\end{pmatrix},%\hspace{0.5cm} U \in \mathrm{SU}(4),
	\end{equation} 
	where the matrix $E$ is defined by
	\begin{equation*}
	E=\begin{pmatrix}
	0 & 0 & -1 & 0\\
	0 & 0 & 0 & 1\\
	-1 & 0 & 0 & 0\\
	0 & 1 & 0 & 0
	\end{pmatrix}.
	\end{equation*}
	\end{comment}
	
	From Eqs. (\ref{action m=2,4,8}), (\ref{action SU(4) C8}) and (\ref{rho 5,8 expresed matrix A})
	\begin{equation*}
	\rho_\ell(\vect{z}')=(z_1,z_2,z_3,z_4) U^t \mathcal A_\ell E U E \begin{pmatrix}
	z_5\\z_6\\z_7\\z_8
	\end{pmatrix}.
	\end{equation*}
	%	where $U^t$ denotes the transpose matrix of $U$.
	
	Let $\mathcal V$ denote the real vector space generated by the matrices $\mathcal A_\ell$, $\ell=1,\ldots,6$. The vector space $\mathcal V$ is the set of complex matrices of the form
	\begin{equation} \label{elements in V}
	\begin{pmatrix}
	-\vartheta & \overline\mu & 0 & \overline \gamma\\[0.1cm]
	\mu & \overline \vartheta & \overline{\gamma} & 0\\[0.1cm]
	0 & -\gamma & \vartheta & \mu\\[0.1cm]
	-\gamma & 0 & \overline \mu & -\overline{\vartheta}
	\end{pmatrix}, \quad \mbox{with $\mu,\vartheta,\gamma \in \mathbb C$.}
	\end{equation}

	We now claim that, for $U \in \mathrm{SU}(4)$, the matrix $U^t \mathcal A_\ell EUE$ is in the vector space $\mathcal V$. To prove this fact, let us denote by $U_{jk}$ the matrix elements of $U$. Since  $\mathrm{det}(U)=1$ and $U^\dagger=U^{-1}$ then by considering the explicit expression for the inverse matrix $U^{-1}$ we find that the matrix elements of $U$ must satisfy the following relations:
	\begin{align*}
	\overline U_{11}\overline U_{23}-\overline U_{21}\overline U_{13} & = U_{42}U_{34}- U_{32}U_{44}\;, &%\\
	\overline U_{11}\overline U_{43}-\overline U_{41}\overline U_{13} & = U_{32}U_{24}- U_{22}U_{34}\;,\\
	\overline U_{11}\overline U_{33}-\overline U_{31}\overline U_{13} & = U_{22}U_{44}- U_{42}U_{24}\;,&%\\
	\overline U_{21}\overline U_{43}-\overline U_{41}\overline U_{23} & = U_{12}U_{34}- U_{32}U_{14}\;,\\
	\overline U_{31}\overline U_{23}-\overline U_{21}\overline U_{33} & = U_{12}U_{44}- U_{42}U_{14}\;,&%\\
	\overline U_{41}\overline U_{33}-\overline U_{31}\overline U_{43} & = U_{12}U_{24}- U_{22}U_{14}\;.
	\end{align*}
	
	Then by using the last equalities and computing the explicit expression for the matrix $U^t \mathcal A_\ell EUE$, we find that $U^t \mathcal A_\ell EUE$ has the form indicated in Eq. (\ref{elements in V}).
	
	The vector spaces $\mathcal V$ is endowed with the real valued inner product 
	\begin{equation*}
	\langle A,B \rangle_\mathcal V = \frac{1}{2}\left(\mathrm{trace}(A B^\dagger)+ \mathrm{trace}(B A^\dagger) \right)\;.
	\end{equation*}
	
	The set of matrices $\left\{\frac{1}{2} \mathcal A_\ell \;|\; \ell=1,\ldots,6 \right\}$ gives an orthonormal basis for the space $\mathcal V$. Thus $U^t \mathcal A_\ell EUE$ must be the following linear combination of the basis elements (summation over repeated indexes):
	\begin{equation*}
	U^t \mathcal A_\ell EUE = \frac{1}{4} \left\langle U^t \mathcal A_\ell EUE, \mathcal A_k\right\rangle_\mathcal V \mathcal A_k\;, \hspace{0.5cm} \ell=1,\ldots, 6.
	\end{equation*}
	
	Therefore we have
	\begin{equation*}
	\rho_\ell(\vect{z}')= (z_1,z_2,z_3,z_4) R_{\ell k} \mathcal A_k\begin{pmatrix}
	z_5\\z_6\\z_7\\z_8,
	\end{pmatrix} = R_{\ell k} \rho_k(\vecz),
	\end{equation*}
	with the real numbers $R_{\ell k}$, $\ell,k=1,\ldots,6$, given by
	\begin{equation}\label{elements R}
	R_{\ell k}= \frac{1}{4} \left\langle U^t \mathcal A_\ell EUE, \mathcal A_k\right\rangle_\mathcal V\;.
	\end{equation}
	
	Thus to each element $U \in \mathrm{SU}(4)$ we associated a $6 \times 6$ matrix $R$ whose matrix elements  are $R_{\ell k}$ given by Eq. (\ref{elements R}). Since the matrix $R$  satisfies the relation $R_{j k} R_{s k}= \delta_{js}$ (thus $R$ must be an orthogonal matrix) then we have a continuous map $U \mapsto R$ from $\mathrm{SU}(4) $ into $\mathrm O(6)$. Since $\mathrm{SU}(4)$ is a connected manifold and the identity element of $\mathrm{SU}(4)$ goes to the identity element of $\mathrm O(6)$ (see Eq. (\ref{elements R})), then the image of $\mathrm{SU}(4)$ under the map we are considering must be the connected component of the identity matrix in $\mathrm O(6)$. Thus $R\in \mathrm{SO}(6)$.
	
	For the cases $n=2,3$, let us write the maps $\rho_{(n,m)}(\vect{z})=(\rho_1(\vect{z}),\ldots,\rho_{n+1}(\vect{z}))$ in the following matrix form
	
	For the case $n=2$:
	\begin{equation*}
	\rho_\ell(\vecz)=(z_1,z_2) \mathcal B_\ell \begin{pmatrix}
	z_1\\z_2
	\end{pmatrix}\hspace{0.5cm}
	\begin{array}{l}
	\vect{z}=(z_1,z_2),\\[0.1cm]
	\ell=1,2,3,
	\end{array}
	\end{equation*}
	where the matrices $\mathcal B_\ell$ are defined as follows:
	\begin{equation*}
	\mathcal B_1=\frac{1}{2}\begin{pmatrix}
	-1 & 0\\
	0 & 1
	\end{pmatrix},\hspace{0.5cm}
	\mathcal B_2=\frac{1}{2}\begin{pmatrix}
	\imath & 0\\
	0 & \imath
	\end{pmatrix},\hspace{0.5cm}
	\mathcal B_3=\frac{1}{2}\begin{pmatrix}
	0 & 1\\
	1 & 0
	\end{pmatrix}.
	\end{equation*}
	
	For the case $n=3$:
	\begin{equation*}
	\rho_\ell(\vecz)=(z_1,z_2) \mathcal C_\ell \begin{pmatrix}
	z_3\\z_4
	\end{pmatrix}\hspace{0.5cm}
	\begin{array}{l}
	\vect{z}=(z_1,z_2,z_3,z_4),\\[0.1cm]
	\ell=1,2,3,4,
	\end{array}
	\end{equation*}
	where the matrices $\mathcal C_\ell$ are defined as follows:
	\begin{equation*}
	\mathcal C_1=\begin{pmatrix}
	1 & 0\\
	0 & 1
	\end{pmatrix},\hspace{0.5cm}
	\mathcal C_2=\begin{pmatrix}
	\imath & 0\\
	0 & -\imath
	\end{pmatrix},\hspace{0.5cm}
	\mathcal C_3=\begin{pmatrix}
	0 & \imath\\
	\imath & 0
	\end{pmatrix},\hspace{0.5cm}
	\mathcal C_4=\begin{pmatrix}
	0 & 1\\
	-1 & 0
	\end{pmatrix}.
	\end{equation*}
	
	In a similar way we did for the case $n=5$ we need to prove, for the case $n=2$, that the matrix $U^t \mathcal B_\ell U$ (for all $U \in \mathrm{SU}(2)$) is in the vector space generated by matrices $\mathcal B_\ell$, $\ell=1,2,3$,  which is not difficult to prove using the parametrization of $\mathrm{SU}(2)$ indicated in Eq. (\ref{g parametrizada}). And for the case $n=3$ we need to prove that the matrix $V^t \mathcal C_\ell W$ (for all $V,W \in \mathrm{SU}(2)$) is in the vector space generated by matrices $\mathcal C_\ell$, $\ell=1,\ldots,4$, which is a consequence of
	$\mathcal C_\ell \in \mathrm{SU}(2)$, $\ell=1,\ldots,4$.
	
	The rest of the proof is similar to the case $n=5$, therefore we will omit it.
\end{proof}

\begin{theorem}
	The star product $\spn{n}{m}$ defined on the algebra $\A{n}{m}$  is $\mathfrak F_m$-invariant in the sense that
	\begin{equation*}
	\big(f_1\circ \OL(g)\big) \spn{n}{m} \big(f_2 \circ \OL(g)\big)=(f_1 \spn{n}{m} f_2) \circ \OL(g),
	%	\OL(g)^*(f_1 \spn{n}{m} f_2)=\big(\OL(g)^*f_1 \big) \spn{n}{m} \big(\OL(g)^*f_2\big), 
	\hspace{0.5cm} g \in \mathfrak F_m,\; \forall f_1,f_2 \in \A{n}{m},
	\end{equation*}
	where $\OL(g)$ is given by the action of the group  $\mathfrak F_m=\mathrm{SU}(2), \mathrm{SU}(2)\times \mathrm{SU}(2), \mathrm{SU}(4)$ on $\mathbb C^2, \mathbb C^4$, $\mathbb C^8$ respectively, indicated in Eq. (\ref{action m=2,4,8}).
\end{theorem}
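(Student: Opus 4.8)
The plan is to lift the rotation furnished by Proposition~\ref{invariance rho} to a unitary operator on $L^2(\Sn{n})$ and then argue entirely at the level of operators, exploiting that the star product is nothing but the diagonal Berezin symbol of the operator product.

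First, fix $g\in\mathfrak F_m$ and let $R\in\mathrm{SO}(n+1)$ be the rotation given by Proposition~\ref{invariance rho}, so that $\rho_{(n,m)}(\OL(g)\vecz)=R\,\rho_{(n,m)}(\vecz)$ for every $\vecz\in\mathbb C^m$. Since $R$ is real and orthogonal, for $\vect x\in\Sn{n}$ and $\boldsymbol\alpha\in\mathbb C^{n+1}$ one has $\vect x\cdot R\boldsymbol\alpha=(R^{-1}\vect x)\cdot\boldsymbol\alpha$; substituting this into the series definition of the coherent states (Eq.~(\ref{coherent state})) shows that the pullback $U_R\psi(\vect x):=\psi(R^{-1}\vect x)$, which is unitary on $L^2(\Sn{n})$ because $\mathrm d\Omega$ is rotation invariant, intertwines the coherent states:
\begin{equation*}
\Phi_{\rho_{(n,m)}(\OL(g)\vecz)}^{(\hbar)}=U_R\,\csz,\qquad\vecz\in\mathbb C^m.
\end{equation*}
This is the only analytic input, and it is the step I expect to demand the most care: one must track the conjugation convention in $\vect x\cdot\boldsymbol\alpha=\sum_s x_s\overline{\alpha_s}$ and use the reality of $R$ to transfer it onto $\vect x$.

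With this intertwiner in hand, the transformation law for the extended symbol is immediate. For $A\in\mathbf B(L^2(\Sn{n}))$, inserting $U_R$ into Eq.~(\ref{extended covariant symbol}) and using unitarity of $U_R$ in the denominator gives
\begin{equation*}
\B{n}{m}(A)\big(\OL(g)\vecw,\OL(g)\vecz\big)=\B{n}{m}\big(U_R^{*}AU_R\big)(\vecw,\vecz).
\end{equation*}
Because $\OL(g)$ is complex linear, the right-hand side is precisely the analytic continuation of the diagonal function $f\circ\OL(g)$ with $f=\B{n}{m}(A)$; in particular $f\circ\OL(g)=\B{n}{m}(U_R^{*}AU_R)\in\A{n}{m}$, so the algebra is closed under the $\mathfrak F_m$-action and the statement is well posed.

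Finally, I would record that the star product is the diagonal Berezin symbol of the operator product: setting $\vecw=\vecz$ in the composition rule~(\ref{Eq 3}) and using $\Q{m}(\vecz,\vecu)=\overline{\Q{m}(\vecu,\vecz)}$ gives $f_1\spn{n}{m}f_2=\B{n}{m}(AB)$ whenever $f_1=\B{n}{m}(A)$ and $f_2=\B{n}{m}(B)$. Combining the three facts and using $U_RU_R^{*}=\mathrm{Id}$,
\begin{align*}
\big(f_1\circ\OL(g)\big)\spn{n}{m}\big(f_2\circ\OL(g)\big)
&=\B{n}{m}(U_R^{*}AU_R)\spn{n}{m}\B{n}{m}(U_R^{*}BU_R)\\
&=\B{n}{m}\big(U_R^{*}AU_R\,U_R^{*}BU_R\big)
=\B{n}{m}\big(U_R^{*}(AB)U_R\big)\\
&=\B{n}{m}(AB)\circ\OL(g)=(f_1\spn{n}{m}f_2)\circ\OL(g),
\end{align*}
which is the asserted $\mathfrak F_m$-invariance. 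The three cases $n=2,3,5$ are handled uniformly once the corresponding $R$ is supplied by Proposition~\ref{invariance rho}.
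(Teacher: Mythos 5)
Your proof is correct, and its first half coincides with the paper's own argument: both invoke Proposition \ref{invariance rho} to produce the rotation $R\in\mathrm{SO}(n+1)$, both introduce the pullback unitary on $L^2(\Sn{n})$ ($U_R$ in your notation, $\TR_R$ in the paper's), and both establish the intertwining relation $\Phi_{\rho_{(n,m)}(\OL(g)\vecz)}^{(\hbar)}=U_R\,\csz$ together with the resulting conjugation law $\B{n}{m}(A)\big(\OL(g)\vecw,\OL(g)\vecz\big)=\B{n}{m}(U_R^{*}AU_R)(\vecw,\vecz)$. Where you genuinely diverge is the concluding step. The paper stays at the level of the defining integral (\ref{expression star product}): it additionally proves the kernel identity $\Q{m}(\vect u,\OL(g)\vect z)=\Q{m}(\OL(g^{-1})\vect u,\vect z)$ and then performs the change of variables $\vect u=\OL(g^{-1})\vect w$, using the invariance of the Gaussian measure $\du{m}$ under the unitary map $\OL(g^{-1})$. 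You instead identify the star product with the diagonal Berezin symbol of the operator product, $f_1\spn{n}{m}f_2=\B{n}{m}(AB)$ (the diagonal case $\vecw=\vecz$ of Eq. (\ref{Eq 3}), already available from Proposition \ref{proposition rules for Berezin}), and finish by pure operator algebra, $U_R^{*}AU_R\,U_R^{*}BU_R=U_R^{*}(AB)U_R$. Your route buys a shorter and more conceptual ending: it avoids both the kernel-invariance computation and the measure-invariance argument, and it makes explicit that $\A{n}{m}$ is closed under the $\mathfrak F_m$-action (via $f\circ\OL(g)=\B{n}{m}(U_R^{*}AU_R)$), a point the paper leaves implicit; the one subtlety you rightly flag is that complex linearity of $\OL(g)$ plus uniqueness of the holomorphic/anti-holomorphic extension is needed to know that the continuation of $f\circ\OL(g)$ entering (\ref{expression star product}) is indeed $\B{n}{m}(A)(\OL(g)\cdot,\OL(g)\cdot)$. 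The paper's route, by contrast, is self-contained with respect to the integral formula and does not need to re-invoke the composition rule. Both arguments are complete.
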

\begin{proof}

	Given $\tilde R \in \mathrm{SO}(n+1)$, define the operator $\TR_{\tilde R}:L^2(\Sn{n}) \to L^2(\Sn{n})$ by $\TR_{\tilde R} \psi(\vect x)=\psi(\tilde R^{-1}\vect x)$.  Let $A$ be a bounded linear operator with domain in $L^2(\Sn{n})$, $n=2,3,5$ and $g\in \mathfrak F_m$, $m=2,4,8$ respectively. Let $R \in \mathrm{SO}(n+1)$ be the orthogonal matrix mentioned in the hypothesis of Proposition \ref{invariance rho} associated to $g$.
	
	From the expression for the coherent states (see Eq. (\ref{coherent state}))  and Eq. (\ref{invariance rho equation}) we obtain
	\begin{equation}\label{cs=TR cs}
	\cs{\OL(g)\vect z}=\TR_R \cs{\vect z},\quad\forall \vect z \in \mathbb C^m.
	\end{equation}
	%	with $R \in \mathrm{SO}(n+1)$ the orthogonal matrix mentioned in the hypothesis of Proposition \ref{invariance rho}. 
	
	Then from Eqs. (\ref{definition Berezin transform}) and (\ref{cs=TR cs}) we have
	\begin{align}
	\B{n}{m}(A)\circ \OL(g)(\vect z)
	%\OL(g)^*\B{n}{m}(A)(\vect z)
	= \B{n}{m}(\TR_{R^{-1}}A \TR_R)(\vect z),\quad\forall \vect z \in \mathbb C^m. \label{invariance Berezin symbol}
	\end{align}
	Thus $\B{n}{m}(A) \circ \OL(g) $ %$\OL(g)^*\B{n}{m}(A)$ 
	can be expressed as the Berezin symbol of the bounded linear operator $\TR_{R^{-1}}A \TR_R$ and from Eqs. (\ref{extended covariant symbol}) and (\ref{cs=TR cs}) its extended Berezin symbol is
	%Moreover, the extended Berezin symbol of the bounded linear operator $\T_{R^{-1}}A\T_R$ is
	\begin{equation} \label{Berezin symbol extended T_R A T_R}
	\B{n}{m}(\TR_{R^{-1}}A \TR_R)(\vect w,\vect z)=\B{n}{m}(A)(\OL(g)\vect w, \OL(g) \vect z).
	\end{equation}

	Moreover, from Eqs. (\ref{inner product cs = reproducing kernel}), %(\ref{reproducing property Qm}), (\ref{btsn csw = rkf}), 
	(\ref{cs=TR cs}) and the unitary of $\bts{n}$
	\begin{align} \label{invariance reproducing kernel}
	\Q{m}\left(\vect u, \OL(g)\vect z\right) = \left \langle \T_R \csz,\cs{u} \right\rangle_{\Sn{n}}=\Q{m}(\OL(g^{-1})\vect u,\vect z),%,\quad\forall \vect z,\vect u \in \mathbb C^m.
	\end{align}
	where we have used that the orthogonal matrix associated to $ g^{-1} $ which makes Eq. (\ref{invariance rho equation}) holds is $R^{-1}$.

	Thus we conclude from Eqs. (\ref{expression star product}) and (\ref{invariance reproducing kernel}) %for $f_1,f_2 \in \A{n}{m}$ and $\vect z \in \mathbb C^m$
	\begin{align*}
	(f_1 \spn{n}{m} f_2)(\OL(g)\vect z) & =\int\limits_{\mathbb C^m}f_1(\OL(g)\vect z,\vect w) f_2(\vect w, \OL(g)\vect z) \frac{|\Q{m}(\vect w, \OL(g)\vect z)|^2}{\Q{m}(\OL(g)\vect z,\OL(g)\vect z)} \du{m}(\vect w)\\
	& =\int\limits_{\mathbb C^m}f_1\left(\OL(g)\vect z,\OL(g)\vect u\right) f_2\left(\OL(g)\vect u,\OL(g)\vect z\right) \frac{|\Q{m}(\vect u, \vect z)|^2}{\Q{m}(\vect z,\vect z)} \du{m}(\vect u)\\
	& = \big(f_1 \circ \OL(g)\big)\spn{n}{m} \big(f_2 \circ \OL(g)\big)(\vect z)
	\end{align*}
	where we have made the change of variables $\vect u = \OL(g^{-1})\vect w$, used the invariance of the Gaussian measure $\du{m}$ with respect to the action of $\OL(g^{-1})$ and Eqs. (\ref{invariance Berezin symbol}) and (\ref{Berezin symbol extended T_R A T_R}).
\end{proof}

\appendix 
\section{The coherent states are eigenfunctions of an operator}\label{appendix eigenfunctions cs}

In this appendix, we define operators on $L^2(\Sn{n})$ for which the coherent states are their eigenfunctions.
Below we will consider differential operators and we will not go into details about their domains we are considering since they are not of main relevance for our work.

For $\balpha=\rho_{(n,m)}(\vect z)$, $\vect z \in \mathbb C^m$, let us first note that the coherent states $\csalpha$ can be written in terms of the functions $\mathrm{exp(\vect x \cdot \balpha/\hbar)}$. Let 
\begin{equation}\label{appendix 1 eq 7}
\mathbf M= \sqrt {\frac{2}{n-1}} \;\Delta_{\Sn{n}}^{1/4},
\end{equation}
defined by functional calculus, where $\Delta_{\Sn{n}}$ denotes the self-adjoint and normalized spherical Laplacian on the $n$-sphere with discrete spectrum given by $\{(k + \frac{n-1}{2})^2 \;|\; k = 0, 1, 2,\ldots\}$. Since $\Delta_{\Sn{n}}^{-1/4}$ is a continuos operator and 
\begin{equation}\label{appendix 1 eq 2}
\Delta_{\Sn{n}}\left( \vect x \cdot \balpha\right)^\ell = \left( \ell + \frac{n-1}{2}\right)^2 \left (\vect x \cdot \balpha \right)^\ell, \;\;\vect x \in \Sn{n},\; \ell=0,1,\ldots,
\end{equation}
then $\mathbf M^{-1} \csalpha (\vect x)=\mathrm{exp}(\vect x \cdot \balpha/\hbar)$, and therefore 
\begin{equation}\label{appendix 1 eq 6}
\csalpha(\vect x)= \mathbf M \;\mathrm{exp}(\vect x \cdot \balpha/\hbar)
\end{equation}
with $\mathrm{exp}(\vect x \cdot \balpha/\hbar)$ regarded as a function on $\Sn{n}$.

Let $\mathbf E_k$, $k=1,\ldots, n+1$ be the following operators with domain in $L^2(S^n)$
\begin{equation*}
\mathbf E_k=\sum_{j=1}^{n+1}-\imath \tilde{ \vect x}_j\left[-\imath \hbar \tilde{ \mathbf L}_{kj}\right] + \tilde {\vect x}_k \mathbf N\;,
\end{equation*}
where $\tilde{\vect x}_j$ are regarded as multiplicative operators by the coordinates $x_j$ and acting on $L^2(\Sn{n})$, 
\begin{equation*}
\mathbf N= \hbar\left(\sqrt{ \Delta_{\Sn{n}}}-\frac{n-1}{2}\right),\;\;\;\mbox{and}\;\;\;\tilde{\mathbf L}_{kj}= \left(y_k\frac{\partial}{\partial y_j} - y_j \frac{\partial}{\partial y_k}\right)\biggl |_{\Sn{n}}
\end{equation*}
with $y_j$, $j=1,\ldots n+1$ denoting the cartesian coordinates for $\mathbb R^{n+1}$ and for a given operator $A$ on $L^2(\mathbb R^{n+1})$ we denote by $A|_{\Sn{n}}$ the restriction of the operator $A$ to the $n$-sphere. Note that the Laplacian $\Delta_{\Sn{n}}$ is equal to $\tilde{\mathbf L}^2+(\frac{n-1}{2})^2$ with $\tilde{\mathbf L}^2=\sum_{k<j} \tilde{\mathbf L}^2_{kj}$.

Let us consider the angular momentum operators
\begin{equation*}
{ \mathbf L}_{kj}=y_k \frac{\partial}{\partial y_j} - y_j \frac{\partial}{\partial y_k}\;,\;\;\; k,j=1,\ldots,n+1
\end{equation*}
with domain in $L^2(\mathbb R^{n+1})$. 

Introduce spherical coordinates $(r,\boldsymbol \theta)=(r,\theta_1,\ldots,\theta_n)$ for $\mathbb R^{n+1}$ and denote these with the equation $\mathbf y =T(r,\boldsymbol{ \theta})$, where $r$ is the radial coordinate $r=\sqrt{y_1^2+\cdots+y_{n+1}^2}$.

From the chain rule we obtain that for a given smooth function $f(\mathbf y)$,
\begin{align*}
\mathbf L_{kj} f(\vect y) & = \mathbf L_{kj} (f \circ T \circ T^{-1})(\vect y) \\
& = \sum_{q=1}^n \left[y_k \frac{\partial\theta_q}{\partial y_j} - y_j\frac{\partial \theta_q}{\partial y_k}\right] \frac{\partial f \circ T}{\partial \theta_q}(T^{-1}(\vect y)).
\end{align*}
Therefore evaluating both sides of last equation at $\vect y= T(r,\boldsymbol{ \theta})$ and then at $r=1$ we obtain
\begin{equation}\label{appendix 1 eq 1}
\mathbf L_{kj} f(T(1,\boldsymbol{ \theta}))=\tilde{\mathbf L}_{kj} f(T(1,\boldsymbol{ \theta})).
\end{equation}

Note that Eq. (\ref{appendix 1 eq 1}) in particular implies
\begin{equation}\label{appendix 1 eq 4}
\tilde{\mathbf L}_{kj} \mathrm{exp}(\vect x \cdot \balpha/\hbar)= \frac{1}{\hbar} \left(x_k \overline \alpha_j-x_j \overline \alpha_k\right ) \mathrm{exp}(\vect x \cdot \balpha/\hbar)
\end{equation}

On the other hand, from (\ref{appendix 1 eq 2}) and the continuity of the operator $\Delta_{\Sn{n}}^{-1/2}$ we have
\begin{equation*}
\Delta_{\Sn{n}}^{-1/2}\sum_{k=0}^\infty \frac{1}{k !}\left(k+\frac{n-1}{2}\right) \left(\frac{\vect x \cdot \balpha}{\hbar}\right)^k= \mathrm{exp} \left(\frac{\vect x \cdot \balpha}{\hbar}\right),
\end{equation*}
which in turn implies
\begin{equation} \label{appendix 1 eq 3}
\mathbf N \mathrm{exp} \left(\frac{\vect x \cdot \balpha}{\hbar}\right) =(\vect x \cdot \balpha) \mathrm{exp} \left(\frac{\vect x \cdot \balpha}{\hbar}\right).
\end{equation}
Hence, from Eqs. (\ref{appendix 1 eq 4}) and (\ref{appendix 1 eq 3})
\begin{equation}\label{appendix 1 eq 5}
\mathbf E_k \mathrm{exp}\left(\frac{\vect x \cdot \balpha}{\hbar}\right)=\overline \alpha_k \mathrm{exp}\left(\frac{\vect x \cdot \balpha}{\hbar}\right).
\end{equation}

Let $\mathbf A_k$, $k=1,\ldots, n+1$, be the operator \begin{equation}\label{operator A_k}
\mathbf A_k=\mathbf M \mathbf E_k \mathbf M^{-1}.
\end{equation} 
From Eqs. (\ref{appendix 1 eq 6}) and (\ref{appendix 1 eq 5}) we finally conclude that the coherent states $\csalpha$ are eigenfunctions of the operators $\mathbf A_k$, $k=1,\ldots,n+1,$ with eigenvalue $\overline \alpha_k$.
\begin{remark}
It is well known that $L^2(\Sn{n})= \bigoplus \mathcal V_\ell$, where $\mathcal V_\ell$, $\ell=0,1,\ldots,$ is 
the space of spherical harmonics of degree $\ell$ defined as the vector space of restrictions to the $n$-sphere of harmonic homogeneous polynomials of degree $\ell$ defined initially on the ambient space $\mathbb R^{n+1}$ (here by harmonic we mean a function in the kernel of the usual Laplacian operator on $\mathbb R^{n+1}$). 

We assert that the operators $\mathbf A_k$ maps $\mathcal V_\ell$ on $\mathcal V_{\ell-1}$, $\ell>0$, and $\mathbf A_k \varphi=0$ for $\varphi \in \mathcal V_0$. Since the functions in
%Which can be proven if we use that the functions in
 $\{(\balpha \cdot \vect x)^\ell\;|\; \balpha \in \Qua^n\}\subset L^2(\Sn{n})$ are an overcomplete set on $\mathcal V_\ell$ (see \cite{H-81}) we just need to prove the assertion in the functions $(\balpha \cdot \vect x )^\ell$ for $\balpha \in \Qua^n$. From Eqs. (\ref{appendix 1 eq 7}), (\ref{appendix 1 eq 2}) and (\ref{appendix 1 eq 1})
 \begin{align*}
 \mathbf A_k (\balpha\cdot \vect x)^\ell & = \mathbf M \mathbf E_k \left(\frac{2}{n-1}\ell + 1\right)^{-\frac{1}{2}} (\balpha\cdot \vect x)^\ell\\
 & = \hbar \ell \left(\frac{2\ell+n-1}{n-1} \right)^{-\frac{1}{2}} \mathbf M \alpha_k (\balpha\cdot \vect x)^{\ell-1}\\
 & =\hbar \ell \left(\frac{2(\ell-1)+n-1}{2 \ell+n-1}\right)^{\frac{1}{2}} \alpha_k (\balpha\cdot \vect x)^{\ell-1}.
 \end{align*}
  
 Thus the operators $\mathbf A_k, k= 1, \ldots, n + 1$, can be regarded as annihilation operators.
\end{remark}

\section{Necessary results: Stationary phase method and partitioned matrix}\label{Appendix A}

In this appendix, we mention two known results  needed to prove Theorem \ref{theorem star product}. We start with the stationary phase method which, for our purpose, we apply in the following way  (see Theorem 7.7.5 of Ref. \cite{H-90} for details):

\begin{theorem}[Stationary Phase Method]\label{theorem stationary phase method}
	Let $\p$ and $\beta$ be two smooth complex valued functions defined on $\mathbb R^d$ with $d$ any positive integer. Assume that $\beta$ has compact support, $\Im(\p)\ge 0$, $\p$ has a critical point at $\vecx_0$ and $\p'(\vecx)\ne 0$ for $\vecx\ne\vecx_0$ (where $\p'$ denotes the gradient of $\p$). Moreover, assume that $\Im(\p(\vecx_0)) = 0$ and $\mathrm{det} (\p''(\vecx_0))\ne0$ (where $\p''(\vecx_0)$ denotes the Hessian matrix of $\p$ evaluated at the critical point $\vecx_0$). Then
	\begin{equation}\label{stationary phase method}
	\int e^{\imath \p(\vecx)/\hbar} \beta(\vecx) \mathrm d\vecx  = e^{\imath \p(\vecx_0)/\hbar} \left[\mathrm{det}\left(\frac{\p''(\vecx_0)}{2\pi\imath\hbar}\right)\right]^{-\frac{1}{2}}\left[ \sum_{\ell<k} \hbar^\ell \mathbf M_\ell \beta(\vecx_0) + \mathrm O(\hbar^{k})\right]
	\end{equation}
	%\begin{align}\label{stationary phase method}
	%\int_{\mathbb R^d}\mathrm{exp}\left(\frac{\imath}{\hbar}f(\vecx)\right) \beta(\vecx) \mathrm d\vecx & = \mathrm{exp} \left(\frac{\imath}{\hbar}f(\vecx_0)\right) \left[\mathrm{det}\left(\frac{f''(\vecx_0)}{2\pi\imath\hbar}\right)\right]^{-\frac{1}{2}}\nonumber \\[0.2cm]
	%& \hspace{1cm}\times \left [\sum_{\ell<k} \hbar^\ell \mathrm M_\ell \beta(\vecx_0) + \mathrm O(\hbar^{k+1})\right]
	%\end{align}
	where $\mathrm d\vecx$ denotes the Lebesgue measure on $\mathbb R^d$ and
	\begin{equation*} 
	\mathbf M_\ell \beta(\vecx_0)= \left.\sum_{s=\ell}^{3\ell} \frac{\imath^{-\ell} 2^{-s}}{ s!(s-\ell)!} %\left\langle -(\p''(\vecx_0))^{-1}\hat D,\hat D\right\rangle
	\left[ \left(-(\p''(\vecx_0))^{-1}\right) \hat D \cdot \hat D\right]^s \beta (\p_{\vecx_0})^{s-\ell}\right|_{\vecx=\vecx_0}\;,
	\end{equation*}
	with $(\p''(\vecx_0))^{-1}$ the inverse of the matrix $\p''(\vecx_0)$, $\hat D$ the column vector of size $d$ whose $j$ entry is $\partial/\partial_{x_j}$, and
	\begin{equation*}
	\p_{\vecx_0}(\vecx)=\p(\vecx)-\p(\vecx_0)-\frac{1}{2}\p''(\vecx_0)(\vecx-\vecx_0)\cdot (\vecx-\vecx_0).
	\end{equation*}
\end{theorem}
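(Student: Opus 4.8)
This statement is the method of stationary phase for a complex-valued phase, due to Hörmander (Theorem 7.7.5 of \cite{H-90}), and the plan is to recast his argument in the present notation. The strategy has three movements: localize the integral to an arbitrarily small neighbourhood of the critical point $\vecx_0$; on that neighbourhood reduce the phase to its quadratic part plus a remainder $\p_{\vecx_0}$ and apply an exact Gaussian-integral identity; and finally expand the resulting differential operator in powers of $\hbar$ to recognize the coefficients $\mathbf M_\ell\beta(\vecx_0)$.

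First I would localize. Fix a cutoff $\chi$ equal to $1$ near $\vecx_0$ and split $\beta=\chi\beta+(1-\chi)\beta$. On the support of $(1-\chi)\beta$ one has $\p'\ne0$, so the operator $L=\frac{\hbar}{\imath|\p'|^2}\,\overline{\p'}\cdot\nabla$ reproduces $e^{\imath\p/\hbar}$, and repeated integration by parts is licit because $\beta$ has compact support; since $|e^{\imath\p/\hbar}|=e^{-\Im(\p)/\hbar}\le1$ by the hypothesis $\Im(\p)\ge0$, each integration by parts gains a factor of $\hbar$ without compensating growth, so this piece is $\mathrm O(\hbar^\infty)$. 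Hence only the contribution of $\chi\beta$, supported near $\vecx_0$, matters.

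Next, on the support of $\chi$, write $\p(\vecx)=\p(\vecx_0)+\tfrac12\,\p''(\vecx_0)(\vecx-\vecx_0)\cdot(\vecx-\vecx_0)+\p_{\vecx_0}(\vecx)$ with $\p_{\vecx_0}$ vanishing to third order at $\vecx_0$, and set $Q=\p''(\vecx_0)$. Because $\vecx_0$ minimizes $\Im(\p)$ we have $\Im Q\ge0$, and together with $\det Q\ne0$ this fixes the branch of $(\det(Q/2\pi\imath\hbar))^{-1/2}$. The key tool is the exact identity $\int e^{\frac{\imath}{2\hbar}Q\vecy\cdot\vecy}P(\vecy)\,d\vecy=\big(\det(Q/2\pi\imath\hbar)\big)^{-1/2}\big[\exp(\tfrac{\imath\hbar}{2}Q^{-1}\hat D\cdot\hat D)P\big](0)$, which I would prove for real positive-definite $Q$ by completing the square and differentiating the Gaussian generating function, then extend to our $Q$ by analytic continuation in the matrix entries using $\Im Q\ge0$. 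Applying it with $P=e^{\imath\p_{\vecx_0}/\hbar}\beta$ reduces the localized integral to the evaluation at $\vecx_0$ of a differential operator acting on $e^{\imath\p_{\vecx_0}/\hbar}\beta$.

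The heart of the argument, and the main obstacle, is the bookkeeping that turns this into the asserted series. Expanding $\exp(\tfrac{\imath\hbar}{2}Q^{-1}\hat D\cdot\hat D)=\sum_s\tfrac{(-\imath\hbar/2)^s}{s!}[-(\p'')^{-1}\hat D\cdot\hat D]^s$ and $e^{\imath\p_{\vecx_0}/\hbar}=\sum_j\tfrac{1}{j!}(\imath\p_{\vecx_0}/\hbar)^j$, the total power of $\hbar$ in a term is $s-j$; collecting $\hbar^\ell$ forces $j=s-\ell$, and the surviving coefficient is precisely $\frac{\imath^{-\ell}2^{-s}}{s!(s-\ell)!}[-(\p'')^{-1}\hat D\cdot\hat D]^s\beta(\p_{\vecx_0})^{s-\ell}$ evaluated at $\vecx_0$, i.e. the summand of $\mathbf M_\ell\beta(\vecx_0)$. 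The range $\ell\le s\le3\ell$ comes from a degree count: $j=s-\ell\ge0$ gives the lower bound, while $(\p_{\vecx_0})^{s-\ell}$ vanishes to order $3(s-\ell)$ and is hit by $2s$ derivatives, so evaluation at $\vecx_0$ is nonzero only when $3(s-\ell)\le2s$, giving $s\le3\ell$. What remains, and is the genuinely technical part, is to justify this formal manipulation rigorously: truncating both exponential series at finite order, bounding the tail by the standard Gaussian-moment estimates, and controlling the Taylor remainder of $\p_{\vecx_0}$, all uniformly thanks to the smoothness and compact support of $\beta$; this produces the $\mathrm O(\hbar^k)$ error and completes the proof.
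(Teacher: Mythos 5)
The paper contains no proof of this statement to compare against: Theorem \ref{theorem stationary phase method} is quoted in the appendix as a known result, with the citation of Theorem 7.7.5 of Ref. \cite{H-90} standing in for its proof. What you have written is a faithful reconstruction of H\"ormander's own argument, and the parts of it that can be checked directly do check out. The localization step is sound: $|\mathrm e^{\imath\p/\hbar}|=\mathrm e^{-\Im\p/\hbar}\le 1$ by hypothesis, and $|\p'|$ is bounded below on the compact support of $(1-\chi)\beta$, so each integration by parts with your operator $L$ gains a clean factor of $\hbar$. Your coefficient bookkeeping is exact: collecting $\hbar^\ell$ pairs $(-\imath\hbar/2)^s/s!$ with $(\imath/\hbar)^{s-\ell}/(s-\ell)!$, and since $(-1)^s\imath^{2s-\ell}=\imath^{-\ell}$ this reproduces the coefficient $\imath^{-\ell}2^{-s}/(s!(s-\ell)!)$ in $\mathbf M_\ell$; the degree count ($(\p_{\vecx_0})^{s-\ell}$ vanishes to order $3(s-\ell)$ and is hit by $2s$ derivatives) correctly yields the truncation $\ell\le s\le 3\ell$.

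Two technical points in your sketch should be corrected, though neither is fatal. First, the anchor case for the exact Gaussian identity cannot be real positive-definite $Q$: for real $Q$ the integral $\int \mathrm e^{\imath Q\vecy\cdot\vecy/2\hbar}P(\vecy)\,\mathrm d\vecy$ with polynomial $P$ is only conditionally convergent (an oscillatory integral), so completing the square there proves nothing directly. The honest starting point is $Q$ with positive-definite \emph{imaginary} part (e.g. $Q=\imath A$ with $A$ real positive definite), where the integral is a genuine absolutely convergent Gaussian; one then continues analytically in the entries of $Q$ to the closed region $\Im Q\ge 0$, $\det Q\ne 0$, and it is this continuation that fixes the branch of $\left[\det\left(Q/2\pi\imath\hbar\right)\right]^{-1/2}$. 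Second, be aware that $\Im\p_{\vecx_0}=\Im\p-\tfrac{1}{2}\Im Q(\vecx-\vecx_0)\cdot(\vecx-\vecx_0)$ need \emph{not} be nonnegative, so $\mathrm e^{\imath\p_{\vecx_0}/\hbar}$ is not bounded by $1$ on its own and term-wise integration of its Taylor series is not innocent; the tail and remainder estimates must pair each Taylor remainder with the Gaussian weight $\mathrm e^{-\Im Q(\cdot)\cdot(\cdot)/2\hbar}$ (only the product is under control, exactly as in H\"ormander's Lemma 7.7.3 and the error bound of Theorem 7.7.5). Since you explicitly deferred the truncation estimates as the technical heart of the proof, I would call these two points flagged-but-misstated rather than missing ideas; with them repaired your outline is the standard, correct proof.
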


%The second result we mentioned is the Schurt complement of a matrix block. We use this result in Theorem \ref{theorem star product} to avoid laborious calculations by obtaining both the determinant and the inverse matrix.

The second result we mentioned is used in Theorem \ref{theorem star product} to avoid laborious calculations by obtaining both the determinant and the inverse matrix of a partitioned matrix.

\begin{lemma}
	Let $\mathcal A, \mathcal B, \mathcal C, \mathcal D$ be matrices of $\ell \times \ell$, $\ell \times s$, $s \times \ell$, $s \times s$ respectively, and $\mathcal A$ invertible. Then 
	\begin{align}
	\det \begin{pmatrix}
	\mathcal A & \mathcal B \\
	\mathcal C & \mathcal D \\
	\end{pmatrix} %& =\det\left[
	& =\det(\mathcal A) \det(\mathcal D-\mathcal {CA}^{-1}\mathcal B)\label{determinant}.
	\end{align}
	Furthermore, if we assume that  $\mathcal D$ is nonsingular, $\mathcal B^t \mathcal B=0$ and $\mathcal A=\mathbf I_\ell$ where $\mathbf I_\ell$ denotes the  identity matrix of size $\ell$. Then,
	\begin{align}
	\begin{pmatrix}
	\mathcal A & \mathcal B \\
	\mathcal B^t & \mathcal D \\
	\end{pmatrix}^{-1} = \begin{pmatrix}
	\mathbf I_\ell + \mathcal{BD}^{-1}\mathcal B^t & -\mathcal{BD}^{-1} \\
	-\mathcal D^{-1}\mathcal B^t & \mathcal D^{-1} \\
	\end{pmatrix}^{-1}\label{inverse}.
	\end{align}
\end{lemma}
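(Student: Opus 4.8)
The plan is to obtain both assertions from a single block factorization of the partitioned matrix, namely the Schur-complement (block $LU$) decomposition. Since $\mathcal A$ is invertible, I would begin by recording the identity
\begin{equation*}
\begin{pmatrix} \mathcal A & \mathcal B \\ \mathcal C & \mathcal D \end{pmatrix}
= \begin{pmatrix} \mathbf I_\ell & \mathbf 0 \\ \mathcal C \mathcal A^{-1} & \mathbf I_s \end{pmatrix}
\begin{pmatrix} \mathcal A & \mathcal B \\ \mathbf 0 & \mathcal D - \mathcal C \mathcal A^{-1} \mathcal B \end{pmatrix},
\end{equation*}
which is checked by multiplying the two factors block by block (the lower-left block reproduces $\mathcal C \mathcal A^{-1}\mathcal A = \mathcal C$ and the lower-right block reproduces $\mathcal C \mathcal A^{-1}\mathcal B + \mathcal D - \mathcal C \mathcal A^{-1}\mathcal B = \mathcal D$). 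The first factor is block lower-triangular with identity diagonal blocks and the second is block upper-triangular. Taking determinants and invoking the standard fact that the determinant of a block-triangular matrix is the product of the determinants of its diagonal blocks gives Eq. (\ref{determinant}) immediately, since the first factor has determinant $1$ and the second has determinant $\det(\mathcal A)\det(\mathcal D - \mathcal C \mathcal A^{-1}\mathcal B)$.

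For the inverse I would specialize to $\mathcal A = \mathbf I_\ell$ and $\mathcal C = \mathcal B^t$. Under these choices the Schur complement collapses, $\mathcal D - \mathcal C \mathcal A^{-1}\mathcal B = \mathcal D - \mathcal B^t \mathcal B = \mathcal D$, which is where the hypothesis $\mathcal B^t\mathcal B=\mathbf 0$ enters for the first time; the assumed invertibility of $\mathcal D$ then makes the candidate
\begin{equation*}
N = \begin{pmatrix} \mathbf I_\ell + \mathcal B \mathcal D^{-1} \mathcal B^t & -\mathcal B \mathcal D^{-1} \\ -\mathcal D^{-1}\mathcal B^t & \mathcal D^{-1} \end{pmatrix}
\end{equation*}
well defined. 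Writing $M$ for the matrix on the left-hand side of Eq. (\ref{inverse}), I would verify $N = M^{-1}$ directly by computing the four blocks of $MN$. The top-left block is $\mathbf I_\ell + \mathcal B\mathcal D^{-1}\mathcal B^t - \mathcal B\mathcal D^{-1}\mathcal B^t = \mathbf I_\ell$ and the top-right block is $-\mathcal B\mathcal D^{-1} + \mathcal B\mathcal D^{-1} = \mathbf 0$; the bottom-left block equals $\mathcal B^t + \mathcal B^t\mathcal B\mathcal D^{-1}\mathcal B^t - \mathcal B^t$ and the bottom-right block equals $-\mathcal B^t\mathcal B\mathcal D^{-1} + \mathbf I_s$, both of which reduce to $\mathbf 0$ and $\mathbf I_s$ upon invoking $\mathcal B^t\mathcal B=\mathbf 0$ a second time. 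Hence $MN = \mathbf I_{\ell+s}$, and since $M$ and $N$ are square matrices this already forces $N = M^{-1}$, which is Eq. (\ref{inverse}).

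I do not expect a genuine obstacle here: the argument is entirely routine linear algebra. The only points deserving attention are (i) keeping track of the fact that the two special hypotheses $\mathcal A = \mathbf I_\ell$ and $\mathcal B^t\mathcal B = \mathbf 0$ are exactly what is needed to annihilate the cross terms $\mathcal B^t\mathcal B\mathcal D^{-1}\mathcal B^t$ and $\mathcal B^t\mathcal B\mathcal D^{-1}$ in the lower row of $MN$, and (ii) reading the right-hand side of Eq. (\ref{inverse}) correctly: the outer inverse symbol there is a misprint, since the displayed matrix $N$ is itself $M^{-1}$ rather than a matrix whose inverse equals $M^{-1}$. I would state the result in the corrected form $N = M^{-1}$ and rely only on $MN = \mathbf I_{\ell+s}$, which suffices in the finite-dimensional square case.
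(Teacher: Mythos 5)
Your proof is correct, and it differs from the paper's mainly in being self-contained: the paper proves Eq.~(\ref{determinant}) simply by citing Proposition 2.8.3 of Ref.~\cite{B-09}, and obtains Eq.~(\ref{inverse}) by quoting the general analytical inversion formula for the partitioned matrix (Proposition 2.8.7 of Ref.~\cite{B-09}), with $\mathcal H=\mathcal A-\mathcal B\mathcal D^{-1}\mathcal B^t$, then specializing to $\mathcal A=\mathbf I_\ell$ and simplifying via $(\mathbf I_\ell-\mathcal B\mathcal D^{-1}\mathcal B^t)^{-1}=\mathbf I_\ell+\mathcal B\mathcal D^{-1}\mathcal B^t$, an identity that holds precisely because $\mathcal B^t\mathcal B=\mathbf 0$ makes $\mathcal B\mathcal D^{-1}\mathcal B^t$ square-zero, and which also guarantees the nonsingularity of $\mathcal H$ required by the cited proposition. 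You instead derive the determinant identity from the block $LU$ (Schur-complement) factorization — which is in effect the standard proof of the cited fact — and confirm the inverse by direct block multiplication $MN=\mathbf I_{\ell+s}$, invoking $\mathcal B^t\mathcal B=\mathbf 0$ exactly where the lower row of $MN$ requires it; since $M$ and $N$ are square over a field, a one-sided inverse is two-sided, so this suffices. What the paper's route buys is that the formula for $N$ is derived rather than verified (no need to have the answer in hand) ; what your route buys is independence from the reference and a precise accounting of where each hypothesis enters. Your reading of Eq.~(\ref{inverse}) is also right: the outer ${}^{-1}$ on the right-hand side is a misprint (taken literally it would assert $M=N$, false in general), and the paper's own proof displays the conclusion in the un-inverted form $M^{-1}=N$, confirming your correction.
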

\begin{proof}
	The first part of this Lemma (Eq. (\ref{determinant})) is a direct consequence from proposition 2.8.3 of Ref. \cite{B-09}.
	To prove the second part of this Lemma (Eq. \ref{inverse}), we used  the following analytical inversion formula for a partitioned matrix, provided that $\mathcal A-\mathcal {BD}^{-1}\mathcal B^t$ is nonsingular (see proposition 2.8.7 of Ref. \cite{B-09})
	%	To prove the second part of this Lemma, assume that $\mathcal A-\mathcal {BC}^{-1}\mathcal B^t$ is nonsingular. Using  the following analytical inversion formula for a partitioned matrix (see Fact 2.17.3 of Ref. \cite{B-09})
	%	Let us use that a block matrix can also be inverted using the following analytical inversion formula:
	\begin{align*}
	& \begin{pmatrix}
	\mathcal A & \mathcal B \\
	\mathcal B^t & \mathcal D \\
	\end{pmatrix}^{-1} =%\\
	%& \hspace{0.8cm}
	\begin{pmatrix}
	\left(\mathcal A-\mathcal B\mathcal D^{-1}\mathcal B^t\right)^{-1} & -\left(\mathcal A-\mathcal B \mathcal D^{-1}\mathcal B^t \right)^{-1}\mathcal B \mathcal D^{-1}\\
	-\mathcal D^{-1}\mathcal B^t \left(\mathcal A-\mathcal {BD}^{-1}\mathcal B^t\right)^{-1}
	&\mathcal{D}^{-1}+\mathcal D^{-1}\mathcal B^t \left(\mathcal A-\mathcal {BD}^{-1}\mathcal B^t\right)^{-1} \mathcal B \mathcal D^{-1}
	\end{pmatrix},
	\end{align*}
	and the fact that $(\mathbf I_\ell - \mathcal{BD}^{-1}\mathcal B^t)^{-1}= \mathbf I_\ell + \mathcal{BD}^{-1}\mathcal B^t$.%, we obtain  Eq. (\ref{inverse}).
\end{proof}

%\section*{References}

\newpage

Let $\balpha=\rho_{(n,m)}(\vect z)$ with $\vect z \in \mathbb C^m$. We first proof that the coherent states are eigenfunctions of an operator $\mathbf D=(D_1,\ldots, D_{n+1})$ with eigenvalue $\overline \balpha=(\overline \alpha_1,\ldots,\overline \alpha_{n+1})$, i.e. $B_\ell \csalpha=\overline \alpha_\ell \csalpha$, $\ell=1,\ldots,n+1$.

\end{document}